\documentclass[letterpaper]{article}

\usepackage[margin=1.2in]{geometry}
\usepackage[utf8]{inputenc}
\usepackage[T1]{fontenc}
\usepackage{lmodern}
\usepackage{microtype}
\usepackage{pifont}
\usepackage[toc,page]{appendix}
\usepackage[ruled,vlined, linesnumbered]{algorithm2e}

\makeatletter
\renewcommand{\nllabel}[1]
 {{\let\@currentlabel\algocf@currentlabel
  \let\@currentcounter\algocf@currentcounter
  \label{#1}}}

\renewcommand{\algocf@nl@sethref}[1]{\renewcommand{\theHAlgoLine}{\thealgocfproc.#1}\hyper@refstepcounter{AlgoLine}\gdef\algocf@currentlabel{#1}\gdef\algocf@currentcounter{AlgoLine}}\makeatother

\usepackage[english]{babel}

\usepackage[l2tabu, orthodox, experimental]{nag}

\usepackage{hyperref}
\makeatletter
\hypersetup{
	unicode    = true,
	pdftitle   = "\@title",
	pdfauthor  = "\@author"
}
\makeatother

\usepackage{csquotes}

\usepackage{bm}

\usepackage{enumitem}
\setenumerate[1]{label=\textnormal{(\alph*)}}
\setlist[enumerate,2]{label=\textnormal{(\roman*)}, ref=\theenumi.(\roman*)}

\usepackage{xspace}

\usepackage{makeidx}

\usepackage[dvipsnames]{xcolor}

\makeatletter
\g@addto@macro\bfseries{\boldmath}
\makeatother

\usepackage{todonotes}

\usepackage{amsmath, amsfonts, amssymb, amsthm, mathtools, nicefrac, cancel,thmtools}

\usepackage[capitalize, nameinlink]{cleveref}

\newcommand*{\dotcup}{\mathbin{\dot\cup}}

\newcommand{\symdiff}{\mathrel{\triangle}}

\newcommand{\N}{\mathbb{N}}

\DeclareMathOperator{\image}{im}

\DeclareMathOperator{\Aut}{Aut}

\DeclareMathOperator{\Iso}{Iso}
\DeclareMathOperator{\Sym}{Sym}
\DeclareMathOperator{\Stab}{Stab}
\DeclareMathOperator{\Trans}{Trans}

\DeclareMathOperator{\VCdim}{VC-dim}
\DeclareMathOperator{\pVCdim}{pVC-dim}

\newcommand*{\simeqq}{\cong}

\theoremstyle{definition}
\newtheorem{definition}{Definition}[section]

\theoremstyle{plain}
\newtheorem{lemma}[definition]{Lemma}

\newtheorem{theorem}[definition]{Theorem}	
\newtheorem{corollary}[definition]{Corollary}

\newtheorem{claim}{Claim}[definition]

\Crefname{claim}{Claim}{Claims}
\Crefname{observation}{Observation}{Observations}

\newenvironment{claimproof}{\proof[\(\ulcorner\)\nopunct]}{\endproof}

\newcommand{\ceq}{\succeq}

\makeatletter
\let\@vareps\varepsilon
\let\varepsilon\epsilon
\let\epsilon\@vareps
\makeatother

\renewcommand{\phi}{\varphi}

\usepackage{tikz, pgf}
\usetikzlibrary{calc, shapes,arrows.meta,shapes.geometric,positioning,fit, backgrounds, decorations.pathmorphing, decorations.pathreplacing}

\tikzstyle{vertex}=[circle,draw=black,fill=black, minimum size=2mm, inner sep=0mm]

\title{Isomorphism of tournaments with bounded VC dimension}
\author{Simon Raßmann \hspace{1cm} Pascal Schweitzer\\[4mm]TU Darmstadt}
\date{}
\makeatletter
\hypersetup{pdftitle="\@title"}
\makeatother

\definecolor{lightblue}{rgb}{0.5,0.5,1.0}
\definecolor{darkred}{rgb}{0.5,0,0}
\definecolor{darkgreen}{rgb}{0,0.5,0}
\definecolor{darkblue}{rgb}{0,0,0.5}

\hypersetup{colorlinks,linkcolor=darkred,filecolor=darkgreen,urlcolor=darkred,citecolor=darkblue}

\begin{document}

\maketitle
\begin{abstract}
	The tournament isomorphism problem is one of the two fundamental bottlenecks to designing better algorithms for the graph isomorphism problem. Though the problem has been investigated for more than five decades, compared to graphs, there are only very few results on the isomorphism problem of tournaments. For most classes of tournaments neither hardness nor polynomial-time solvability is known.
	
	Tournaments of bounded VC dimension are such a class for which no results are available,
	even though the VC dimension is arguably one of the most robust and central notions of combinatorial tameness. 
	
	Resolving an open problem of Neuen and Grohe, we show that the isomorphism problem for tournaments of VC dimension \(d\) can be decided in time \(n^{O(d\log d)}\).
	Consequently, automorphism groups of tournaments of bounded VC dimension can be computed in polynomial time.
	
	To this end, we develop a new method to isomorphism-invariantly decompose tournaments.
	To facilitate recursion, we introduce the notion of a patched tournament and analyze bounded VC dimension in patched tournaments.
	We design a recursive algorithm that balances the size of the decomposed pieces against their number and makes use of the structure of near twins.
	
	In an orthogonal direction, it is known that a hereditary class of tournaments has unbounded VC dimension if and only if it contains
	all \(2\)-colorable tournaments.
	As a second result, we show that also this class does not form an obstruction towards polynomial-time isomorphism testing
	and indeed show that isomorphism of tournaments of bounded chromatic number is polynomial-time decidable.
\end{abstract}

\paragraph{Statement of AI use.}
The proof of \Cref{lem:prelims:tournament_VCdim} was provided by Claude Sonnet 5,
and the proof of \Cref{lem:patched:VC-dim_bound} was simplified by Claude Sonnet 5.

\section{Introduction}
With the infamous graph isomorphism problem having a quasipolynomial-time algorithm due to Babai's 2015 breakthrough~\cite{DBLP:conf/stoc/Babai16}, two special cases are now regarded as prime obstacles for further improvements. Specifically, these are the group isomorphism problem (for groups given verbosely by multiplication table) and the tournament isomorphism problem. Both problems are concerned with basic structures whose isomorphism problems naturally reduce to the graph isomorphism problem. Both of these problems admit quasipolynomial-time algorithms, known since the 1980s~\cite{DBLP:conf/stoc/BabaiL83, group_isomorphism}. In particular, these algorithms are comparatively simple and do not require any of the new machinery developed by Babai just over a decade ago for graphs.

As special cases, the two isomorphism problems focus our view on particular barriers in isomorphism testing for which we have no efficient techniques yet. While groups come with a stringent algebraic structure, which we do not know how to exploit, tournaments come with the promise that their automorphism group is solvable, which we also do not know how to exploit. This is frustrating, since the group-theoretic machinery developed in the 1980s by Babai, Luks, and others, works particularly well for solvable groups. Indeed, many problems from the area of computational group theory (e.g.~set stabilizer, string isomorphism, centralizer) can be solved in polynomial time for solvable 
groups. This implies that automorphism or isomorphism problems can also be solved in polynomial time whenever the search space of isomorphisms lies inside a given ``encasing group'' (see~\cite{DBLP:conf/stoc/Babai16}) that is solvable.
In some sense, the ``only'' issue regarding tournament isomorphism is therefore to find such an encasing group. Such a group is known to exist, since the sought-after automorphism groups always are themselves examples. In this sense the tournament isomorphism problem singles out the group-theoretic difficulty of demonstrating that the automorphism group is not complex. In fact, for randomized algorithms it even suffices to decide whether the automorphism group is trivial~\cite{DBLP:conf/icalp/Schweitzer17}.

Nevertheless, in general, the canonical labeling algorithm from 1983 by Babai and Luks~\cite{DBLP:conf/stoc/BabaiL83}, which runs in~$n^{O(\log n)}$ time, is still the best known algorithm for the tournament isomorphism problem. Regarding specialized classes, very little is known (see related work below).
Most recently, an algorithm for tournaments of bounded twin-width~\cite{tournament_isomorphism_tww} was designed. In fact that algorithm is even fixed-parameter tractable.
For tournaments, bounded twin-width classes turn out to be exactly those classes of tournaments that are monadically dependent (NIP)~\cite{tww_tournaments}, an important notion from classical model theory. The question presents itself where exactly the boundary of tractability for the isomorphism problem of tournaments should lie. 
Towards this, Neuen and Grohe~\cite{tournament_isomorphism_tww} ask whether isomorphism of tournaments of bounded VC dimension can be solved in polynomial time. Indeed, bounded twin-width implies bounded VC dimension, but not vice versa.

The VC dimension of a family of subsets~\(\mathcal{F}\subseteq\mathcal{P}(X)\) of some set~$X$ essentially measures the richness of the set family. 
Roughly speaking it measures the richness of the intersection patterns \(\{F\cap Y\colon F\in\mathcal{F}\}\) for subsets~$Y\subseteq X$. For the VC dimension of a directed graph it is at first unclear which set family to take, but it turns out that the natural choices yield functionally equivalent definitions. One such choice is to take as family~$\mathcal{F}$ the set of out-neighborhoods of the vertices.

In fact,  the VC dimension is arguably one of the most robust and central notions of combinatorial tameness.
It  appears in learning theory, set theory, finite model theory (in particular for model checking), incidence geometry, and computational geometry.  For hereditary classes, a striking combinatorial characterization says that every class with growth less than~\(2^{n^2/4}\) (rather than~\(2^{\frac{n^2-n}{2}}\) like the class of all tournaments) has bounded VC dimension~\cite{domination_tournaments}. 
This is equivalent to excluding a 2-colorable tournament, where a tournament is~$k$-colorable if there exists a partition \(V(T)=V_1\dotcup\dots\dotcup V_k\) such that each part induces a transitive (i.e., acyclic) tournament. 
Overall, it is therefore natural to ask whether bounding the VC dimension alone provides enough structure to make tournament isomorphism tractable.

\textbf{Our result.} In this paper we prove that isomorphism for tournaments of VC dimension \(d\) can be tested in polynomial time \(n^{O(d\log d)}\).
Consequently, we can also compute the automorphism group of a tournament of bounded VC dimension in polynomial time.

Complementing the result, we show that also isomorphism of~$k$-colorable tournaments can be decided in polynomial time \(n^{O(k)}\).

\textbf{Techniques.} 
The starting point of our approach is the concept of near twins, that is, vertices whose out-neighborhoods are fairly similar. 
It turns out, as essentially observed in~\cite{tournament_isomorphism_tww}, that in a tournament there is a favorable upper bound on the number of near twins a vertex can have (see Lemma~\ref{lem:near-twins:out-degree}). On the other hand, in a tournament of restricted VC dimension, there is also a favorable upper bound on the size of a set of vertices that are pairwise not near-twins (see \Cref{cor:near-twins:existence}).
Thus, if we consider the graph induced by the relation of being near twins, we obtain a graph in which we have a bound on the degree and also a bound on the number of connected components. 

Conceptually, we might therefore be able to perform isomorphism tests for the connected components using bounded degree techniques that have been developed within the group-theoretic machinery for graph isomorphism testing.
To understand how the connected components are permuted as a collection, we would use a quotient tournament obtained by contracting each connected component to a single vertex and contracted edges using the majority direction. This trick of forming the quotient tournament requires that the components have odd order, which can typically be achieved using other group-theoretic techniques (degree-refinement and orbit-by-orbit processing). We then would recurse on the components and on the quotient tournament. Since automorphism groups of tournaments are solvable, we obtain a solvable encasing group (a direct-or-wreath product of solvable groups).
Assembling the results can then be achieved using algorithms for the set stabilizer problem for solvable groups~\cite{DBLP:conf/dimacs/Luks91}.

The central difficulty with this idea is that for recursive calls of our algorithm, we will need again a good bound on the VC dimension. However, we are not aware of an argument with which the VC dimension of the quotient tournament can be controlled.
To remedy this, we found a crucial combinatorial lemma on the VC dimension for a type of structure which we call a patched tournament. In fact, if we pick a partition of the vertices in a tournament and remove all arcs whose endpoints are in the same part, we obtain an oriented graph whose VC dimension can be bounded by the VC dimension of the tournament (see~\Cref{lem:patched:VC-dim_bound}).
The downside of this is now that all the recursive arguments have to work with patched tournaments. We need to replace the notion of a component of near-twins by what we call chunks, and we need to adapt the group-theoretic machinery to patched tournaments. Unfortunately, this makes several of the arguments somewhat technical, as we need to ensure solvability of the involved objects. For this, we move to solvable groupoids rather than solvable groups.
To assemble the results in the end, we then also make use of the fact that hypergraph isomorphism for solvable encasing groups is polynomial-time solvable~\cite{hypergraph_isomorphism}.

To show that isomorphism of~$k$-colorable tournaments is decidable in polynomial time we design a recursive algorithm.
To achieve this recursion, we argue that in sufficiently regular tournaments, we can always find a vertex such that both its in- and its out-neighborhood
induce subtournaments of smaller chromatic number. The crucial obstacle to overcome here is that testing~$k$-colorability is \textsc{NP}-complete.
Instead, we use the success of our algorithm itself as a polynomial-time decidable approximation to \(k\)-colorability of the neighborhoods.
This allows us to either find a nontrivial isomorphism-invariant vertex coloring, which allows us to recurse on the color classes,
or have a sufficiently regular tournament such that we can recurse on the in- and out-neighborhood of some vertex. See Section~\ref{sec:k:colorable:tournaments}.

\textbf{Further related work.} In 1992,  Ponomarenko designed a polynomial-time isomorphism algorithm for circulant tournaments~\cite{Ponomarenko1992}. This was later generalized to graphs~\cite{DBLP:journals/endm/EvdokimovP05} and relational structures~\cite{DBLP:journals/jct/MuzychukP20}. In the realm of edge colored tournaments, it is known that Schurian tournament isomorphism can be solved in polynomial time (\cite{MR2981982}), a class defined algebraically through coherent configurations. 
Techniques from a recent polynomial-time (FPT) isomorphism algorithm for edge colored tournaments for which one color class is spanning and has bounded degree~\cite{DBLP:journals/adam/ArvindPR25} found their way into the result on twin-width mentioned above. It appears that these are the only classes for which polynomial-time isomorphism algorithms have been developed. 

Regarding lower bounds, it is known that the problem is hard for~$\textsc{NL}$,~$\textsc{C=L}$, and~$\textsc{PL}$~\cite{DBLP:conf/mfcs/Wagner07}. There is also a (peculiarly randomized) reduction from tournament isomorphism to tournament asymmetry~\cite{DBLP:conf/icalp/Schweitzer17} and a similar statement in the context of canonization~\cite{DBLP:conf/isaac/ArvindDM06}.

A robust version of the graph isomorphism problem was considered in~\cite{DBLP:journals/corr/abs-2604-12584} and shown to be solvable when the VC dimension is bounded. This problem is orthogonal to the isomorphism problem considered here and calls for very different techniques and solutions.

Finally, we want to highlight that for general graphs, isomorphism of bounded VC dimension is GI-complete, as can be seen by using subdivisions.

 \section{Preliminaries}\label{sec:prelims}

\paragraph{General Notation.} We denote by~$\mathcal{P}(X)$ the \emph{power set} of a set~$X$.
The symmetric difference of two sets~$X$ and~$Y$ is denoted~$X\symdiff Y$.
If \(\mathcal{P}\) and \(\mathcal{Q}\) are partitions of some set \(X\), we say \(\mathcal{P}\) is \emph{coarser} than \(\mathcal{Q}\),
and write \(\mathcal{P}\ceq\mathcal{Q}\), if every part of \(\mathcal{P}\) is a union of parts of \(\mathcal{Q}\).

A (finite) digraph is a pair \(D=(V(D),E(D))\) where \(V(D)\) is a finite set of \emph{vertices} and the set of arcs \(E(D)\subseteq V(D)^2\)
is an irreflexive binary relation. We denote an arc from \(v\) to \(w\) as \(vw\).
An out-neighbor of a vertex \(v\in V(D)\) is a vertex \(w\in V(D)\) such that \(vw\in E(D)\).
The out-neighborhood \(N^+_D(v)\) is the set of out-neighbors of \(v\). Similarly,
the in-neighborhood \(N^-_D(v)\) is the set of vertices that have \(v\) as an out-neighbor.
The out-degree of \(v\) is \(\deg^+_D(v)\coloneqq|N^+_D(v)|\) and similarly the in-degree is \(\deg^-_D(v)\coloneqq|N^-_D(v)|\).

A digraph is \emph{oriented} if there are no two vertices \(v,w\in V(D)\) such that both \(vw\in E(D)\) and \(wv\in E(D)\).
A tournament is a digraph such that for any two distinct vertices \(v,w\in V(D)\) exactly one of the two pairs \(vw\) and \(wv\) is an arc.

\paragraph*{VC dimension.}
Let \(X\) be a finite set and \(\mathcal{F}\subseteq\mathcal{P}(X)\) a family of subsets. A set \(Y\subseteq X\) is \emph{shattered}
if \(\{F\cap Y\colon F\in\mathcal{F}\}=\mathcal{P}(Y)\). The size of a largest set that is shattered by \(\mathcal{F}\) is called
the \emph{VC dimension} of \(\mathcal{F}\) and denoted by \(\VCdim(\mathcal{F})\).

One of the most important lemmas in the theory of the VC dimension is the Sauer-Shelah Lemma:
\begin{lemma}[{\cite{sauer_shelah1,sauer_shelah2,VC}}]\label{lem:prelims:sauer-shelah}
Let \(X\) be a finite set, \(\mathcal{F}\subseteq \mathcal{P}(X)\)
a family of subsets of \(X\) and \(Y\subseteq X\). Then
\[\left|\{F\cap Y\colon F\in\mathcal{F}\}\right|\leq\sum_{i=0}^{\VCdim(\mathcal{F})}\binom{|Y|}{i}\leq 1+|Y|^{\VCdim(\mathcal{F})}.\]
\end{lemma}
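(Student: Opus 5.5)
We prove the Sauer--Shelah Lemma in two steps: the left inequality by a shifting/induction argument, the right one by an elementary binomial estimate. Since \(\VCdim\) can only drop when restricting to a subset, we first replace \(\mathcal{F}\) by its trace \(\mathcal{F}|_Y\coloneqq\{F\cap Y\colon F\in\mathcal{F}\}\subseteq\mathcal{P}(Y)\). Any set shattered by \(\mathcal{F}|_Y\) is also shattered by \(\mathcal{F}\), so \(\VCdim(\mathcal{F}|_Y)\le\VCdim(\mathcal{F})=:d\). The sum \(\sum_{i\le d}\binom{|Y|}{i}\) is monotone in \(d\). It therefore suffices to show that every family \(\mathcal{G}\subseteq\mathcal{P}(Y)\) satisfies \(|\mathcal{G}|\le\sum_{i=0}^{\VCdim(\mathcal{G})}\binom{|Y|}{i}\).

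For this we use the stronger Pajor-type statement that \(|\mathcal{G}|\) is at most the number of subsets of \(Y\) shattered by \(\mathcal{G}\). Every shattered set has size at most \(\VCdim(\mathcal{G})\), and there are \(\sum_{i\le \VCdim(\mathcal{G})}\binom{|Y|}{i}\) subsets of that size, so the bound follows. We prove the Pajor statement by induction on \(|Y|\).

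\emph{Base case.} If \(Y=\emptyset\), then \(|\mathcal{G}|\le 1\), and when \(\mathcal{G}\) is nonempty the empty set is shattered.

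\emph{Inductive step.} Fix \(y\in Y\) and set \(Y'=Y\setminus\{y\}\). Define
\[\mathcal{G}_0=\{G\setminus\{y\}\colon G\in\mathcal{G}\},\qquad \mathcal{G}_1=\{G\subseteq Y'\colon G\in\mathcal{G}\text{ and }G\cup\{y\}\in\mathcal{G}\}.\]
Then \(|\mathcal{G}|=|\mathcal{G}_0|+|\mathcal{G}_1|\). By induction, \(|\mathcal{G}_0|\) is at most the number of subsets of \(Y'\) shattered by \(\mathcal{G}_0\), and each of these is shattered by \(\mathcal{G}\). Likewise \(|\mathcal{G}_1|\) is at most the number of sets \(S\subseteq Y'\) shattered by \(\mathcal{G}_1\). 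For each such \(S\), the set \(S\cup\{y\}\) is shattered by \(\mathcal{G}\), because both versions of every witness lie in \(\mathcal{G}\). The sets in the first collection do not contain \(y\) and those in the second do, so the two collections are disjoint, and the counts add up. This completes the induction.

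The main delicate point is exactly this disjointness and the verification that \(S\cup\{y\}\) is shattered. Take \(T\subseteq S\cup\{y\}\). Since \(S\) is shattered by \(\mathcal{G}_1\), there is a \(G\in\mathcal{G}_1\) with \(G\cap S=T\setminus\{y\}\). Both \(G\) and \(G\cup\{y\}\) lie in \(\mathcal{G}\), and choosing one or the other realizes \(T\) according to whether \(y\in T\).

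For the second inequality, write \(m=|Y|\). The case \(d=0\) gives \(1\le 2\). For \(d\ge1\) use \(\binom{m}{i}\le m^i/i!\), which gives
\[\sum_{i=0}^d\binom{m}{i}\le 1+\sum_{i=1}^d \frac{m^i}{i!}.\]
It remains to show \(\sum_{i=1}^d m^i/i!\le m^d\).
\begin{itemize}
\item If \(m\ge 2\): since \(m^i\le m^{d}\) for \(i\le d\), the sum is at most \(m^d\sum_{i\ge1}1/i!=(e-1)m^d\). This constant is too weak, so we argue more carefully. For \(d\ge1\), \(m\ge2\) we have \(m^i\le m^d/2^{d-i}\), hence \(\sum_{i=1}^d m^i/i!\le m^d\sum_{i=1}^d 2^{i-d}/i!\le m^d\). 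The last step holds because the terms \(2^{i-d}/i!\) have largest term \(1/d!\) at \(i=d\) and halve geometrically below it, so the sum is at most \(2/d!\le1\) for \(d\ge2\); for \(d=1\) it equals \(1\).
\item If \(m=1\): the sum is \(\sum_{i\le d}\binom{1}{i}=2=1+1^d\).
\item If \(m=0\): the sum is \(1\).
\end{itemize}
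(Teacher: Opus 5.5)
The paper gives no proof of this lemma and only cites Sauer, Shelah and Vapnik--Chervonenkis, so there is no argument to compare against. Your proof of the left inequality is correct: you restrict to the trace on $Y$, then prove Pajor's bound that $|\mathcal{G}|$ is at most the number of sets shattered by $\mathcal{G}$, by induction via the split into $\mathcal{G}_0$ and $\mathcal{G}_1$. This is the standard argument, and your checks that $S\cup\{y\}$ is shattered and that the two collections are disjoint are right.

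The gap is in the right inequality, in the case $m\ge 2$. You claim the terms $a_i=2^{i-d}/i!$ are largest at $i=d$ and halve geometrically as $i$ decreases. This is backwards: $a_{i-1}/a_i=i/2\ge 1$ for $i\ge 2$. So the largest terms are $a_1=a_2=2^{1-d}$, and the sequence decreases in $i$. As a result, your intermediate bound $\sum_{i=1}^d a_i\le 2/d!$ is false for $d\ge 3$. For $d=3$ the sum is $\tfrac14+\tfrac14+\tfrac16=\tfrac23$, not at most $\tfrac13$.

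The inequality you actually need, $S_d\coloneqq\sum_{i=1}^d 2^{i-d}/i!\le 1$, is still true and easy to repair. We have $S_1=1$ and $S_{d+1}=\tfrac12 S_d+\tfrac{1}{(d+1)!}\le\tfrac12+\tfrac12=1$. Alternatively, you can avoid the case split on $m$ altogether. For $d\ge 1$, fix a linear order on $Y$ and send each nonempty $S=\{s_1<\dots<s_j\}\subseteq Y$ with $j\le d$ to the $d$-tuple $(s_1,\dots,s_j,s_j,\dots,s_j)\in Y^d$. This map is injective, since $S$ is recovered as the set of entries. It gives $\sum_{i=1}^d\binom{m}{i}\le m^d$ for every $m\ge 0$. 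Your remaining cases ($d=0$, $m\in\{0,1\}$) are fine.
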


For a set family \(\mathcal{F}\subseteq\mathcal{P}(X)\), we can also define the dual set family
\[\hat{\mathcal{F}}\coloneqq\{\{F\in\mathcal{F}\colon x\in F\}\colon x\in X\}\subseteq\mathcal{P}(\mathcal{F})\]
as a family of subsets of \(\mathcal{F}\). It is routine to check that the dual of \(\hat{\mathcal{F}}\) is again \(\mathcal{F}\),
and by \cite[Proposition 2.12]{VCdim_dual} it holds that
\[\VCdim(\mathcal{F})<2^{\VCdim(\hat{\mathcal{F}})+1}.\]

Given a digraph \(D\), consider the set families \(\mathcal{N}^+(D)\coloneqq\{N^+(v)\colon v\in V(D)\}\) and \(\mathcal{N}^-(D)\coloneqq\{N^-(v)\colon v\in V(D)\}\) as families of subsets of \(V(D)\), and \(\VCdim^+(D)=\VCdim(\mathcal{N}^+(D))\) and \(\VCdim^-(D)=\VCdim(\mathcal{N}^-(D))\).
Note that these two set families are duals of one another, which implies
\[\VCdim^+(D)<2^{\VCdim^-(D)+1}\qquad\text{and}\qquad\VCdim^-(D)<2^{\VCdim^+(D)+1}.\]
In particular, one of these parameters is bounded on a class of digraphs if and only if the other one is bounded.
If either is the case, we say that the class has bounded VC dimension.
 
For tournaments, these two notions even differ by at most \(1\):
\begin{lemma}\label{lem:prelims:tournament_VCdim}
If \(T\) is a tournament with \(\VCdim^+(T)=d\), then \(d-1\leq \VCdim^-(T)\leq d+1\).
\end{lemma}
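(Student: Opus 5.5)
The plan is to use that in a tournament \(N^-(v)=V(T)\setminus(N^+(v)\cup\{v\})\), so in- and out-neighborhoods are complements of each other except for the vertex \(v\) itself. First I prove the lower bound \(\VCdim^-(T)\geq d-1\). The upper bound then follows by symmetry. Reversing all arcs gives a tournament \(T'\) with \(\VCdim^+(T')=\VCdim^-(T)=:d'\) and \(\VCdim^-(T')=d\). Applying the lower bound to \(T'\) gives \(d\geq d'-1\), i.e.\ \(d'\leq d+1\).

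For the lower bound, fix a set \(Y\) of size \(d\) that is shattered by \(\mathcal{N}^+(T)\). For each \(T_0\subseteq Y\), choose a vertex \(v_{T_0}\) with \(N^+(v_{T_0})\cap Y=Y\setminus T_0\). These vertices are pairwise distinct. If \(v_{T_0}\in Y\), then \(v_{T_0}\in T_0\), because no vertex is its own out-neighbor. I look for a \(y_0\in Y\) such that \(Y'=Y\setminus\{y_0\}\) is shattered by the in-neighborhoods.

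Fix \(S\subseteq Y'\). I use one of the two witnesses \(v_{S}\) and \(v_{S\cup\{y_0\}}\); both have out-trace \(Y'\setminus S\) on \(Y'\). If the chosen witness \(v\) does not lie in \(Y'\), then \(N^-(v)\cap Y'=Y'\setminus(Y'\setminus S)=S\), as needed. A witness lying in \(Y'\) must lie in \(S\), and then the trace would be \(S\setminus\{v\}\), which is the only bad case. So \(y_0\) fails for \(S\) only if there are two distinct \(y_1,y_2\in S\) with \(v_S=y_1\) and \(v_{S\cup\{y_0\}}=y_2\).

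Each \(y\in Y\) is the witness of at most one pattern. So consider the at most \(d\) patterns \(P_y\subseteq Y\) realized by vertices \(y\in Y\), viewed as points of the hypercube \(\{0,1\}^Y\). The failure of \(y_0\) requires two of these points, \(P_{y_1}\) and \(P_{y_2}\), that differ exactly in coordinate \(y_0\), i.e.\ a hypercube edge in direction \(y_0\). It therefore suffices to find a direction \(y_0\) used by no edge of the subgraph of the hypercube induced on these at most \(d\) points.

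This is the step I expect to need the key idea, and I will handle it with a spanning-forest argument. Take a spanning forest \(F\) of the induced subgraph. If some edge \(ab\) has direction \(c\), then \(a\) and \(b\) lie in the same component. Any path between them in the induced subgraph must flip coordinate \(c\) an odd number of times, so the path through \(F\) contains a \(c\)-edge. Hence every direction used by some edge is used by an edge of \(F\). Since \(F\) has at most \(d-1\) edges, at most \(d-1\) directions are used, and some \(y_0\in Y\) is free. With this \(y_0\), every \(S\subseteq Y'\) has a good witness, so \(Y'\) of size \(d-1\) is shattered by \(\mathcal{N}^-(T)\).
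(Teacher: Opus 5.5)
Your proof is correct, but it takes a different route from the paper.

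The paper's argument runs as follows. It starts from a maximum set \(X\) shattered by \(\mathcal{N}^-(T)\) and discards the at most \(|X|\) traces \(N^-(x)\cap X\) coming from vertices \(x\in X\). It then applies the Sauer--Shelah Lemma. The remaining at least \(2^{|X|}-|X|\) traces exceed \(\sum_{i\le |X|-2}\binom{|X|}{i}=2^{|X|}-|X|-1\), so they shatter some \((|X|-1)\)-subset. Since these traces come from vertices outside \(X\), the out-traces shatter that subset too. This gives the upper bound directly, and the lower bound follows by reversing arcs, which is the mirror image of your order.

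Your spanning-forest step is in effect a self-contained proof of exactly the special case of Sauer--Shelah the paper needs. Deleting at most \(m\) points from \(\{0,1\}^m\) leaves a family that still shatters some \((m-1)\)-subset, because the deleted points use at most \(m-1\) edge directions of the cube. You apply this to the at most \(d\) patterns realized by witnesses lying in \(Y\).

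The paper's version is shorter, since Sauer--Shelah is already stated in the preliminaries. Yours is fully elementary and constructive: it tells you explicitly which element \(y_0\) can be dropped.

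The only thing to add is the trivial case \(d=0\). There \(Y=\emptyset\) and no \(y_0\) exists, but the claimed bound \(d-1=-1\) is vacuous.
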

\begin{proof}
Assume \(\VCdim^+(T)=d\), and let \(X\) be a maximum shattered set with respect to \(\mathcal{N}^-(T)\).

This means that the set \(\{N^-(v)\cap X\colon v\in V(T)\}\subseteq\mathcal{P}(X)\) contains all \(2^{|X|}\)
possible subsets. In particular, if we remove the in-neighborhoods of the vertices in \(X\) itself,
then the resulting set \(\mathcal{F}\coloneqq \{N^-(v)\cap X\colon v\in V(T)\setminus X\}\subseteq\mathcal{P}(X)\) still
contains at least \(2^{|X|}-|X|\) sets. By the Sauer-Shelah Lemma (\Cref{lem:prelims:sauer-shelah}), we thus find
\[2^{|X|}-|X|\leq |\mathcal{F}|\leq\sum_{i=0}^{\VCdim(\mathcal{F})}\binom{|X|}{i}.\]
But if \(\VCdim(\mathcal{F})\leq |X|-2\), then the right-hand side is at most \(2^{|X|}-|X|-1\), which would give a contradiction.
Thus, \(\VCdim(\mathcal{F})\geq |X|-1\) and in particular, there exists some subset \(Y\subseteq X\) with \(|Y|\geq |X|-1\) 
which is shattered by \(\mathcal{F}\). But note that since \(\mathcal{F}\) contains only in-neighborhoods of vertices outside \(X\),
and the out-neighborhoods of those vertices in \(X\) are exactly the complements of their in-neighborhoods in \(X\),
we find that also \(\mathcal{N}^+\) shatters \(Y\). This yields
\(|Y|\leq d\) and thus \(|X|\leq d+1\).

If we apply the same argument to the tournament \(\overline{T}\) obtained from \(T\) by reversing the direction of each arc
(which precisely exchanges the in- and out-neighborhoods), we find that also
\[\VCdim^+(T)=\VCdim^-(\overline{T})\leq \VCdim^+(\overline{T})+1=\VCdim^-(T)+1.\qedhere\]
\end{proof}
In the following, we will always only use the parameter \(\VCdim^+\), but \Cref{lem:prelims:tournament_VCdim} implies that
all our results also hold with the same asymptotic bounds for the parameter \(\VCdim^-\).

\paragraph*{Isomorphism computation and computational group theory.}
The tournament isomorphism problem asks whether two given tournaments are isomorphic.
The problem is equivalent under polynomial-time Turing reductions to the problem of computing a generating set
of the automorphism group of a given tournament (see~\cite[Lemma 6]{DBLP:conf/icalp/Schweitzer17}). This allows for the use of group-theoretic techniques in the design of algorithms for the tournament isomorphism problem.

In the context of computational group theory, a permutation group \(\Gamma\leq\Sym(\Omega)\)
is a group consisting of permutations of some finite set \(\Omega\). Such a permutation group
can be presented to an algorithm via a generating set, and we generally want algorithms to run in
polynomial time in the size of the ground set \(\Omega\) and the size of the generating set.
Similarly, when an algorithm is given a coset \(\phi\circ\Gamma \coloneqq \{\phi \cdot g\mid g\in \Gamma\} \) instead of a permutation group,
this means that the algorithm is given the bijection \(\phi\) and a generating set of the group \(\Gamma\)\footnote{As customary in the context of isomorphism problems, but diverging from standard group theory terminology, for cosets \(\phi\circ\Gamma\) we do not require that~$\phi$ is a permutation of~$\Omega$, and rather only require it to be a bijection composable with permutations of~$\Omega$.
}.

One of the most important facts that
makes the tournament isomorphism problem easier to handle than the more general graph isomorphism problem
is the fact that automorphism groups of tournaments have odd order and are thus solvable due to the Feit-Thompson theorem~\cite{FeitThompson}.
Since in this paper we will only need various consequences of solvability, we will not define
the notion here and only mention that the class of solvable groups is closed under taking subgroups, direct products, and wreath products.
Moreover, we also call a coset \(\phi\circ\Gamma\) solvable if the group \(\Gamma\) is solvable.

On solvable groups and cosets, many computational problems become polynomial-time solvable.
\begin{lemma}[{\cite{DBLP:conf/dimacs/Luks91, hypergraph_isomorphism, hypergraph_isomorphism_neuen}}]\label{lem:prelims:solvable_group_algorithms}
Given a solvable permutation group \(\Gamma\leq\Sym(\Omega)\) by generators, the following problems can be solved in polynomial time:
\begin{enumerate}
\item the problem to compute the orbit partition that \(\Gamma\) induces on the set \(\Omega\) (this is even polynomial-time solvable for arbitrary permutation groups),
\item the set-stabilizer problem: given~$M\subseteq \Omega$ compute~$\Stab_\Gamma(M)\coloneqq \{\gamma\in \Gamma\colon M^\gamma=M \}$,
\item the set-transporter problem: given~$M\subseteq \Omega$, \(M'\subseteq\Omega'\), and a bijection \(\phi\colon\Omega\to\Omega'\), compute~$\Trans_{\phi\circ\Gamma}(M\rightarrow M')\coloneqq \{\gamma\in\phi\circ\Gamma\colon M^\gamma=M' \}$,
\item the hypergraph-stabilizer problem\footnote{
	In \cite{hypergraph_isomorphism}, Miller actually proves polynomial-time solvability of the hypergraph-transporter (and thus also the hypergraph-stabilizer) problem for the class of \(\hat{\Gamma}_d\)-groups,
	which are in this case defined to be the class of groups all of whose composition factors embed into \(S_d\). However, his arguments generalize to the more general class \(\Gamma_d\) of groups all of whose nonabelian composition factors embed into \(S_d\). Since solvable groups are in particular \(\Gamma_1\)-groups, we also get polynomial-time solvability in this case.}:
	given a possibly edge-colored hypergraph~\(\mathcal{H}\subseteq\mathcal{P}(\Omega)\), compute \(\Stab_\Gamma(\mathcal{H})\coloneqq\{\gamma\in \Gamma\colon \mathcal{H}^\gamma=\mathcal{H}\}\), and
\item the hypergraph-transporter problem: given possibly edge-colored hypergraphs~\(\mathcal{H}\subseteq\mathcal{P}(\Omega)\), \(\mathcal{H}'\subseteq\mathcal{P}(\Omega')\), and a bijection \(\phi\colon\Omega\to\Omega'\), compute \(\Trans_{\phi\circ\Gamma}(\mathcal{H}\to\mathcal{H}')\coloneqq\{\gamma\in \phi\circ\Gamma\colon \mathcal{H}^\gamma=\mathcal{H}'\}\).
\item group intersection: given an arbitrary permutation group~$\Delta\leq \Sym(\Omega)$, compute the group intersection~$\Delta\cap \Gamma$,
\item coset intersection: given bijections~$\varphi,\psi \colon \Omega \rightarrow \Omega'$ and a permutation group~$\Delta\leq \Sym(\Omega)$, compute the coset intersection~$\varphi\circ\Gamma \cap \psi\circ\Delta$.
\end{enumerate}
\end{lemma}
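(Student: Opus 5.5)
This statement, \Cref{lem:prelims:solvable_group_algorithms}, is a compilation of known algorithmic results, so the plan is to give a reference-based justification for each item rather than a new argument. The only genuinely non-citational point is checking that the cited theorems apply in the form stated: for solvable groups, and with cosets given by an arbitrary bijection \(\phi\colon\Omega\to\Omega'\).

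\textbf{Items (a), (f), (g).} For (a), I would cite the standard Schreier–Sims-type orbit computation. This is a closure computation of the orbits under the generators, polynomial for arbitrary groups. For (f) and (g), I would use Luks' result \cite{DBLP:conf/dimacs/Luks91}: if one of the groups lies in a class with bounded nonabelian composition factors (here solvable), then intersection with an arbitrary group is polynomial. The coset version reduces to the group version by the standard trick of writing \(\varphi\circ\Gamma\cap\psi\circ\Delta\) as the transporter of a suitable structure.

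\textbf{Items (b) and (c).} These follow from Luks' string-isomorphism framework for groups in \(\Gamma_d\). A set \(M\) is a \(\{0,1\}\)-coloring of \(\Omega\), so the set stabilizer is a string automorphism group and the transporter is the string-isomorphism coset. The composition with \(\phi\) is harmless: I would pull \(M'\) back along \(\phi\) to a subset of \(\Omega\) and solve the problem inside \(\Gamma\).

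\textbf{Items (d) and (e).} These are Miller's hypergraph result \cite{hypergraph_isomorphism}, restated in \cite{hypergraph_isomorphism_neuen}. As the footnote indicates, the main obstacle is that Miller states the result for \(\hat\Gamma_d\). I would verify that his divide-and-conquer over the structure tree only uses two properties:
\begin{itemize}
\item primitive groups in the class have polynomially bounded order, which holds for \(\Gamma_d\) by Babai–Cameron–Pálfy;
\item the class is closed under subgroups and homomorphic images.
\end{itemize}
Both properties hold for \(\Gamma_d\), and solvable groups lie in \(\Gamma_1\).

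Edge colors can be handled in one of two ways:
\begin{itemize}
\item encode them by adding, for each color \(c\), disjoint marker points fixed by an extended group \(\Gamma\times 1\);
\item process color classes sequentially, computing the stabilizer of one colored subhypergraph and then restricting to the next.
\end{itemize}
In the sequential approach each step stays within solvable groups, because subgroups of solvable groups are solvable.
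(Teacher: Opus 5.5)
Your proposal matches the paper, which also proves this lemma only by citing Luks, Miller and Neuen, plus the footnote remark that Miller's \(\hat{\Gamma}_d\) argument carries over to \(\Gamma_d\) and hence to solvable groups; your justification of that remark (Babai--Cameron--P\'alfy bounds for primitive groups, closure under subgroups and quotients), the pull-back along \(\phi\), and the marker-point encoding of edge colours flesh out that same route correctly. The one soft spot is item (g): the ``transporter of a suitable structure'' you allude to would naturally live in \(\Gamma\times\Delta\), which is not solvable for arbitrary \(\Delta\), so you should instead cite the coset form of Luks' intersection result directly, or reduce coset intersection to group intersection via a wreath-product construction with \(S_2\), which keeps the relevant group solvable.
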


Finally, let us define wreath products of permutation groups.
Given two permutation groups \(\Gamma\leq\Sym(\Omega)\) (called the bottom group)
and \(\Delta\leq\Sym(I)\) (called the top group),
the wreath product of \(\Gamma\) and \(\Delta\), denoted~$\Gamma\wr\Delta$,
is a permutation group \(\Gamma\wr\Delta\leq\Sym(\Omega\times I)\)
on the set \(\Omega\times I\) which we view as a disjoint union of \(|I|\)-copies of \(\Omega\)
and is generated by the following permutations
\begin{enumerate}
\item for each \(i\in I\) and \(\gamma\in\Gamma\), the permutation \(\gamma_i\) which maps \((\omega,j)\in\Omega\times I\) to \((\gamma(\omega),j)\) if \(j=i\) and to \((\omega,j)\) otherwise.
\item for each \(\delta\in\Delta\), the permutation which maps \((\omega,i)\in\Omega\times I\) to \((\omega,\delta(i))\).
\end{enumerate}
Thus, the group \(\Gamma\wr\Delta\) acts on \(\Omega\times I\) by independently permuting each copy of \(\Omega\) via \(\Gamma\), and in addition permuting the copies of \(\Omega\) themselves via \(\Delta\).
Note that we can efficiently compute a generating set for \(\Gamma\wr\Delta\) when given generating sets for \(\Gamma\) and \(\Delta\). Note further that we can let \(\Gamma\wr\Delta\) act
on a set \(\dot\bigcup_{i\in I}\Omega_i\) that is a disjoint union of \(I\) many sets
as long as we are given a bijection \(\phi_i\colon \Omega\to \Omega_i\) for each~\(i\).

We will use the fact that wreath products of solvable groups are again solvable. The general situation in which
one uses wreath products in the tournament isomorphism setting is the following.
Assume we have a tournament \(T\) and we have computed some isomorphism-invariant partition
\(\mathcal{P}\) of \(V(T)\). Then we can recursively compute the set \(\Aut(T[P])\) for each part \(P\in\mathcal{P}\), and often, we can also compute some sort of quotient tournament \(T/\mathcal{P}\)
on vertex set \(\mathcal{P}\) such that every automorphism of \(T\) also induces an automorphism
of \(T/\mathcal{P}\). We can also recursively compute the automorphism group \(\Aut(T/\mathcal{P})\).
Similarly to the wreath product, the whole automorphism group \(\Aut(T)\) permutes the parts of \(\mathcal{P}\) and the parts themselves can be permuted using the groups \(\Aut(T[P])\).
The induced permutation action of the parts of \(\mathcal{P}\) must be contained in \(\Aut(T/\mathcal{P})\).
If all pairs of parts induce isomorphic subtournaments,
then \(\Aut(T)\) actually embeds into the wreath product \(\Aut(T[P])\wr\Aut(T/\mathcal{P})\). This gives us a solvable encasing group of \(\Aut(T)\), and we can finally compute \(\Aut(T)\)
as the set stabilizer of the set of arcs in this encasing solvable group. If not all parts induce isomorphic subtournaments, we can split the whole tournament up into maximal collections of parts for which this is the case.
The automorphism group then embeds into the direct product of the wreath products for 
the respective subtournaments, which are again solvable.
An application of this general technique in our setting can be found in \Cref{lem:patched:solvable_encasing_to_iso}. \section{VC dimension and near twins}
Given a digraph \(D\), for an integer~$k\in \mathbb{N}$, we say two vertices \(v,w\in V(D)\) are \emph{\(k\)-near twins}, and write \(v\sim^k_D w\), if \(|(N^+_D(v)\symdiff N^+_D(w))|<k\). (More precisely they are~\emph{$k$-near out-twins}.) If the digraph \(D\) is clear from context, we also write \(v\sim^k w\) instead of \(v\sim^k_D w\).

The first result we will show is that in digraphs of VC dimension at most \(d\), there are always many pairs
of \(O(n^{1-\epsilon})\)-near twins. To show this, we will apply a lemma by Haussler
which states that set families of bounded VC dimension without near twins are small:
\begin{lemma}[{\cite[Theorem 1]{sphere_packing}}]\label{lem:near-twins:haussler_packing}
Let \(X\) be a finite set of size~$n$, let \(1\leq k\leq n\), and let \(\mathcal{F}\subseteq\mathcal{P}(X)\) be a family of subsets
of VC dimension at most \(d\) such that for every two distinct sets \(A,B\in\mathcal{F}\)
we have \(|A\symdiff B|\geq k\).
Then \(|\mathcal{F}|\leq e(d+1)\left(\frac{2en}{k}\right)^d\).
\end{lemma}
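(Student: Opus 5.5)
Since this is Haussler's packing theorem~\cite[Theorem 1]{sphere_packing}, which the paper only cites, I would reprove it along Haussler's original lines: a combinatorial edge bound for one-inclusion graphs, followed by a random-restriction averaging argument.

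\textbf{Step 1: edge bound.} For a family \(\mathcal{G}\subseteq\mathcal{P}(Y)\) of VC dimension at most \(d\), consider the \emph{one-inclusion graph}. Its vertices are the sets of \(\mathcal{G}\), and two sets are joined if their symmetric difference has size exactly \(1\). I would show this graph has at most \(d|\mathcal{G}|\) edges, using the standard shifting (compression) technique. Shifting toward an element \(y\) does not increase the VC dimension and does not decrease the number of edges. After shifting in all coordinates, the family is down-closed. In a down-closed family every set has size at most \(d\), and each set \(G\) can be charged with the edges to \(G\setminus\{y\}\) for \(y\in G\). That gives at most \(d|\mathcal{G}|\) edges.

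\textbf{Step 2: random restriction.} Fix \(m\approx 2dn/k\); the boundary case \(m>n\) is handled directly by Sauer--Shelah (\Cref{lem:prelims:sauer-shelah}). Pick a uniformly random \(I\subseteq X\) of size \(m\) and a uniformly random \(i\in I\), and set \(J=I\setminus\{i\}\).
\begin{itemize}
\item Group the sets of \(\mathcal{F}\) into classes according to their trace on \(J\). Each class has either one or two traces on \(I\); a class with two traces gives one edge of the one-inclusion graph of \(\mathcal{F}|_I\).
\item Weight each edge by the class size, and average over the choice of \(i\). Step 1 then shows that the expected weighted number of \emph{split} classes is at most roughly \((d/m)\,\mathbb{E}|\mathcal{F}|_I|\).
\item Conversely, \(\mathbb{E}\) of the number of classes equals \(\mathbb{E}|\mathcal{F}|_J|\). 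So \(|\mathcal{F}|\) minus the expected number of ``wasted'' sets (those collapsed into a class they share with others) is controlled by \(\mathbb{E}|\mathcal{F}|_J|\) plus the split term.
\end{itemize}

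\textbf{Step 3: using separation.} For two distinct \(A,B\in\mathcal{F}\), a random element lies in \(A\symdiff B\) with probability at least \(k/n\). Conditioned on \(A\) and \(B\) sharing a \(J\)-class, this makes the class split at \(i\) with probability roughly proportional to \(k/n\). Comparing the two estimates of Step~2 yields an inequality of the form
\[|\mathcal{F}|\cdot\frac{k}{n}\cdot(\text{const})\;\lesssim\;\frac{d+1}{m}\,\mathbb{E}|\mathcal{F}|_I|.\]
Bounding \(\mathbb{E}|\mathcal{F}|_I|\le\sum_{j\le d}\binom{m}{j}\le(em/d)^d\) by Sauer--Shelah, and choosing \(m\) as above, should give \(|\mathcal{F}|\le e(d+1)(2en/k)^d\).

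\textbf{Main obstacle.} The delicate point is Step~3: turning pairwise separation into a lower bound on the expected number of split classes. Classes can be large, so one must argue per class that a class of size \(s\ge2\) splits at a random \(i\) with probability proportional to \(k/n\) times something. The cleanest route I would try first is Haussler's own bookkeeping: count, for each \(A\in\mathcal{F}\), the probability that \(A\) is \emph{not} the unique member of its class \emph{and} the class does not split. I would expect to need care with sampling without replacement and with the exact constants \(e(d+1)\) and \(2e\). I would not insist on matching those constants exactly, since the paper only needs the polynomial form \(O(d)\,(n/k)^d\).
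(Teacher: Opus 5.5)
The paper gives no proof of this lemma; it only cites Haussler. Your outline follows Haussler's architecture, and Step~1 (shifting) is correct. Steps~2--3 do not close, however, and what is missing is exactly the key idea of Haussler's argument.

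Step~1 bounds only the \emph{unweighted} number of edges of the one-inclusion graph of \(\mathcal{F}|_I\). It does not control edges weighted by the size of the \(J\)-class. For example, take the family \(\{\emptyset,\{1\},\dots,\{m\}\}\) of VC dimension \(1\), with weight \(W\) on \(\emptyset\) and weight \(1\) elsewhere. Its class-size-weighted edge sum is \(m(W+1)\), while the total weight is only \(W+m\).

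Without weights, separation only tells you that each non-singleton \(J\)-class splits with probability at least \(k/n\). The number of non-singleton classes cannot be compared with \(|\mathcal{F}|\) when there are a few huge classes. Your fallback bookkeeping (the probability that \(A\) is non-unique in its class and the class does not split) still does not make a class contribute in proportion to its size, which is the whole difficulty. For the same reason, the inequality you aim for, with \(\mathbb{E}|\mathcal{F}|_I|\) on the right, is not what a correct argument yields.

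The missing ingredient is a weighted one-inclusion bound. Give a \(J\)-class that \(i\) splits into parts of sizes \(b_1,b_2\) the weight \(\min(b_1,b_2)\). For any \(w\colon\mathcal{G}\to\mathbb{R}_{\ge 0}\) on a family \(\mathcal{G}\) of VC dimension at most \(d\),
\[
\sum_{\{G,H\}\text{ edge}}\min(w(G),w(H))\le d\sum_{G\in\mathcal{G}}w(G).
\]
This follows from Step~1: write \(\min(w(G),w(H))=\int_0^\infty\mathbf{1}[w(G)>t]\,\mathbf{1}[w(H)>t]\,dt\) and apply Step~1 to each subfamily \(\{G\colon w(G)>t\}\). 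Alternatively, orient the graph so that all out-degrees are at most \(d\).

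Take \(w\) to be the class sizes and sum over \(i\in I\). The total is at most \(d|\mathcal{F}|\), so the average over \(i\) is at most \(\frac{d}{m}|\mathcal{F}|\), not \(\frac{d}{m}|\mathcal{F}|_I|\).

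For the lower bound, use \(\min(b_1,b_2)\ge b_1b_2/b\), where \(b_1b_2\) counts the pairs of the class separated by \(i\). Each such pair differs in at least \(k\) points, all outside \(J\), so \(\mathbb{E}_i[b_1b_2]\ge\binom{b}{2}\frac{k}{n}\). Summing over classes gives
\[
\frac{k}{2n}\bigl(|\mathcal{F}|-\mathbb{E}|\mathcal{F}|_J|\bigr)\le\frac{d}{m}|\mathcal{F}|,
\quad\text{i.e.}\quad
\Bigl(1-\frac{2dn}{km}\Bigr)|\mathcal{F}|\le\sum_{j=0}^{d}\binom{m-1}{j}.
\]
Note that \(|\mathcal{F}|\) appears on both sides, so the choice of \(m\) matters. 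With your choice \(m\approx 2dn/k\), the coefficient on the left vanishes. With \(m=\lceil 2(d+1)n/k\rceil\) it is at least \(\frac{1}{d+1}\), and \((1+\frac1d)^d\le e\) then gives exactly \(e(d+1)(2en/k)^d\). Sauer--Shelah handles the case \(m>n\).
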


\begin{corollary}\label{cor:near-twins:existence}
For every \(d\in\N\) there is some \(\epsilon_d\in\Omega(\frac{1}{d})\) and some \(N_d\in 2^{O(d^2)}\)
such that every digraph \(D\) of order \(n>N_d\) with \(\VCdim^+(D)\leq d\)
contains a set \(X\) with \(|X|<n^{1-\epsilon_d}\) such that every vertex of \(D\) has an \(\left\lceil\frac{n^{1-\epsilon_d}}{4}\right\rceil\)-near twin in \(X\).
\begin{proof}
Set \(\epsilon_d\coloneqq\frac{1}{d+2}\), \(k\coloneqq\left\lceil\frac{n^{1-\epsilon_d}}{4}\right\rceil\), and \(N_d\coloneqq \left(8^{d}e^{d+1}(d+1)\right)^{d+2}\in 2^{O(d^2)}\).
Now, let \(D\) be a digraph of order \(n> N_d\) with \(\VCdim^+(D)\leq d\) and \(X\subseteq V(D)\) a maximal set of vertices which does not contain
a pair of \(k\)-near twins. Since this set is maximal, it follows that every vertex outside \(X\) must have a \(k\)-near twin in \(X\).
Thus, it remains to show that \(|X|<n^{1-\epsilon_d}\).

If we apply \Cref{lem:near-twins:haussler_packing} to the set family \(\{N^+_D(x)\colon x\in X\}\subseteq\mathcal{P}(V(D))\),
we find that
\begin{align*}
|X|& \leq e(d+1)\left(\frac{2en}{k}\right)^d\\
&\leq e(d+1)\left(\frac{8en}{n^{1-\epsilon_d}}\right)^d\\
&=8^de^{d+1}(d+1)\cdot n^{\epsilon_dd}\\
&= N_d^{\frac{1}{d+2}}\cdot n^{\frac{d}{d+2}}\\
&<n^{\frac{1}{d+2}}\cdot n^{1-\frac{2}{d+2}}\\
&= n^{1-\frac{1}{d+2}}\\
&=n^{1-\epsilon_d}.\qedhere
\end{align*}
\end{proof}
\end{corollary}

The graph~\((V(D),\sim^k_D)\) obtained by considering the near-twin relation is an undirected graph.
We call the connected components of the graph \((V(D),\sim^k_D)\) the \emph{\(k\)-near-twin components of \(D\)}.
Note that \Cref{cor:near-twins:existence} implies that the number of \(\left\lceil\frac{n^{1-\epsilon_d}}{4}\right\rceil\)-near-twin components of a digraph of VC dimension at most \(d\) is at most \(n^{1-\epsilon_d}\).

Before we continue, let us now present the main idea of our recursion:
Partitioning the vertex set \(V(T)\) into fewer than \(n^{1-\epsilon_d}\) many
\(\left\lceil\frac{n^{1-\epsilon_d}}{4}\right\rceil\)-near-twin components gives us an isomorphism-invariant partition on every tournament
of VC dimension at most \(d\). Using standard techniques involving wreath products (as described in \Cref{sec:prelims}),
it now suffices to first handle each \(\left\lceil\frac{n^{1-\epsilon_d}}{4}\right\rceil\)-near-twin component independently, and second
compute the actions that the automorphism group \(\Aut(T)\) and isomorphism coset \(\Iso(T,T')\) induce on the set of
\(\left\lceil\frac{n^{1-\epsilon_d}}{4}\right\rceil\)-near-twin components.

For the recursion on components, we show that the near-twin relation essentially
has small degree (\Cref{lem:near-twins:out-degree,lem:near-twins:split-or-degree-bound}) which allows us to use bounded-degree
techniques that were already successfully used in the context of tournament isomorphism in \cite{tournament_isomorphism_tww,DBLP:journals/adam/ArvindPR25}.

In order to compute the actions that \(\Aut(T)\)  and \(\Iso(T,T')\) induce on the set of \(\left\lceil\frac{n^{1-\epsilon_d}}{4}\right\rceil\)-near-twin components,
we try to contract each near-twin component into a single vertex to obtain a quotient tournament and compute
isomorphism cosets between these quotient tournaments of size less than \(n^{1-\epsilon_d}\).
However, to make this work, we will need to address a crucial issue:
Since the \(\left\lceil\frac{n^{1-\epsilon_d}}{4}\right\rceil\)-near-twin components are usually not twin classes with respect to their complement,
it is not immediately clear how to define the edge set of this quotient tournament. A common trick used to deal with this issue
is to pick as the arc direction between two components the majority arc direction between these components,
which exists whenever the components have odd size (while components of even size can always be decomposed using degree arguments).
However, we do not know whether this tournament again has bounded VC dimension and hence,
we instead choose to delete some arcs where the arc direction is not clear.
We will achieve this via the notions of chunks and patched tournaments introduced in the next sections.

Going back to the first task at hand, we argue that the \(\left\lceil\frac{n^{1-\epsilon_d}}{4}\right\rceil\)-near-twin relation essentially has small degree.
We will need the bound for oriented digraphs, where it applies to the out-neighborhood of a vertex. A very similar bound for tournaments already appears in \cite[Lemma 3.2]{tournament_isomorphism_tww}.
\begin{lemma}\label{lem:near-twins:out-degree}
For every oriented digraph \(D\), every vertex \(v\in V(D)\) and every \(k\geq 2\), the number of \(k\)-near twins of~\(v\) in~\(N^+(v)\) is at most \(2k-3\).
\end{lemma}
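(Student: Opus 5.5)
The plan is a double-counting argument inside the set of near twins. Fix $v$ and let $S\coloneqq\{w\in N^+(v)\colon w\sim^k v\}$ with $s\coloneqq|S|$. If $s=0$ there is nothing to show, since $k\geq 2$ gives $2k-3\geq 1$; so assume $s\geq 1$. The idea is to show that every $w\in S$ has many elements of $N^+(v)\symdiff N^+(w)$ coming from $S$ itself, unless $w$ has large out-degree inside $S$. Orientedness of $D$ caps the total out-degree inside $S$, and this gives the bound.

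\textbf{Step 1: a lower bound on each symmetric difference.} Fix $w\in S$.
\begin{itemize}
\item We have $w\in N^+(v)$ but $w\notin N^+(w)$, because arcs are irreflexive. So $w$ itself lies in $N^+(v)\symdiff N^+(w)$.
\item Every $u\in S\setminus\{w\}$ with $wu\notin E(D)$ lies in $N^+(v)\setminus N^+(w)$.
\end{itemize}
Writing $\deg^+_S(w)\coloneqq|N^+(w)\cap S|$, this gives
\[
|N^+(v)\symdiff N^+(w)|\;\geq\; 1+(s-1)-\deg^+_S(w)\;=\;s-\deg^+_S(w).
\]
Since $w\sim^k v$, the left-hand side is at most $k-1$. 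Hence $s-\deg^+_S(w)\leq k-1$ for every $w\in S$.

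\textbf{Step 2: summing over $S$.} Summing the inequality from Step 1 over all $w\in S$ yields
\[
s^2-|E(D[S])|\;\leq\; s(k-1).
\]
Because $D$ is oriented, $D[S]$ has at most $\binom{s}{2}=s(s-1)/2$ arcs. Substituting, we get $s^2-s(s-1)/2\leq s(k-1)$, that is, $s(s+1)/2\leq s(k-1)$. Dividing by $s>0$ gives $s+1\leq 2k-2$, so $s\leq 2k-3$.

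\textbf{Main obstacle.} The only delicate point is Step 1: making sure the diagonal term is counted. The vertex $w$ contributes to the symmetric difference precisely because $w\in N^+(v)$ and loops are absent. Dropping this term would only give $s\leq 2k-1$. The remaining steps are routine bookkeeping, and $v$ itself plays no role, since $v\notin N^+(v)$ and orientedness forces $v\notin N^+(w)$ for $w\in S$.
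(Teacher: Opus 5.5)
Your proof is correct and is essentially the paper's argument. The paper assumes $|S|\geq 2k-2$, uses orientedness to pick a vertex $w$ of out-degree at most $\lfloor(|S|-1)/2\rfloor$ in $D[S]$, and notes that $S\setminus N^+(w)\subseteq N^+(v)\symdiff N^+(w)$ (counting $w$ itself, as you stress) has at least $\lceil(|S|+1)/2\rceil\geq k$ elements; you instead sum the same per-vertex inequality over all of $S$ rather than picking a vertex of below-average out-degree.
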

\begin{proof}
Assume the claim was false. Let \(v\in V(D)\) be a vertex with at least \(2k-2\) many \(k\)-near twins
among its out-neighbors, and let \(N^+_{\sim^k}(v)\coloneqq\{w\in N^+_{D}(v)\colon v\sim^k w\}\)
be the set of those \(k\)-near-twins.

Then, if we consider the subdigraph of \(D\) induced by \(N^+_{\sim^k}(v)\),
there exists a vertex \(w\in N^+_{\sim^k}(v)\) of out-degree at most \(\left\lfloor\frac{|N^+_{\sim^k}(v)|-1}{2}\right\rfloor\).
But since all vertices in \(N^+_{\sim^k}(v)\) are out-neighbors of \(v\),
the out-neighborhoods of \(v\) and \(w\) in \(D\) differ by at least
\[\left\lceil\frac{|N^+_{\sim^k}(v)|+1}{2}\right\rceil
\geq \left\lceil\frac{2k-1}{2}\right\rceil
\geq k\] vertices. But this contradicts the assumption that \(v\) and \(w\) are \(k\)-near twins.
\end{proof}

Even though the above lemma only restricts the number of \(k\)-near twins among the out-neighbors of some vertex,
we can use this to also restrict the number of \(k\)-near twins among the in-neighbors of each vertex or alternatively find an isomorphism-invariant partition.
\begin{lemma}\label{lem:near-twins:split-or-degree-bound}
Let \(D\) be an oriented digraph and \(k\geq 2\).
Then either, every vertex has at most \(4k-6\) many \(k\)-near twins among its neighbors \(N^+(v)\cup N^-(v)\),
or we can compute in polynomial time a nontrivial, isomorphism-invariant vertex-coloring on \(D\).
\end{lemma}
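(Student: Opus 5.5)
We use \Cref{lem:near-twins:out-degree} for the out-neighbours and handle the in-neighbours by a counting argument on the bipartite \(k\)-near-twin relation restricted to arcs. For each vertex \(v\), set
\[a(v)\coloneqq|\{w\in N^+(v)\colon v\sim^k w\}|,\qquad b(v)\coloneqq|\{w\in N^-(v)\colon v\sim^k w\}|.\]
Both quantities are isomorphism-invariant and computable in polynomial time. If the coloring \(v\mapsto(a(v),b(v))\) is nontrivial, or even just \(v\mapsto b(v)\), we output it and are done. So from now on assume \(b\) is constant, say \(b(v)=c\) for every vertex \(v\).

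Next we double count the set \(A\coloneqq\{(u,w)\colon uw\in E(D),\ u\sim^k w\}\) of arcs whose endpoints are \(k\)-near twins.
\begin{itemize}
\item Counting by tails gives \(|A|=\sum_v a(v)\).
\item Counting by heads gives \(|A|=\sum_v b(v)\).
\end{itemize}
The near-twin relation is symmetric, so both counts see the same pairs. Since \(D\) is oriented, each arc \(uw\) has a unique tail and head, so the two counts agree. By \Cref{lem:near-twins:out-degree}, \(a(v)\le 2k-3\) for every \(v\). Hence
\[cn=\sum_v b(v)=\sum_v a(v)\le (2k-3)n,\]
so \(c\le 2k-3\).

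In the non-split case, every vertex therefore has at most \(a(v)+b(v)\le (2k-3)+(2k-3)=4k-6\) many \(k\)-near twins in \(N^+(v)\cup N^-(v)\). This union is disjoint because \(D\) is oriented, so there is no double counting.

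The only subtle point is the meaning of ``nontrivial'' when \(D\) has a single vertex or the coloring is constant. In the split case the coloring \(b\) takes at least two values, so it is genuinely nontrivial. Its isomorphism invariance follows because \(\sim^k\) and the arc relation are preserved by isomorphisms. No step here seems to be a real obstacle; the only thing to get right is the averaging step, and that is exactly the double counting above.
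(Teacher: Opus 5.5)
Your proof is correct and is essentially the paper's argument: restrict to the arcs whose endpoints are \(k\)-near twins, bound out-degrees there by \(2k-3\) via \Cref{lem:near-twins:out-degree}, use that the sum of out-degrees equals the sum of in-degrees to bound the average in-degree, and otherwise color each vertex by its in-degree. The only difference is cosmetic. You split on whether the in-degree \(b\) is constant, while the paper splits on whether every in-degree is at most \(2k-3\); both case splits work.
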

\begin{proof}
Let \(E_k\coloneqq\{vw\in E(D)\colon v\sim^k_D w\}\) be the set of arcs induced by the \(k\)-near-twin relation,
and \(D_k\coloneqq (V(D),E_k)\) the digraph induced by this arc set.
By \Cref{lem:near-twins:out-degree}, this digraph has out-degree at most \(2k-3\). But since the sum of all out-degrees
in this digraph must equal the sum of all in-degrees, this implies that also the average in-degree is bounded by \(2k-3\).
If indeed all in-degrees are bounded by \(2k-3\), we are done. Otherwise, we color each vertex by its in-degree in \(D_k\),
which yields a nontrivial, isomorphism-invariant vertex-coloring.
\end{proof}
Note that the degree bound in \Cref{lem:near-twins:split-or-degree-bound} implies in particular that if \(\VCdim^+(D)\leq d\),
then every vertex has fewer than \(n^{1-\epsilon_d}\) many \(\left\lceil\frac{n^{1-\epsilon_d}}{4}\right\rceil\)-near twins among its neighbors.
If the outcome is a nontrivial vertex-coloring, this will later allow us to treat each color class separately\footnote{This is essentially a form of orbit-by-orbit processing.} via \Cref{cor:patched:nontrivial_coloring}. \section{Patched tournaments}
During our reductions in the following sections, it will sometimes happen that we want
to recursively compute the isomorphism cosets between two substructures, where these substructures
are no longer tournaments. However, the parts where these substructures fail to be tournaments
will always come with an already computed solvable group action on these parts.
In this section, we make this concept formal via the notion of \emph{patched tournaments}.

A \emph{patched tournament} is a \(4\)-tuple \(T=\left(D,\mathcal{P}, \left(\Iso_T(P,P')\right)_{P,P'\in \mathcal{P}},Q\right)\) such that
\begin{enumerate}
\item \(D\) is an oriented digraph,
\item \(\mathcal{P}\) is a partition of \(V(D)\) which we call the \emph{patch partition} and whose parts we call \emph{patches} such that
	\begin{enumerate}
	\item for each patch \(P\in\mathcal{P}\), the induced digraph \(D[P]\) is an independent set,
	\item for each pair of distinct vertices \(x,y\in V(D)\) which do not belong to the same patch, either \(xy\in E(D)\) or \(yx\in E(D)\),
	\end{enumerate}
\item for every two patches \(P,P'\in\mathcal{P}\), \(\Iso_T(P,P')\) is a (possibly empty) set of bijections \(P\to P'\) such that the family
	\(\left(\Iso_T(P,P')\right)_{P,P'\in \mathcal{P}}\) defines a solvable groupoid on the set of patches:
	\begin{enumerate}
	\item for every \(\phi\in\Iso_T(P,P')\) and \(\psi\in\Iso_T(P',P'')\), we have \(\psi\circ\phi\in\Iso_T(P,P'')\),
	\item for every \(\phi\in\Iso_T(P,P')\), we have \(\phi^{-1}\in\Iso_T(P',P)\),
	\item for every \(P\in\mathcal{P}\), the set \(\Aut_T(P)\coloneqq\Iso_T(P,P)\) is a solvable group under composition,
	\end{enumerate}
\item \(Q\) is a tournament on vertex set \(\mathcal{P}\), called the \emph{patch tournament}.
\end{enumerate}
Given a patched tournament \(T\), we also write \(V(T)\) and \(E(T)\), \(\mathcal{P}(T)\) and \(Q(T)\)
for the underlying vertex set, arc set, patch partition and patch tournament.

Note that the groupoid structure can be compactly represented by giving a generating set of the group \(\Aut_T(P)\) for each patch \(P\),
and giving a single bijection \(\phi_{P,P'}\in\Iso_T(P,P')\) for every two distinct parts \(P,P'\in\mathcal{P}\) for which \(\Iso_T(P,P')\) is non-empty.
The full, non-empty sets \(\Iso_T(P,P')\) can then be reconstructed as \(\Iso_T(P,P')=\phi_{P,P'}\circ\Aut_T(P)\).

We can also consider vertex- or arc-colored patched tournaments, where we additionally require all bijections in the sets \(\Iso_T(P,P')\)
to preserve vertex-colors (arc-colors are preserved automatically since there are no arcs within patches).

A \emph{compatible class of patched tournaments} is a class \(\mathcal{C}\) of patched tournaments
together with a family \(\left(\Iso_{T,T'}(P,P')\right)_{P\in \mathcal{P}(T),P'\in\mathcal{P}(T')}\)
for every two patched tournaments \(T,T'\in\mathcal{C}\)
which extends the solvable groupoid structures on \(\mathcal{P}(T)\) for every \(T\in\mathcal{C}\) to a solvable groupoid structure
on \(\dot\bigcup_{T\in\mathcal{C}}\mathcal{P}(T)\).

Note that we can interpret every (colored) tournament \(T\) as a patched tournament by taking the discrete vertex partition for \(\mathcal{P}\),
letting \(\Iso_T(\{v\},\{w\})\) be the set of (vertex-color preserving) bijections \(\{v\}\to\{w\}\) and setting \(Q(T)\coloneqq T\).
Similarly, we can interpret every class of (colored) tournaments as a compatible class of (colored) patched tournaments.

Next, we want to define substructures of patched tournaments.
Given a patched tournament \(T\) and a subset \(X\subseteq V(T)\), we define the \emph{patched subtournament} induced by~$X$, denoted \(T[X]\), to be the tournament induced by~$X$ with
\begin{enumerate}
\item patch partition \(\mathcal{P}[X]\coloneqq\{P\cap X\colon P\in\mathcal{P}\}\setminus\{\emptyset\},\)
\item isomorphism sets \(\Iso_{T[X]}(P\cap X,P'\cap X)\coloneqq\{\phi|_{P\cap X}\colon \phi\in\Trans_{\Iso_T(P,P')}(P\cap X\to P'\cap X)\}\),
\item patch tournament \(\left(\mathcal{P}[X],\{(P\cap X,P'\cap X)\in\mathcal{P}[X]^2\colon PP'\in E(Q(T))\}\right)\)
\end{enumerate}
Note that the operation of taking patched subtournaments is transitive, i.e., if \(X\subseteq Y\subseteq V(T)\),
then \(T[Y][X]=T[X]\).
Note further that because the coset \(\Iso_T(P,P')\) is solvable for every pair of patches \(P\) and \(P'\), the patched subtournament \(T[X]\)
can be computed in polynomial time by \Cref{lem:prelims:solvable_group_algorithms}.
Furthermore, using the analogous definition to define isomorphism sets \(\Iso_{T[X],T'[Y]}(P\cap X,P'\cap Y)\),
the subtournaments of patched tournaments from a compatible class again form a compatible class.

\subsection{Isomorphisms of patched tournaments}
Next, we define isomorphisms of patched tournaments.
If \(T\) and \(T'\) form compatible (colored) patched tournaments, then a bijection \(\phi\colon V(T)\to V(T')\) is an isomorphism if
\begin{enumerate}
\item \(\phi\) is an isomorphism of the underlying (colored) digraphs \((V(T),E(T))\) and \((V(T'),E(T'))\),
\item \(\phi\) preserves the patch partition, i.e., \(\mathcal{P}(T)^\phi=\mathcal{P}(T')\),
\item for every patch \(P\in\mathcal{P}(T)\) with image patch \(\phi(P)\in\mathcal{P}(T')\), we have \(\phi|_P\in\Iso_{T,T'}(P,\phi(P))\), and
\item \(\phi\) induces an isomorphism of patch tournaments \(Q(T)\to Q(T')\).
\end{enumerate}
We write \(\Iso(T,T')\) for the set of isomorphisms from \(T\) to \(T'\) and write \(\Aut(T)\coloneqq\Iso(T,T)\) for the set of automorphisms of \(T\).
Note that if \(T\) and \(T'\) are (colored) tournaments interpreted as patched tournaments, then an isomorphism of patched tournaments is exactly
an isomorphism of tournaments in the usual sense. Thus, the tournament isomorphism problem reduces to the patched tournament isomorphism problem.

It follows directly from the definition that this notion of isomorphisms is well-behaved in the following sense.
\begin{lemma}\label{lem:patched:local-isos_restriction_composition_inverses}
Isomorphisms between patched tournaments in a compatible class are preserved under composition, taking inverses and restriction.
In particular, each set \(\Aut(T)\) forms a group, and each of the sets \(\Iso(T,T')\) is either empty or a coset of \(\Aut(T)\).
\end{lemma}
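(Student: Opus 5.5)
We verify the four defining conditions of an isomorphism separately for compositions, inverses, and restrictions, using in each case the groupoid axioms of the compatible class. Let \(T,T',T''\) be patched tournaments in a compatible class \(\mathcal{C}\), and let \(\phi\in\Iso(T,T')\) and \(\psi\in\Iso(T',T'')\).

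\textbf{Composition.} We check the conditions for \(\psi\circ\phi\) one at a time.
\begin{enumerate}
\item Compositions of digraph isomorphisms are digraph isomorphisms, and vertex colors are preserved.
\item \(\mathcal{P}(T)^{\psi\circ\phi}=\mathcal{P}(T')^{\psi}=\mathcal{P}(T'')\).
\item For a patch \(P\), we have \((\psi\circ\phi)|_P=\psi|_{\phi(P)}\circ\phi|_P\). By the composition axiom of the extended groupoid on \(\dot\bigcup_{T\in\mathcal{C}}\mathcal{P}(T)\), this lies in \(\Iso_{T,T''}(P,\psi(\phi(P)))\).
\item The induced map on patches is the composition of the induced maps, hence an isomorphism \(Q(T)\to Q(T'')\).
\end{enumerate}

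\textbf{Inverses.} Here \(\phi^{-1}\) is a digraph isomorphism and maps \(\mathcal{P}(T')\) onto \(\mathcal{P}(T)\). For a patch \(P'=\phi(P)\), we have \(\phi^{-1}|_{P'}=(\phi|_P)^{-1}\), which lies in \(\Iso_{T',T}(P',P)\) by the inverse axiom. The induced patch map is the inverse of the induced isomorphism of patch tournaments, hence itself an isomorphism.

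\textbf{Restriction.} The statement should be read as follows: if \(X\subseteq V(T)\), then \(\phi|_X\) is an isomorphism \(T[X]\to T'[\phi(X)]\), where the induced patched subtournaments live in the induced compatible class.
\begin{enumerate}
\item Induced subdigraphs map isomorphically.
\item For each patch, \(\phi(P\cap X)=\phi(P)\cap\phi(X)\), so \(\mathcal{P}[X]\) is carried to \(\mathcal{P}'[\phi(X)]\).
\item The map \(\phi|_P\) lies in \(\Iso_{T,T'}(P,\phi(P))\) and sends \(P\cap X\) onto \(\phi(P)\cap\phi(X)\). It therefore lies in the transporter \(\Trans_{\Iso_{T,T'}(P,\phi(P))}(P\cap X\to \phi(P)\cap\phi(X))\), so its restriction \(\phi|_{P\cap X}\) belongs to \(\Iso_{T[X],T'[\phi(X)]}(P\cap X,\phi(P)\cap\phi(X))\) by definition.
\item The induced patch tournaments are defined by inheriting arcs \(PP'\in E(Q(T))\). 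Since \(\phi\) maps \(PP'\) to an arc of \(Q(T')\), the induced map on \(\mathcal{P}[X]\) preserves arcs. Both patch tournaments are tournaments and the induced map is a bijection, so it is an isomorphism.
\end{enumerate}
This restriction step is the only place needing real care, because one must check that the definition of the induced isomorphism sets via transporters matches. The check above shows that it does.

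\textbf{Group and coset structure.} The identity map is an automorphism: its restriction to each patch is the identity, which lies in the group \(\Aut_T(P)\), and it induces the identity on \(Q(T)\). Together with closure under composition and inverses, this makes \(\Aut(T)\) a group. Now suppose \(\Iso(T,T')\neq\emptyset\) and fix \(\phi\in\Iso(T,T')\). Any \(\psi\in\Iso(T,T')\) satisfies \(\phi^{-1}\circ\psi\in\Aut(T)\), and conversely \(\phi\circ\alpha\in\Iso(T,T')\) for every \(\alpha\in\Aut(T)\). Hence \(\Iso(T,T')=\phi\circ\Aut(T)\), a coset of \(\Aut(T)\).
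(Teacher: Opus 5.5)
Your proposal is correct and follows essentially the same approach as the paper, which gives no written proof and simply asserts that the lemma follows directly from the definitions; you carry out that verification condition by condition. In particular, you handle restriction via the transporter-based definition of the induced isomorphism sets, and the identity via its membership in the group \(\Aut_T(P)\); these are the only points needing care, and both are handled correctly.
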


Furthermore, our new notion is restrictive enough to still yield solvability of all groups in question.
\begin{lemma}\label{lem:patched:local-autos_solvable}
For every patched tournament \(T\), the group \(\Aut(T)\) is solvable.
Similarly, for every pair of compatible patched tournaments \(T\) and \(T'\),
the coset \(\Iso(T,T')\) is either empty or solvable.
\end{lemma}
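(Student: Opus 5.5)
Since \(\Aut(T)\) acts on the set of patches \(\mathcal{P}\), I will split off a normal subgroup and a quotient. Consider the homomorphism \(\pi\colon\Aut(T)\to\Sym(\mathcal{P})\), \(\phi\mapsto\) the permutation induced on the patches; this is well defined because isomorphisms preserve the patch partition. The plan is to show that the image of \(\pi\) and the kernel of \(\pi\) are both solvable. Then \(\Aut(T)\), as an extension of a solvable group by a solvable group, is solvable.

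\emph{Image.} By condition (d) of the definition of isomorphisms, \(\pi(\phi)\) is an automorphism of the patch tournament \(Q(T)\). So \(\pi(\Aut(T))\leq\Aut(Q(T))\). Automorphism groups of tournaments have odd order: an involution would swap some pair \(v,w\), which is impossible because the arc between them would have to be reversed. By the Feit--Thompson theorem, \(\Aut(Q(T))\) is therefore solvable, and so is its subgroup \(\pi(\Aut(T))\).

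\emph{Kernel.} An element \(\phi\in\ker\pi\) fixes every patch setwise. By condition (c), \(\phi|_P\in\Iso_T(P,P)=\Aut_T(P)\) for each \(P\). Since the patches partition \(V(T)\), the map \(\phi\mapsto(\phi|_P)_{P\in\mathcal{P}}\) is an injective homomorphism \(\ker\pi\hookrightarrow\prod_{P\in\mathcal{P}}\Aut_T(P)\). Each factor is solvable by the groupoid axiom (c), and direct products and subgroups of solvable groups are solvable. Hence \(\ker\pi\) is solvable. This gives the first claim.

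For the coset statement, \Cref{lem:patched:local-isos_restriction_composition_inverses} shows that a nonempty \(\Iso(T,T')\) equals \(\psi\circ\Aut(T)\) for any \(\psi\in\Iso(T,T')\). Since the group is solvable, the coset is solvable by definition. The only step needing care is the image step: one has to make sure that condition (d) really yields an automorphism of \(Q(T)\), a genuine tournament. This is immediate from the definition, so I expect no real obstacle. The argument is essentially the standard extension argument, with the oddness of tournament automorphism groups supplying solvability on top.
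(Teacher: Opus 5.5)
Your proof is correct, but it takes a different route from the paper's.

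\emph{The paper's route.} It first splits \(V(T)\) according to the orbits of \(\Aut(T)\) on the patches and embeds \(\Aut(T)\) into \(\prod_{O}\Aut(T[O])\). For each orbit it then uses transitivity on the patches, together with chosen isomorphisms \(\phi_{P,P'}\) between them, to embed \(\Aut(T[O])\) into the wreath product \(\Aut_T(P)\wr\Aut(T[O])^{\mathcal{P}}\). The top group of that wreath product lies in \(\Aut(Q(T[O]))\).

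\emph{Your route.} You work directly with the action \(\pi\colon\Aut(T)\to\Sym(\mathcal{P})\): its kernel embeds into \(\prod_{P}\Aut_T(P)\), its image lies in \(\Aut(Q(T))\), and closure of solvability under extensions finishes the argument.

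\emph{What each buys.} Your version is shorter and needs neither the orbit decomposition nor the connecting isomorphisms \(\phi_{P,P'}\). Its kernel step is also simpler: you only consider elements fixing every patch, whereas the wreath-product embedding needs the setwise stabilizer of a patch to act on it through \(\Aut_T(P)\). It does rely on closure under extensions. That is standard, but it is not among the closure properties the paper lists in its preliminaries (subgroups, direct products, wreath products). The paper's version stays within those listed properties. It also produces an explicit solvable overgroup of the same wreath-product shape that the algorithmic lemmas later use as encasing groups. Finally, you justify solvability of \(\Aut(Q(T))\) explicitly (no involutions, hence odd order, then Feit--Thompson), which the paper simply takes as known.
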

\begin{proof}
Because every automorphism of \(T\) preserves the patch partition,
\(\mathcal{P}(T)\) forms a block system for \(\Aut(T)\).
Let \(\mathcal{O}\) be the partition of \(V(T)\) which is induced by the orbit partition
of \(\Aut(T)\) on \(\mathcal{P}(T)\).
Then \(\Aut(T)\leq\prod_{O\in\mathcal{O}}\Aut(T[O])\) and since subgroups and products of solvable groups
are again solvable, it suffices to prove that all of the groups \(\Aut(T[O])\) are solvable.

For this, note that every part \(O\in\mathcal{O}\) is a union of patches, and denote
the solvable permutation group that the group \(\Aut(T[O])\) induces on the set of patches
\(\mathcal{P}[O]\coloneqq \{P\in\mathcal{P}\colon P\subseteq O\}\) by \(\Aut(T[O])^\mathcal{P}\).
Note that by the definition of the partition \(\mathcal{O}\), the group \(\Aut(T[O])^{\mathcal{P}}\)
is transitive on the set \(\mathcal{P}[O]\), and for every two patches \(P,P'\in\mathcal{P}[O]\),
we find some isomorphism \(\phi_{P,P'}\colon T[P]\to T[P']\). Thus, for every patch \(P\subseteq O\),
the group \(\Aut(T[O])\) embeds into the wreath product \(\Aut(T[P])\wr\Aut(T[O])^\mathcal{P}\).
But \(\Aut(T[P])=\Aut_T(P)\) is solvable by the definition of patched tournaments.
The group \(\Aut(T[O])^\mathcal{P}\) embeds into \(\Aut(Q(T[O]))\), and this group is the automorphism group of a tournament
and hence solvable. Thus, \(\Aut(T[O])\) embeds into a wreath product of solvable groups and is thus also solvable.

Similarly, if \(\Iso(T,T')\) is non-empty, then \(\Iso(T,T')=\phi\circ\Aut(T)\) for some isomorphism \(\phi\colon T\to T'\).
But since the group \(\Aut(T)\) is solvable, so is the coset \(\Iso(T,T')\).
\end{proof}

Finally, we note that the isomorphism coset between compatible patched tournaments on a single patch is trivially polynomial-time computable.
\begin{lemma}\label{lem:patched:iso_single_patch}
Let \(T\) and \(T'\) be compatible patched tournaments such that \(\mathcal{P}(T)\) and \(\mathcal{P}(T')\) both contain just a single patch.
Then \(\Iso(T,T')\) is computable in polynomial time.
\end{lemma}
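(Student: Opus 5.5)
With only one patch, each of \(T\) and \(T'\) has vertex set equal to its unique patch, say \(V(T)=P\) and \(V(T')=P'\). The plan is to show that \(\Iso(T,T')=\Iso_{T,T'}(P,P')\) and then observe that this set is already available from the compatible-class data. We check the four conditions in the definition of an isomorphism of patched tournaments one at a time for an arbitrary bijection \(\phi\colon P\to P'\).

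Condition (a) asks that \(\phi\) be an isomorphism of the underlying (colored) digraphs. Since \(D[P]\) is an independent set by the definition of patches, both \(T\) and \(T'\) have no arcs at all. Hence every bijection preserves arcs and non-arcs.

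Vertex colors, in the colored setting, are handled by condition (c). By definition, all bijections in the isomorphism sets preserve vertex colors.

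Condition (b) asks that \(\phi\) preserve the patch partition. The partitions are \(\{P\}\) and \(\{P'\}\), and \(\phi(P)=P'\), so this holds for every bijection.

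Condition (d) asks that \(\phi\) induce an isomorphism of the patch tournaments. These are tournaments on a single vertex, so the unique map \(P\mapsto P'\) is an isomorphism.

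This leaves condition (c), namely \(\phi|_P=\phi\in\Iso_{T,T'}(P,P')\). So \(\Iso(T,T')\) is exactly \(\Iso_{T,T'}(P,P')\).

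For the algorithmic part, we use the compact representation of the compatible class: a generating set of the solvable group \(\Aut_T(P)\), together with a representative \(\phi_{P,P'}\) whenever \(\Iso_{T,T'}(P,P')\neq\emptyset\). The output is then the coset \(\phi_{P,P'}\circ\Aut_T(P)\), or \(\emptyset\) if no representative exists. This takes polynomial time, in fact linear time in the input size.

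There is no real obstacle here. The only care needed is to note that the vertex-color condition is built into the groupoid, so no extra filtering of \(\Iso_{T,T'}(P,P')\) is required.
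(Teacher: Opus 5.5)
Your proposal is correct and matches the paper's proof. The paper simply notes that \(\Iso(T,T')=\Iso_{T,T'}(V(T),V(T'))\) and that this set is given as part of the compatibility structure; your condition-by-condition check spells out why that identity holds.
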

\begin{proof}
We have \(\Iso(T,T')=\Iso_{T,T'}(V(T),V(T'))\) which is given as part of the compatibility structure.
\end{proof}

Next, we will state the two main lemmas that we will use to compute isomorphism cosets between patched tournaments whenever we have already
found an encasing solvable coset.
\begin{lemma}\label{lem:patched:solvable_encasing_to_iso}
Given two compatible colored patched tournaments \(T\) and \(T'\)
and a solvable coset \(\phi\circ\Gamma\) of bijections \(V(T)\to V(T')\) satisfying~\( \phi\circ\Gamma\geq\Iso(T,T')\),
we can compute a generating set for \(\Iso(T,T')\) in polynomial time.
\end{lemma}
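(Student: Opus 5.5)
We will cut the encasing coset \(\phi\circ\Gamma\) down to \(\Iso(T,T')\) by imposing, one at a time, the four conditions in the definition of an isomorphism of patched tournaments. Each condition is either a hypergraph-transporter instance or a coset-intersection instance, so \Cref{lem:prelims:solvable_group_algorithms} applies. The coset stays solvable throughout, since every step replaces it by a subcoset. Whenever a step returns the empty set, we output \(\Iso(T,T')=\emptyset\).

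The steps, in order, are as follows.

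First, we impose the vertex colors. We intersect with the transporter of the color classes, which are encoded as an edge-colored hypergraph of singletons or as a partition.

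Second, we impose the arc relation. We view the arc sets \(E(T)\) and \(E(T')\) as hypergraphs of \(2\)-element sets, colored by arc color. Since \(D\) is oriented, we also need the direction. We encode it by working on \(\Omega=V(T)\) and using the colored hypergraph \(\{\{v,w\}\colon vw\in E(T)\}\), with an extra color distinguishing the two orientations. Alternatively, we apply the transporter to the induced action on ordered pairs: \(\Gamma\) acts faithfully on \(V(T)^2\), and a set transporter for \(E(T)\subseteq V(T)^2\) in this action pulls back to a subcoset of \(\phi\circ\Gamma\). After this step the coset consists of digraph isomorphisms.

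Third, we impose the patch partition \(\mathcal{P}(T)^\gamma=\mathcal{P}(T')\). This is a hypergraph-transporter instance with hyperedges the patches.

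Fourth, we impose the patch tournament. The current coset now induces a solvable coset of bijections \(\mathcal{P}(T)\to\mathcal{P}(T')\). We take the transporter of \(E(Q(T))\) to \(E(Q(T'))\) within this induced action, using the same pair encoding as in the second step, and then take the preimage in the coset. Computing the preimage is routine with generating sets: we use the kernel of the action together with preimages of generators.

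Last, we enforce \(\gamma|_P\in\Iso_{T,T'}(P,\gamma(P))\) for every patch \(P\). This step is the main obstacle, because the constraint couples the local action on every patch. We handle it with a single coset intersection. Consider the set
\[\Psi=\{\psi\colon V(T)\to V(T')\text{ bijection}\colon \psi\text{ maps patches to patches and }\psi|_P\in\Iso_{T,T'}(P,\psi(P))\text{ for all }P\}.\]
This set is a coset of the group \(\Delta=\{\delta\in\Sym(V(T))\colon \delta \text{ preserves }\mathcal{P}(T),\ \delta|_P\in\Iso_T(P,\delta(P))\}\). We can write down generators of \(\Delta\) from the groupoid data. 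These are the generators of each \(\Aut_T(P)\) together with the permutations induced by \(\phi_{P,P'}\) and \(\phi_{P,P'}^{-1}\) swapping \(P\) and \(P'\), or 3-cycles of patches, so that all patch permutations respecting the groupoid are generated. In short, \(\Delta\) is a product over isomorphism classes of patches of wreath-like groups \(\Aut_T(P)\wr\Sym(\text{class})\). A representative of \(\Psi\) is obtained by matching each patch of \(T\) to one of \(T'\) in the same groupoid class, provided the class sizes agree; otherwise \(\Psi=\emptyset\).

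\(\Delta\) itself need not be solvable. This does not matter, because \Cref{lem:prelims:solvable_group_algorithms} allows coset intersection of a solvable coset with an arbitrary coset. We therefore compute \(\phi'\circ\Gamma'\cap\Psi\), where \(\phi'\circ\Gamma'\) is the coset left after the fourth step. The result is exactly \(\Iso(T,T')\). The part to check carefully is that intersecting with \(\Psi\) is valid independently of the earlier steps, and that \(\Psi\) really is a coset whose generators can be computed. The latter relies on the closure of the groupoid under composition and inverse.
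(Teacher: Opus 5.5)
Your proposal is correct and follows essentially the paper's route. You build the non-solvable coset of bijections that map patches to patches compatibly with the groupoid, intersect it with the solvable encasing coset by coset intersection, and enforce colors, arcs and the (lifted) patch-tournament arcs via set/hypergraph transporters; the paper merely does the intersection first rather than last, which makes no difference. Drop your first encoding of orientation, since a color on an unordered pair \(\{v,w\}\) cannot record which endpoint is the tail, and keep your second one via the induced action on ordered pairs in \(V(T)^2\).
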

\begin{proof}
We compute the coset~$\psi\circ\Delta$ of those bijections~$\tau\colon V(T)\rightarrow V(T')$ for which for each patch~$P \in \mathcal{P}(T)$ we have that~$\tau(P)$ is a patch~$P'$ of~$T'$ and
have~$\tau|_{P}\in \Iso(T[P],T'[P'])$. This coset~$\psi\circ\Delta$ is obtained as follows. To compute the bijection~$\psi$ we pick an arbitrary bijection~$\overline{\psi}$ from the set of patches of~$T$ to the set of patches of~$T'$ that assigns to every patch~$P$ of~$T$ a patch~$P'=\overline{\psi}(P)$ of the same isomorphism type in~$T'$. If no such bijection exists, then~$\Iso(T,T')$ is empty and we are done.
Otherwise, we choose for each~$P$ an arbitrary bijection in~$\Iso(T[P],T'[\overline{\psi}(P)])$. The map~$\psi$ then maps vertices in~$P$ to vertices of~$\overline{\psi}(P)$ according to this bijection.
The group~$\Delta$ is generated as follows:
For each pair of distinct patches~$P_1,P_2\in \mathcal{P}(T)$ for which \(\Iso_T(P_1,P_2)\) is non-empty, pick an element~$\tau_{P_1,P_2}\in\Iso_T(P_1,P_2)$. Extend~$\tau_{P_1,P_2}$ to~$\overline{\tau_{P_1,P_2}}$ of~$V(T)$ by mapping elements of~$P_2$ back to~$P_1$ according to~$\tau_{P_1,P_2}^{-1}$ and fixing elements outside of~$P_1$ and~$P_2$.
For each patch~$P_1$, we extend the generators~$\rho_{P_1}$ of~$\Iso_T(P_1,P_1)$ to a permutation of~$\widehat{\rho_{P_1}}$ of~$V(T)$ by fixing all vertices outside of~$P_1$.
Then~$\Delta$ is generated by the elements~$\overline{\tau_{P_1,P_2}}$ over all~$P_1,P_2\in \mathcal{P}(T)$ where \(\Iso_T(P_1,P_2)\neq\emptyset\)
and the extensions of generators~$\widehat{\rho_{P_1}}$ over all~$P_1\in \mathcal{P}(T)$.
 
To compute~$\Iso(T,T')$ we compute first~$\tau\circ\Theta\coloneqq \phi\circ\Gamma\cap \psi\circ\Delta$ using Lemma~\ref{lem:prelims:solvable_group_algorithms}. Note that~$\Theta$ is solvable.
Inside~$\tau\circ\Theta$, we then compute the set transporter for the arc sets of both patched tournaments, for every corresponding set of vertex- or arc-color classes,
and for the (lifted) arc set of the underlying patch tournaments \(Q(T)\) and \(Q(T')\).
The resulting coset is then exactly \(\Iso(T,T')\).
\end{proof}

The lemma above in particular implies that we can handle the color classes of vertex-colored patched tournaments independently.
\begin{corollary}\label{cor:patched:nontrivial_coloring}
Let \(T\) and \(T'\) be two compatible arc-colored patched tournaments and let \(\chi\) and \(\chi'\) be vertex-colorings on \(T\) and \(T'\) respectively.
Then the computation of \(\Iso((T,\chi),(T',\chi'))\) reduces in polynomial time to computing \(\Iso(T[\chi^{-1}(c)],T'[(\chi')^{-1}(c)])\) for each
vertex color \(c\).
\end{corollary}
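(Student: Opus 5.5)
The plan is to build a solvable encasing coset for \(\Iso((T,\chi),(T',\chi'))\) directly from the cosets of the color classes, and then finish with \Cref{lem:patched:solvable_encasing_to_iso}.

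First, check that the color classes have matching sizes: for every vertex color \(c\), \(|\chi^{-1}(c)|=|(\chi')^{-1}(c)|\). If this fails, the isomorphism set is empty and we stop. Otherwise, for each color \(c\), make the assumed recursive call to compute \(\Iso_c\coloneqq\Iso(T[\chi^{-1}(c)],T'[(\chi')^{-1}(c)])\). Here \(T[\chi^{-1}(c)]\) is the patched subtournament defined above, and it is compatible with its counterpart in \(T'\). If some \(\Iso_c\) is empty, output the empty set.

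Otherwise each \(\Iso_c\) is of the form \(\phi_c\circ\Gamma_c\), and \(\Gamma_c\) is solvable by \Cref{lem:patched:local-autos_solvable}. Form the product coset \(\phi\circ\Gamma\), where \(\phi\) is the union of the \(\phi_c\) and \(\Gamma=\prod_c\Gamma_c\) acts on \(V(T)=\dot\bigcup_c\chi^{-1}(c)\). Since direct products of solvable groups are solvable, this coset is solvable. A generating set for \(\Gamma\) is simply the union of generating sets of the \(\Gamma_c\), each extended by the identity outside its color class.

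The key check is that \(\phi\circ\Gamma\) really contains \(\Iso((T,\chi),(T',\chi'))\). Let \(\sigma\) be an isomorphism of the colored patched tournaments. It preserves colors, so it maps \(\chi^{-1}(c)\) onto \((\chi')^{-1}(c)\). By \Cref{lem:patched:local-isos_restriction_composition_inverses}, its restriction to that class is an isomorphism of the induced patched subtournaments.

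This is the one place that needs care, because restriction must be compatible with the definition of induced patched subtournaments. I would verify the four defining conditions one by one.
\begin{enumerate}
\item Arcs and arc colors are preserved, since the restriction is just a sub-bijection of an isomorphism.
\item Patches \(P\cap X\) are mapped to patches \(\sigma(P)\cap X'\).
\item The restriction \(\sigma|_{P\cap X}\) comes from \(\sigma|_P\in\Iso_{T,T'}(P,\sigma(P))\). That map transports \(P\cap X\) to \(\sigma(P)\cap X'\), so its restriction lies in the induced isomorphism set by definition.
\item The induced patch tournament is the restriction of \(Q\), so it is preserved as well.
\end{enumerate}
Hence \(\sigma|_{\chi^{-1}(c)}\in\Iso_c\) for every \(c\), and therefore \(\sigma\in\phi\circ\Gamma\).

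Finally, apply \Cref{lem:patched:solvable_encasing_to_iso} to \(T,T'\) with vertex colors \(\chi,\chi'\) and the encasing coset \(\phi\circ\Gamma\). This yields \(\Iso((T,\chi),(T',\chi'))\) in polynomial time. All other steps, namely assembling the generators and intersecting within the coset, are polynomial, so the whole computation is a polynomial-time reduction to the color-class instances.
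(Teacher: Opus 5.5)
Your proposal is correct and follows the paper's proof. The paper likewise takes the product of the per-color cosets \(\phi_c\circ\Aut(T[\chi^{-1}(c)])\) as a solvable encasing coset and finishes with \Cref{lem:patched:solvable_encasing_to_iso}; your color-class size check and your verification that restricted isomorphisms lie in the induced patched-subtournament isomorphism sets spell out steps the paper leaves implicit.
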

\begin{proof}
If any of the cosets \(\Iso(T[\chi^{-1}(c)],T'[(\chi')^{-1}(c)])\) is empty, then so is \(\Iso(T,T')\).
Thus, assume that for each vertex color, \(\Iso(T[\chi^{-1}(c)],T'[(\chi')^{-1}(c)])=\phi_c\circ\Aut(T[\chi^{-1}(c)])\) for some bijection \(\phi_c\).
Now, define \(\phi\colon V(T)\to V(T')\) by setting \(\phi(v)\coloneqq\phi_{\chi(v)}(v)\) and note that
\(\Iso(T,T')\leq\phi\circ\prod_{c\in\image(\chi)}\Aut(T[\chi^{-1}(c)])\). Thus, we have found a solvable encasing coset for \(\Iso(T,T')\),
which allows us to compute a generating set for \(\Iso(T,T')\) in polynomial time via \Cref{lem:patched:solvable_encasing_to_iso}.
\end{proof}

In addition to handling color classes separately, we can also handle the parts of an isomorphism-invariant partition separately
if we in addition have access to an appropriate solvable permutation group on the set of parts of the partition.
 
\begin{lemma}\label{lem:patched:iso_via_block_system}
Let \(T\) and \(T'\) be two compatible vertex- and arc-colored patched tournaments and \(\mathcal{B}\) and \(\mathcal{B}'\) two isomorphism-invariant partitions
of \(V(T)\) and \(V(T')\) respectively. 
Assume that for all parts \(B\in\mathcal{B}\) and \(B'\in\mathcal{B}'\), we are given the isomorphism coset \(\Iso(T[B],T'[B'])\),
and further, we are given a (possibly empty) solvable coset \(\phi_\mathcal{B}\circ\Gamma_\mathcal{B}\) of bijections \(\mathcal{B}\to\mathcal{B}'\) such that every isomorphism
\(\phi\colon T\to T'\) induces a bijection \(\phi|_\mathcal{B}\colon\mathcal{B}\to\mathcal{B}'\) within \(\phi_\mathcal{B}\circ\Gamma_\mathcal{B}\).
Then we can compute \(\Iso(T,T')\) in polynomial time.
\end{lemma}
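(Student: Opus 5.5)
The plan is to build a solvable encasing coset for \(\Iso(T,T')\) and then apply \Cref{lem:patched:solvable_encasing_to_iso}. The coset combines the given coset \(\phi_\mathcal{B}\circ\Gamma_\mathcal{B}\) on the parts with the given isomorphism cosets between parts, in the wreath-product fashion described in \Cref{sec:prelims}.

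\textbf{Step 1: reduce to one isomorphism type of parts.} Because \(\mathcal{B}\) is isomorphism-invariant, each \(B\in\mathcal{B}\) is isomorphic as a patched subtournament, namely \(T[B]\), to some parts of \(\mathcal{B}\) and \(\mathcal{B}'\). Using the given cosets \(\Iso(T[B],T[\tilde B])\), \(\Iso(T[B],T'[B'])\) and their compositions (compositions of inverses of isomorphisms with isomorphisms are available by \Cref{lem:patched:local-isos_restriction_composition_inverses}), I group the parts of \(\mathcal{B}\cup\mathcal{B}'\) into isomorphism classes. Every isomorphism \(T\to T'\) restricts to isomorphisms \(T[B]\to T'[\phi(B)]\), so it maps each class of \(\mathcal{B}\) onto the corresponding class of \(\mathcal{B}'\). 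Hence I intersect \(\phi_\mathcal{B}\circ\Gamma_\mathcal{B}\) with the coset of class-preserving bijections \(\mathcal{B}\to\mathcal{B}'\), which is a direct product of symmetric groups composed with a fixed bijection. This uses coset intersection from \Cref{lem:prelims:solvable_group_algorithms}. The result is still solvable and still contains all induced maps \(\phi|_\mathcal{B}\). If the class sizes differ, or the intersection is empty, then \(\Iso(T,T')=\emptyset\).

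\textbf{Step 2: construct the encasing coset.} In each class I fix a representative \(B_0\) and choose \(\sigma_B\in\Iso(T[B_0],T[B])\) for every part \(B\) of \(T\) in the class, and \(\sigma'_{B'}\in\Iso(T[B_0],T'[B'])\) for every part \(B'\) of \(T'\). Write \(\Aut(T[B_0])\wr\Gamma_\mathcal{B}\) for the wreath product acting on \(\dot\bigcup_B B\) through the bijections \(\sigma_B\). The bottom group \(\Aut(T[B_0])\) is solvable by \(\Cref{lem:patched:local-autos_solvable}\), so this wreath product is solvable. Taking the product over all classes and composing with the bijection \(\hat\phi\) that sends \(B\) to \(\phi_\mathcal{B}(B)\) via \(\sigma'_{\phi_\mathcal{B}(B)}\circ\sigma_B^{-1}\) gives a solvable coset \(\hat\phi\circ\Gamma\) of bijections \(V(T)\to V(T')\).

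\textbf{Step 3: check containment.} Let \(\phi\in\Iso(T,T')\). Then \(\phi|_\mathcal{B}=\phi_\mathcal{B}\circ\gamma\) for some \(\gamma\in\Gamma_\mathcal{B}\). For each part \(B\), the map \(\sigma'^{-1}_{\phi(B)}\circ\phi|_B\circ\sigma_B\) lies in \(\Aut(T[B_0])\), since \(\phi|_B\in\Iso(T[B],T'[\phi(B)])\) and the compatibility structure makes these compositions well defined. So \(\phi\) lies in \(\hat\phi\circ\Gamma\). Applying \Cref{lem:patched:solvable_encasing_to_iso} to \(\hat\phi\circ\Gamma\) then yields \(\Iso(T,T')\), and the vertex and arc colors are handled there by the set transporters.

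The main obstacle is the bookkeeping in Step 3 at the level of the wreath product. The top action must be taken through \(\phi_\mathcal{B}\circ\Gamma_\mathcal{B}\) and not through \(\Gamma_\mathcal{B}\) alone, and the twisting bijections \(\sigma\) have to be chosen consistently so that the bottom components really land in \(\Aut(T[B_0])\). A further point is that \(\Gamma_\mathcal{B}\) may move parts between different classes, which is why Step 1 intersects with the class-preserving coset first. Every computation involved (generators of wreath products, coset intersection) is polynomial time by \Cref{lem:prelims:solvable_group_algorithms}.
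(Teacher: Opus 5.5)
Your proposal is correct and follows essentially the same route as the paper. Both arguments restrict the top coset to bijections that preserve isomorphism types of parts, then build the solvable wreath-type coset from the part isomorphism cosets, and finish with \Cref{lem:patched:solvable_encasing_to_iso}. The only cosmetic difference is that the paper does the restriction with a colored hypergraph transporter, where you use coset intersection with a product of symmetric groups.

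One small point needs fixing. The cosets \(\Iso(T[B],T[\tilde B])\) within \(T\), and the generators of \(\Aut(T[B_0])\), are not part of the hypothesis. First return \(\emptyset\) if some part has no isomorphic counterpart on the other side. Then, as the paper does, obtain these cosets and generators by factoring through some \(B'\in\mathcal{B}'\) with \(\Iso(T[B],T'[B'])\neq\emptyset\).
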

\begin{proof}
We want to use the given data to either directly decide that \(\Iso(T,T')\) is empty, or compute a solvable coset \(\phi\circ\Gamma\geq\Iso(T,T')\).
In the latter case, the claim then follows from \Cref{lem:patched:solvable_encasing_to_iso}.

For all pairs of parts for which \(T[B]\) and \(T'[B']\) are isomorphic, we write \(\Iso(T[B],T'[B'])=\phi_{B,B'}\circ\Aut(T[B])\).
If for some part \(B\in\mathcal{B}\), all of the sets \(\Iso(T[B],T'[B'])\) with \(B'\in\mathcal{B}'\) are empty, then so is \(\Iso(T,T')\) and we are done.
Thus, we may assume otherwise, which in particular allows us to extract a generating set of \(\Aut(T[B])\) for each part \(B\in\mathcal{B}\).

If we have two parts \(B_1,B_2\in\mathcal{B}\), then every isomorphism \(T[B_1]\to T[B_2]\) factors through every part \(B'\in\mathcal{B}'\)
for which \(\Iso(T[B_1],T'[B'])\) is non-empty. Thus, we can also compute the coset \(\Iso(T[B_1],T[B_2])\) for every two parts \(B_1,B_2\in\mathcal{B}\),
and similarly the coset \(\Iso(T'[B_1'],T'[B_2'])\) for every two parts \(B_1',B_2'\in\mathcal{B}'\).

We color the parts in~$\mathcal{B}$ and~$\mathcal{B}'$ according to their isomorphism type. Then, using the colored hypergraph transporter problem, we compute the subcoset \(\phi'_\mathcal{B}\circ\Gamma'_\mathcal{B}\leq \phi_\mathcal{B}\circ\Gamma_\mathcal{B}\)
of those bijections that send parts to parts of the same isomorphism type. If this coset is instead empty, then so is \(\Iso(T,T')\) and we are again done. So assume otherwise.

Then, we compute the coset~$\psi\circ\Delta$ of those bijections that map each element of~$B\in \mathcal{B}$ to an element~$B'\in \mathcal{B}'$ so the restriction to~$B$ is contained in~$\Iso(T[B],T'[B'])$ and so that the induced map~$\mathcal{B}\to\mathcal{B}'$ is contained in \(\phi_\mathcal{B}'\circ\Gamma_\mathcal{B}'\). This coset can be computed similarly to the construction at the beginning of the proof of Lemma~\ref{lem:patched:solvable_encasing_to_iso} with the partitions~\(\mathcal{B}\) and \(\mathcal{B}'\) replacing the patch partitions. 

The coset~$\psi\circ\Delta$ is solvable.   
Thus, the claim now follows from \Cref{lem:patched:solvable_encasing_to_iso}.
\end{proof}

In particular, the lemma above allows us to carry over the \(n^{O(\log n)}\) tournament isomorphism algorithm from \cite{DBLP:conf/stoc/BabaiL83} to patched tournaments.
\begin{lemma}\label{lem:patched:patched_tournament_iso_n^logn}
Given two compatible colored patched tournaments \(T\) and \(T'\) of order \(n\), we can compute \(\Iso(T,T')\) in time \(n^{O(\log n)}\).
\end{lemma}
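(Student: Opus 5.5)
We adapt the Babai--Luks divide-and-conquer strategy for tournaments to patched tournaments. The recursion is driven by degree refinement on the underlying oriented digraph \(D\), and the pieces are reassembled with \Cref{lem:patched:iso_via_block_system}. Let \(t(n)\) denote the running time on compatible colored patched tournaments of order \(n\).

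\emph{Step 1: base cases and color splitting.} If \(T\) has a single patch, we are done by \Cref{lem:patched:iso_single_patch}. Otherwise, we first refine the vertex coloring isomorphism-invariantly. We color each vertex by its out-degree in \(D\), by the size of its patch, by the out-degree of its patch in \(Q(T)\), and we iterate this refinement (color refinement, counting colors in out-neighbourhoods in \(D\) and in \(Q\)) until it is stable. If the stable coloring is nontrivial, \Cref{cor:patched:nontrivial_coloring} reduces the problem to the color classes. Each class has fewer than \(n\) vertices, and there are at most \(n\) classes. Since the classes partition \(V(T)\), the recursive calls cost \(\sum t(n_i)\) with \(\sum n_i = n\), plus the cost of comparing classes pairwise between \(T\) and \(T'\).

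\emph{Step 2: the regular case and individualization.} Suppose all vertices have the same color. Then all patches have the same size \(s\), and \(Q(T)\) is a regular tournament on \(m=n/s\) patches. In particular \(m\) is odd, so \(Q(T)\) is regular, and every patch has \((m-1)/2\) out-neighbouring patches. Since every vertex \(v\) in a patch \(P\) is adjacent to every vertex outside \(P\), its out-degree in \(D\) is \(\deg^+(v) = (n-s)/2\) on average. So, unless the coloring has split, \(N^+(v)\) and \(N^-(v)\) each have size at most \((n-s)/2 \le n/2\).

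We now individualize a single vertex \(v\) of \(T\) and try every vertex \(v'\) of \(T'\); this costs a factor of \(n\). After individualization, \(V(T)\) splits isomorphism-invariantly into three parts: the patch \(P_v\), the set \(N^+(v)\), and the set \(N^-(v)\). Each of these has size at most \(n/2\), except possibly \(P_v\). If \(P_v\) is large, we split instead at the patch level: \(P_v\) itself is handled via its groupoid by \Cref{lem:patched:iso_single_patch}. We then recurse on \(T[N^+(v)]\) and \(T[N^-(v)]\), obtaining a coloring that separates \(N^+(v)\) from \(N^-(v)\). The isomorphisms mapping \(v\mapsto v'\) are obtained through \Cref{cor:patched:nontrivial_coloring}. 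Taking the union over all choices \(v'\) gives \(\Iso(T,T')\); we only need one successful \(v'\) together with \(\Aut(T)\), and \(\Aut(T)\) itself is computed by the same procedure. This gives the recurrence \(t(n)\le n\cdot 3\,t(n/2)\cdot\mathrm{poly}(n)\), which solves to \(n^{O(\log n)}\).

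\emph{Main obstacle.} The delicate point is computing \(\Aut(T)\) from the stabilizer data obtained after individualizing \(v\). The standard Babai--Luks route does this with the Luks solvable-group machinery: we view \(\Aut(T)\) as encased in the solvable group generated by \(\Aut(T)_v\) together with the transversal elements found for each \(v'\) in the orbit of \(v\). Alternatively, we can avoid this by working with cosets throughout: \(\Iso(T,T')\) is the union over \(v'\) of the cosets \(\Iso((T,v),(T',v'))\). We pick one nonempty coset \(\sigma\circ\Aut(T)_v\), and the group \(\Aut(T)\) is generated by \(\Aut(T)_v\) together with the elements \(\Iso((T,v),(T,w))\) for all \(w\). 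These are all computed by the same recursion.

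We must also check that the recursive substructures remain compatible patched tournaments. This follows from the closure of compatible classes under taking patched subtournaments, noted earlier. Finally, we must check that the overhead in Step 1, comparing classes across \(T\) and \(T'\), fits the bound; it adds only polynomial factors per level of recursion.
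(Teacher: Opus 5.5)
Your argument is correct, but it takes a genuinely different route from the paper. The paper does no divide-and-conquer on the patched structure at all. It applies \Cref{lem:patched:iso_via_block_system} with the patch partitions $\mathcal{P}(T)$ and $\mathcal{P}(T')$ as the invariant partitions. The cosets between single patches come for free from the groupoid (\Cref{lem:patched:iso_single_patch}). The required solvable coset of bijections $\mathcal{P}(T)\to\mathcal{P}(T')$ is just $\Iso(Q(T),Q(T'))$, computed by running the ordinary Babai--Luks tournament algorithm as a black box on the patch tournaments. So all quasipolynomial work happens on $Q(T)$, which even gives the sharper bound $|\mathcal{P}(T)|^{O(\log|\mathcal{P}(T)|)}\cdot n^{O(1)}$.

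You instead re-run the Babai--Luks individualize-and-halve recursion on the underlying oriented digraph. Uniform out-degree and patch size $s$ force $|N^{+}(v)|=|N^{-}(v)|=(n-s)/2$. The pieces $\{v\}$, $P_v\setminus\{v\}$, $N^+(v)$, $N^-(v)$ are then glued via \Cref{cor:patched:nontrivial_coloring}. This is legitimate because patched subtournaments stay in the compatible class and have solvable automorphism groups (\Cref{lem:patched:local-autos_solvable}), so the product encasing coset is solvable. Your route is self-contained, reproving the tournament case along the way, but it is longer and needs more bookkeeping.

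Two points should be tightened.

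First, the ``encasing'' phrasing in your main-obstacle paragraph is off. The group generated by $\Aut(T)_v$ and the transversal elements is $\Aut(T)$ itself, and you do not need it separately. Each nonempty $\Iso((T,v),(T',v'))$ already comes with generators of $\Aut(T)_v$, so $\Iso(T,T')$ is assembled directly from these cosets, as in the chromatic-number section of the paper.

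Second, the running-time accounting should not be ``polynomial factors per level'', because chains of color splits can have linear depth. Count nodes of the recursion tree instead, with additive polynomial work per node. Color-splitting children partition the vertex set, so superadditivity of $n^{c\log n}$ handles that branch. Halving nodes contribute a factor $O(n)$ at most $\log_2 n$ times along any branch. This is the same kind of node-counting the paper uses in the proof of \Cref{thm:patched_tournament_isomorphism}.
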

\begin{proof}
If we apply \Cref{lem:patched:iso_via_block_system} to the patch partitions \(\mathcal{P}(T)\) and \(\mathcal{P}(T')\) themselves,
it only remains to compute a solvable coset of bijections \(\mathcal{P}\to\mathcal{P}'\) which contains every bijection induced by an isomorphism.
But such a coset is given by the coset \(\Iso(Q(T),Q(T'))\) of isomorphisms between the patch tournaments of \(T\) and \(T'\) respectively.
Using the \(n^{O(\log(n))}\) algorithm from \cite{DBLP:conf/stoc/BabaiL83}, we can compute this coset in time \(|\mathcal{P}|^{O(\log|\mathcal{P}|)}\leq n^{O(\log n)}\).
\end{proof}

\subsection{Patched tournaments and VC dimension}

Given a patched tournament \(T\) and a partition \(\mathcal{Q}\) of \(V(T)\) which is a coarsening of the patch partition \(\mathcal{P}(T)\),
we write \(T-\mathcal{Q}\) for the digraph obtained from (the underlying digraph of) \(T\) by deleting all arcs within each part of \(\mathcal{Q}\). If we can additionally define on \(T-\mathcal{Q}\) the structure of a patched tournament again,
we say that \(T-\mathcal{Q}\) is \emph{obtained by patching \(T\) with \(\mathcal{Q}\)}.
Importantly, our recursion will repeatedly compute a coarser patch partition along with isomorphism cosets for every pair of new patches
and then patch the old patched tournament with the new set of patches. In order to be able to apply this strategy, however,
we need to make sure that the VC dimension (that is, the VC dimension of the underlying digraph) does not repeatedly increase while patching.
To do this, we say that the \emph{patched VC dimension \(\pVCdim^+(T)\)} of a patched tournament \(T\) is the maximal VC dimension
of all patched tournaments obtained by patching \(T\). One of the crucial combinatorial insights of our algorithm is that the
patched VC dimension of some patched tournament is bounded in its ordinary VC dimension.
\begin{lemma}\label{lem:patched:VC-dim_bound}
Let \(T\) be a patched tournament with \(\VCdim^+(T)\leq d\), and \(\mathcal{Q}\) a coarsening of \(\mathcal{P}(T)\).
Then \(\VCdim^+(T-\mathcal{Q})\leq O(d\log d)\).
\end{lemma}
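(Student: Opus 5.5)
The plan is a counting argument: compare the traces of out-neighborhoods of \(T-\mathcal{Q}\) on a shattered set with the traces of out-neighborhoods of \(T\). Then apply the Sauer-Shelah Lemma (\Cref{lem:prelims:sauer-shelah}) to the latter.

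First, fix notation. For a vertex \(v\), write \(Q(v)\in\mathcal{Q}\) for the part of \(\mathcal{Q}\) containing \(v\). Since \(T-\mathcal{Q}\) arises from the underlying digraph of \(T\) only by deleting the arcs inside parts of \(\mathcal{Q}\), we have
\[N^+_{T-\mathcal{Q}}(v)=N^+_T(v)\setminus Q(v)\qquad\text{for every }v\in V(T).\]
This identity holds whether or not \(T-\mathcal{Q}\) carries a patched-tournament structure; only the underlying digraph matters for \(\VCdim^+\).

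Now let \(S\subseteq V(T)\) with \(|S|=m\) be shattered by \(\mathcal{N}^+(T-\mathcal{Q})\). For each vertex \(v\), the trace \(N^+_{T-\mathcal{Q}}(v)\cap S\) equals \((N^+_T(v)\cap S)\setminus(Q(v)\cap S)\). So it is determined by two data:
\begin{enumerate}
\item the trace \(N^+_T(v)\cap S\), and
\item the set \(Q(v)\cap S\).
\end{enumerate}
By \Cref{lem:prelims:sauer-shelah} applied to \(\mathcal{N}^+(T)\), which has VC dimension at most \(d\), the first datum takes at most \(1+m^d\) values. The second datum is either empty or the intersection of \(S\) with one of the at most \(m\) parts of \(\mathcal{Q}\) that meet \(S\). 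So it takes at most \(m+1\) values. Hence the number of distinct traces of \(\mathcal{N}^+(T-\mathcal{Q})\) on \(S\) is at most \((1+m^d)(m+1)\).

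Since \(S\) is shattered, this number must be at least \(2^m\), so
\[2^m\leq (1+m^d)(m+1)\leq 4m^{d+1}\qquad(m\geq 1),\]
i.e.\ \(m\leq 2+(d+1)\log_2 m\). The final step is the standard estimate showing that this inequality forces \(m=O(d\log d)\). If \(m\geq C d\log d\) for a suitable constant \(C\), then the left side grows like \(m\) while \(\log_2 m=O(\log d)\) at that scale, so the inequality fails; small \(d\) is absorbed into the constant.

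I do not expect a real obstacle here. The only points needing care are the identity \(N^+_{T-\mathcal{Q}}(v)=N^+_T(v)\setminus Q(v)\), and the observation that the sets \(Q(v)\cap S\) range over a family of size at most \(m+1\) because the parts of \(\mathcal{Q}\) are disjoint. The bound is uniform over all coarsenings \(\mathcal{Q}\), so it also bounds \(\pVCdim^+(T)\).
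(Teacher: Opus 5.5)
Your proposal is correct and follows the paper's own argument. Both use the identity \(N^+_{T-\mathcal{Q}}(v)\cap S=(N^+_T(v)\cap S)\setminus Q(v)\) and Sauer--Shelah to bound the number of traces on a shattered set by \((m+1)(1+m^d)\), and then compare this with \(2^m\). The only cosmetic difference is that the paper fixes a shattered set \(X\) whose size is chosen so that \((d+1)\log_2(|X|+1)<|X|\), which gives an immediate contradiction, whereas you solve the inequality \(2^m\leq 4m^{d+1}\) directly for \(m\).
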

\begin{proof}
Assuming otherwise, let \(X\) be a shattered set in \(T-\mathcal{Q}\)
such that \((d+1)\log_2(|X|+1)<|X|\), which exists of size at most \(O(d\log d)\).

Then for every vertex \(v\in V(T)\), contained in some part \(P\in\mathcal{Q}\) say, we get
\(N^+_{T-\mathcal{Q}}(v)=N^+_T(v)\setminus P\) and thus in particular \(N^+_{T-\mathcal{Q}}(v)\cap X=(N^+_T(v)\cap X)\setminus P\).
It follows that
\begin{align*}
2^{|X|}
&= \left|\{N^+_{T-\mathcal{Q}}(v)\cap X\colon v\in V(T)\}\right|\\
&\leq (|X|+1)\left|\{N^+_T(v)\cap X\colon v\in V(T)\}\right|\\
	\intertext{since there are at most \(|X|\) choices of a part \(P\) which intersects \(X\), and in all other cases \(N^+_{T-\mathcal{Q}}(v)\cap X=N^+_T(v)\cap X\).
	By the Sauer-Shelah Lemma (\Cref{lem:prelims:sauer-shelah}), the right-hand side of the above inequality is bounded by}
&\leq (|X|+1)(|X|^d+1)\\
&\leq (|X|+1)^{d+1}\\
&\leq 2^{(d+1)\log_2(|X|+1)}\\
&<2^{|X|},
\end{align*}
which is a contradiction.
\end{proof}
In particular, \Cref{lem:patched:VC-dim_bound} implies that if a tournament \(T\) has VC dimension at most \(d\),
then all patched tournaments obtained by repeatedly passing to subtournaments or patching have bounded patched VC dimension
and thus also bounded VC dimension. Since all recursive subinstances constructed by our algorithm will be of this form,
this makes sure that the VC dimension of all instances always stays bounded.

By restating \Cref{cor:near-twins:existence} in terms of patched tournaments and VC dimension and combining this with the degree bound in \Cref{lem:near-twins:split-or-degree-bound}, we get the following corollary.
\begin{corollary}\label{cor:patched:near-twin-structure}
For every \(d\in\N\), there is some \(\epsilon_d\in \Omega(\frac{1}{d})\) and some \(N_d\in 2^{O(d^2)}\)
such that for every patched tournament \(T\) of order \(n>N_d\) with \(\pVCdim^+(T)\leq d\),
at least one of the following is the case:
\begin{enumerate}
\item we can compute in polynomial time a nontrivial, isomorphism-invariant vertex-coloring of \(T\), or
\item for \(k\coloneqq \left\lceil\frac{n^{1-\epsilon_d}}{4}\right\rceil\), \(T\) decomposes into fewer than \(n^{1-\epsilon_d}\) many \(k\)-near-twin components, and further, every vertex \(v\in V(T)\) has fewer than \(n^{1-\epsilon_d}\) many \(k\)-near twins outside its own patch.
\end{enumerate}
\end{corollary}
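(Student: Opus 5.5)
Take \(\epsilon_d\) and \(N_d\) exactly as in \Cref{cor:near-twins:existence}. Since \(\pVCdim^+(T)\leq d\), and \(T\) itself is obtained from \(T\) by patching with the trivial coarsening \(\mathcal{Q}=\mathcal{P}(T)\) (no arcs are removed, because patches are independent), the underlying oriented digraph \(D\) of \(T\) satisfies \(\VCdim^+(D)\leq d\). We will verify the two claims of the second outcome and show that whenever one of them fails, the first outcome holds.

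For the component count, apply \Cref{cor:near-twins:existence} to \(D\), which has order \(n>N_d\). This gives a set \(X\) with \(|X|<n^{1-\epsilon_d}\) such that every vertex has a \(k\)-near twin in \(X\), where \(k=\lceil n^{1-\epsilon_d}/4\rceil\). Every \(k\)-near-twin component therefore contains an element of \(X\), so there are fewer than \(n^{1-\epsilon_d}\) components. This step needs no case distinction.

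For the degree bound, apply \Cref{lem:near-twins:split-or-degree-bound} to the oriented digraph \(D\) with this \(k\). We need \(k\geq 2\), which holds for \(n>N_d\) since \(N_d\) is huge. One possibility is that the lemma yields a nontrivial isomorphism-invariant vertex coloring of \(D\), computable in polynomial time. We must check that this coloring is invariant under isomorphisms of the patched tournament \(T\). It is, because every isomorphism of \(T\) is in particular an isomorphism of the underlying digraphs, and the coloring (in-degree in \(D_k\)) is defined purely from \(D\). This gives the first outcome. The other possibility is that every vertex \(v\) has at most \(4k-6\) many \(k\)-near twins in \(N^+(v)\cup N^-(v)\). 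In \(D\), the vertices outside the patch of \(v\) are exactly its neighbours, by the definition of a patched tournament. Hence \(v\) has at most \(4k-6<n^{1-\epsilon_d}\) near twins outside its patch, where the inequality follows from \(4k<n^{1-\epsilon_d}+4\).

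The only delicate points are the two identifications used above: the neighbours of \(v\) in \(D\) are precisely the vertices outside its patch, and isomorphism invariance transfers from digraphs to patched tournaments. Everything else is direct bookkeeping from the earlier results.
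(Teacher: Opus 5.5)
Your proposal is correct and is essentially the paper's proof. It takes \(\epsilon_d,N_d\) from \Cref{cor:near-twins:existence} via \(\VCdim^+(T)\le\pVCdim^+(T)\), bounds the number of components by the hitting set \(X\), and obtains either the bound \(4k-6<n^{1-\epsilon_d}\) or an invariant coloring from \Cref{lem:near-twins:split-or-degree-bound}, merely spelling out identifications the paper leaves implicit. One negligible nit: \(k\ge 2\) actually fails for \(d=0\) and \(8\le n\le 16\), but in that case \(T\) is necessarily a single patch without arcs, so the conclusion holds trivially.
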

\begin{proof}
Since for every patched tournament \(T\) we have \(\VCdim^+(T)\leq\pVCdim^+(T)\), we can take \(\epsilon_d\) and \(N_d\) as in \Cref{cor:near-twins:existence}.
This guarantees that there exists a set \(X\) of size less than \(n^{1-\epsilon_d}\) such that every vertex has some \(k\)-near twin within this set.
But since such a set must contain at least one vertex from each \(k\)-near-twin component,
this implies that there are fewer than \(n^{1-\epsilon_d}\) many \(k\)-near-twin components.

Finally, \Cref{lem:near-twins:split-or-degree-bound} applied to \(T\) and \(k\)
guarantees that either each vertex has at most \(4k-6<n^{1-\epsilon_d}\) many \(k\)-near twins outside its own patch,
or we find a nontrivial isomorphism-invariant vertex coloring of \(T\).
\end{proof}
Note that a nontrivial isomorphism-invariant vertex-coloring will allow us to treat the color classes separately by \Cref{cor:patched:nontrivial_coloring}.
Thus, \Cref{cor:patched:near-twin-structure} essentially states that the isomorphism problem for patched tournaments of bounded patched VC dimension
reduces to patched subtournaments that have either bounded size or admit a decomposition as in the second case of \Cref{cor:patched:near-twin-structure}. \section{Computing isomorphism of patched tournaments of bounded VC dimension}
From now on we always assume that we have a pair of compatible arc-colored patched tournaments
\(T\) and \(T'\) of the same order \(n\) and \(\pVCdim^+(T)\leq d\).
Note that the only properties of bounded patched VC dimension we will need are the consequences
of \Cref{cor:patched:near-twin-structure}. Thus, we do not need to assume that also \(\pVCdim^+(T')\leq d\),
since our algorithm will run in polynomial time if \Cref{cor:patched:near-twin-structure},
 applies to \(T'\), and we can immediately return \(\Iso(T,T')=\emptyset\) if it does not.

For this whole section, we assume that we are in the second case of \Cref{cor:patched:near-twin-structure} (since otherwise we can immediately recurse on the color classes via \Cref{cor:patched:nontrivial_coloring}) and fix the constants \(\epsilon_d>0\) and \(N_d\in\N\) from \Cref{cor:patched:near-twin-structure}
as well as \(k\coloneqq\left\lceil\frac{n^{1-\epsilon_d}}{4}\right\rceil\).
Specifically, we know that \(T\) decomposes into less than \(n^{1-\epsilon_d}\) many \(k\)-near-twin components and furthermore,
each vertex has less than \(n^{1-\epsilon_d}\) many \(k\)-near twins outside its own patch.

We now start by defining the notion of \emph{chunks}. These will later become the new patches for recursive calls.
Given a set of vertices \(X\subseteq V(T)\), we say that a vertex \(u\in V(T)\setminus X\) is \emph{close to \(X\)} if either
\begin{enumerate}
\item \(u\) is in the same patch as some vertex in \(X\),
\item \(u\) is a \(k\)-near twin of some vertex in \(X\),
\item there exist two vertices \(v,w\in X\) such that \(|(N^+(v)\symdiff N^+(w))\setminus X|<k\)
	and \(u\in N^+(v)\symdiff N^+(w)\).
\end{enumerate}

We set \(X^{(0)}\coloneqq X\) and then inductively define
\[X^{(i+1)}\coloneqq X^{(i)}\cup\{u\in V(T)\setminus X^{(i)}\colon u \text{ is close to } X^{(i)}\}.\]
Since this process defines an increasing chain \(X=X^{(0)}\subseteq X^{(1)}\subseteq \dots\subseteq V(T)\),
it must stabilize after at most \(n\) steps and we set \(X^{(\infty)}\coloneqq\bigcup_{i\in\N} X^{(i)}\) to be the
set with which this process stabilizes.

We first collect a few easy properties of the closure operation we just defined.
\begin{lemma}\label{lem:properties_of_closure}
Let \(T\) be a patched tournament, \(X\subseteq V(T)\) an arbitrary set, and \(k\) as above. Then
\begin{enumerate}
\item\label{lempart:monotone} for every \(Y\subseteq X\) we have \(Y^{(i)}\subseteq X^{(i)}\) for all \(i\in\N\cup\{\infty\}\),
\item\label{lempart:union_of_near_twin_components} \(X^{(\infty)}\) is a union of \(k\)-near-twin components of \(T\),
\item\label{lempart:union_of_patches} \(X^{(\infty)}\) is a union of patches,
\item\label{lempart:twin_classes} if \(v,w\in X^{(\infty)}\) belong to the same \(k\)-near-twin component of \(T\), they have the same in- and out-neighborhoods in \(V(T)\setminus X^{(\infty)}\).
\end{enumerate}
\begin{proof}
The first part is immediate by induction, since every vertex \(v\in V(T)\setminus X\) which is close to \(Y\) is also close to \(X\).
The second and third claim are also immediate since~\(X^{(i+1)}\) contains all \(k\)-near twins of vertices in \(X^{(i)}\)
and vertices within the same patch as some vertex in~\(X^{(i)}\) are also contained in~\(X^{(i+1)}\).

The fourth claim follows since for every \(i\in\N\), whenever the out-neighborhoods of \(v,w\in X^{(i)}\) outside \(X^{(i)}\) differ in less than \(k\) vertices,
then \(X^{(i+1)}\) contains~\((N^+(v)\symdiff N^+(w))\). Thus, if \(C\) is a \(k\)-near-twin component within \(X^{(\infty)}\), then any two vertices \(u,v\in C\) have the same out-neighborhood in \(V(T)\setminus X^{(\infty)}\). But since \(X^{(\infty)}\) is a union of patches, there are no non-edges crossing the cut \((X^{(\infty)},V(T)\setminus X^{(\infty)})\).
Thus, if \(u,v\in C\) have the same out-neighborhood in \(V(T)\setminus X^{(\infty)}\), then they also have the same in-neighborhood in \(V(T)\setminus X^{(\infty)}\).
\end{proof}
\end{lemma}

\begin{definition}
A \emph{chunk} of \(T\) is the set of the form \(\{v\}^{(\infty)}\), that is, a set resulting from the process described above when starting with a set~$X= \{v\}$ containing a single vertex.	
\end{definition}
Note that the image of a chunk under any automorphism of \(T\) is again a chunk. We now describe how to construct a partition of \(V(T)\) using the chunks.
Indeed, we will show that the set of chunks either forms an isomorphism-invariant partition of \(V(T)\), which we can use to recurse,
or we obtain a nontrivial, isomorphism-invariant vertex-coloring on \(V(T)\) which we can use to recurse by \Cref{cor:patched:nontrivial_coloring}.
\begin{lemma}\label{lem:chunk_decomposition}
Let \(T\) be an arc-colored patched tournament with \(\pVCdim^+(T)\leq d\). Then at least one of the following is true:
\begin{enumerate}
\item We can compute in polynomial-time a nontrivial, isomorphism-invariant vertex-coloring on \(T\), or
\item the set of chunks forms an isomorphism-invariant partition of \(V(T)\) with the following properties:
	\begin{enumerate}
	\item every two chunks have the same size,
	\item each chunk consists of an odd number of patches of \(T\), and
	\item for each chunk \(C\), each vertex \(v\in C\) has fewer than \(n^{1-\epsilon_d}\) many \(\sim^k_{T[C]}\)-neighbors (as opposed to only \(\sim^k_{T}\)-neighbors) outside its own patch in \(T[C]\).
	\end{enumerate}
\end{enumerate}
\end{lemma}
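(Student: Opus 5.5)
The plan is to obtain the partition structure from monotonicity and idempotence of the closure operation. Whenever uniformity fails, I will extract an isomorphism-invariant vertex-coloring from the failing quantity. All colorings below are defined from chunks, patches, the patch tournament and the near-twin relation, so they are preserved by isomorphisms of patched tournaments. Any such coloring that is nontrivial puts us in the first alternative. Throughout, write \(C_v\coloneqq\{v\}^{(\infty)}\) for the chunk of \(v\).

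\textbf{Step 1: chunks form a partition.} First I note that \(X^{(\infty)}\) is a fixed point of the closure: no vertex outside it is close to it. Hence \((X^{(\infty)})^{(\infty)}=X^{(\infty)}\). Now let \(u\in C_v\). Then \(\{u\}\subseteq C_v\), and \Cref{lem:properties_of_closure}\ref{lempart:monotone} gives
\[C_u=\{u\}^{(\infty)}\subseteq (C_v)^{(\infty)}=C_v.\]
Next I color each vertex \(v\) by \(|C_v|\), which is computable in polynomial time. If this coloring is nontrivial, we are done. Otherwise all chunks have equal size, so \(u\in C_v\) forces \(C_u=C_v\), and therefore \(v\in C_u\). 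Thus ``\(u\in C_v\)'' is an equivalence relation whose classes are exactly the chunks. The chunks therefore form a partition, it is isomorphism-invariant, and part (a) holds.

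\textbf{Step 2: odd number of patches.} By \Cref{lem:properties_of_closure}\ref{lempart:union_of_patches}, each chunk is a union of patches. I color each vertex by the number of patches in its chunk. If this coloring is nontrivial, we are done; so assume every chunk contains the same number \(m\) of patches. Suppose \(m\) is even. Then the patch tournament restricted to the patches of a chunk, \(Q(T)[\{P\in\mathcal{P}\colon P\subseteq C\}]\), is a tournament of even order. Such a tournament cannot be regular, since a regular tournament on \(m\) vertices has out-degree \((m-1)/2\), which requires \(m\) to be odd. I then color each vertex \(v\) by the out-degree of its patch in the patch tournament restricted to \(C_v\). Within each chunk at least two distinct values occur, so this coloring is nontrivial. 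Hence either we get a coloring, or \(m\) is odd and part (b) holds.

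\textbf{Step 3: degree bound inside chunks.} This is the step I expect to be the main obstacle. One cannot simply transfer \(\sim^k_T\) to \(\sim^k_{T[C]}\). Closedness of \(C\) together with rule (c) of closeness only gives a dichotomy for \(v,w\in C\): their neighborhoods differ either nowhere outside \(C\), or in at least \(k\) vertices outside \(C\). So \(T[C]\)-near twins need not be \(T\)-near twins. Instead I would reuse the argument of \Cref{lem:near-twins:split-or-degree-bound} chunk by chunk. Apply \Cref{lem:near-twins:out-degree} to the oriented digraph \(T[C]\) with the same \(k\). Then the digraph of arcs \(vw\) of \(T[C]\) with \(v\sim^k_{T[C]}w\) has all out-degrees at most \(2k-3\), and consequently average in-degree at most \(2k-3\). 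Color every vertex of \(T\) by its in-degree in this digraph built on its own chunk. This is isomorphism-invariant because the chunk partition is, and it is polynomial-time computable. If some vertex has in-degree greater than \(2k-3\), then its chunk contains two different in-degrees, so the coloring is nontrivial. Otherwise every vertex has at most \(4k-6\) many \(\sim^k_{T[C]}\)-neighbors outside its own patch, since there are no arcs inside a patch. For \(n>N_d\) this is fewer than \(n^{1-\epsilon_d}\) by the choice of \(k\), which gives part (c).
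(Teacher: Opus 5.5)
Your proof is correct and follows the paper's argument essentially step by step. You color by chunk size to force the chunks into a partition, color by out-degree in the patch tournament restricted to a chunk to force an odd number of patches, and rerun the in-degree argument of \Cref{lem:near-twins:split-or-degree-bound} inside each chunk \(T[C]\) for (c); your explicit observation that \(X^{(\infty)}\) is a fixed point of the closure fills in the idempotence step that the paper's appeal to monotonicity uses only implicitly.
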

\begin{proof}
We define the vertex-coloring \(\chi\) by setting
\(\chi(v)\coloneqq\left|\{v\}^{(\infty)}\right|\) for each vertex \(v\in\{v\}^{(\infty)}\).
If this coloring is not constant, we have found a nontrivial coloring.
Thus, assume \(\chi\) is constant. This implies that all chunks have the same size.
Now, consider two chunks \(\{v\}^{(\infty)}\) and \(\{w\}^{(\infty)}\) that have nontrivial intersection, say \(u\in\{v\}^{(\infty)}\cap\{w\}^{(\infty)}\).
By \Cref{lem:properties_of_closure}\ref{lempart:monotone}, this implies that also \(\{u\}^{(\infty)}\subseteq \{v\}^{(\infty)}\cap\{w\}^{(\infty)}\).
But since \(\chi(u)=\chi(v)=\chi(w)\), the chunk \(\{u\}^{(\infty)}\) cannot be properly
contained in either \(\{v\}^{(\infty)}\) or \(\{w\}^{(\infty)}\), which implies
\(\{v\}^{(\infty)}=\{u\}^{(\infty)}=\{w\}^{(\infty)}\).
Thus, any two chunks with non-empty intersection coincide and hence, the set of chunks forms an isomorphism-invariant partition into parts of equal size.

Next, for each chunk \(\{v\}^{(\infty)}\) consider the tournament induced by \(Q(T)\) on the set of patches
contained in \(\{v\}^{(\infty)}\). If this tournament is not regular, we can color each patch by its out-degree in this tournament, which then yields a nontrivial isomorphism-invariant coloring of \(V(T)\).
Thus, we may assume that these tournaments are regular, which implies that they are defined on an odd number of parts.

The last condition follows from \Cref{lem:near-twins:split-or-degree-bound} applied to the patched tournament \(T[C]\):
If every vertex satisfies the desired degree bound, we are done, and otherwise, \Cref{lem:near-twins:split-or-degree-bound}
yields a nontrivial, isomorphism-invariant vertex-coloring on all chunks where the degree bound does not hold,
which also gives us a non-trivial isomorphism-invariant vertex-coloring of \(T\).
\end{proof}

Since in the case of a nontrivial vertex-coloring on \(T\) we can immediately recurse via \Cref{cor:patched:nontrivial_coloring},
we assume from now on that we are in the second case, and the set of chunks forms a partition of \(V(T)\) into parts of equal size,
each consisting of an odd number of patches.

Let \(\mathcal{C}\) and \(\mathcal{C}'\) be the partitions of \(V(T)\) and \(V(T')\) into chunks.
Before we continue to use these chunk partitions to solve the isomorphism problem, let us give
another high-level description of how chunks fit into our recursion strategy.

Since chunks were defined in an isomorphism-invariant way, we know that every isomorphism \(\phi\colon T\to T'\)
must send the partition \(\mathcal{C}\) to \(\mathcal{C}'\). 
In order to be able to use \Cref{lem:patched:iso_via_block_system} to compute the isomorphism coset \(\Iso(T,T')\) it thus remains to do two things.
First, we must compute isomorphism cosets \(\Iso(T[C],T'[C'])\) for every pair of chunks \(C\in\mathcal{C}\) and \(C'\in\mathcal{C}'\).
To do this, we observe that the definition of chunks together with the degree restriction on the \(k\)-near-twin relation in \Cref{cor:patched:near-twin-structure,lem:chunk_decomposition}
allow for the use of techniques similar to those developed by Luks to handle isomorphism of bounded-degree graphs~\cite{bounded_degree_isomorphism}
which were already applied to tournament isomorphism problem in \cite{DBLP:journals/adam/ArvindPR25}.
Details can be found in \Cref{sec:chunk_isomorphism}.

Second, we must find some solvable coset of bijections \(\mathcal{C}\to\mathcal{C}'\) which contains all bijections induced by isomorphisms.
For this, we note the following: since by \Cref{lem:properties_of_closure}\ref{lempart:twin_classes}, each \(k\)-near-twin component of \(T\) or \(T'\) forms a twin class with respect to the outside
of its chunk, \(k\)-near-twin components form twin-classes in the modified digraphs \(T-\mathcal{C}\) and \(T'-\mathcal{C}'\).
Thus, in these digraphs, we can contract each \(k\)-near-twin component to a single vertex without increasing the patched VC dimension.
Using the precomputed isomorphism cosets \(\Iso(T[C],T'[C'])\) for each pair of chunks, we can turn these digraphs into patched tournaments
on which we can recurse. Finally, we can then combine the recursive instances via \Cref{lem:patched:iso_via_block_system}. This second step
is carried out in \Cref{sec:permuting_chunks}.

We start with computing the isomorphism cosets between chunks.
\subsection{Isomorphisms between chunks}\label{sec:chunk_isomorphism}
Let \(\{v\}^{(\infty)}\subseteq V(T)\) and \(\{w\}^{(\infty)}\subseteq V(T')\) be two chunks.

In this section, we show that the computation of \(\Iso(T[\{v\}^{(\infty)}],T'[\{w\}^{(\infty)}])\) reduces to the computation of polynomially many patched tournament isomorphism instances
of size less than \(n^{1-\epsilon_d}\). To do this, we use the fact that the iterative definition of the chunks
gives us a natural DAG-structure of maximum out-degree \(n^{1-\epsilon_d}\) on each chunk.

As is common for isomorphism problems in a bounded-degree setting, we explain first how to compute the isomorphism coset when we fix one vertex in both chunks.
For this, let \(T[\{v\}^{(\infty)}]_v\) and \(T'[\{w\}^{(\infty)}]_w\) be the vertex- and arc-colored patched tournaments obtained from \(T[\{v\}^{(\infty)}]\) and \(T'[\{w\}^{(\infty)}]\)
by coloring the vertices \(v\) and \(w\) using a fresh color (i.e., individualizing them). The coset \(\Iso(T[\{v\}^{(\infty)}]_v,T'[\{w\}^{(\infty)}]_w)\) thus consists only of those isomorphisms
\(\phi\in\Iso(T[\{v\}^{(\infty)}], T'[\{w\}^{(\infty)}])\) for which \(\phi(v)=w\).
\begin{lemma}\label{lem:chunk_isomorphism_individualized}
Let \(T\) and \(T'\) be compatible arc-colored patched tournaments of the same order \(n\) with \(\pVCdim^+(T)\leq d\).

Then for any two vertices \(v\in V(T)\) and \(w\in V(T')\), the computation of a generating set of \(\Iso(T[\{v\}^{(\infty)}]_v,T'[\{w\}^{(\infty)}]_w)\) reduces
to less than \(\left|\{v\}^{(\infty)}\right|^2\) patched tournament isomorphism instances of patched VC dimension at most \(d\) and order less than \(n^{1-\epsilon_d}\).
\end{lemma}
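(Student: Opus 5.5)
We build the isomorphism layer by layer along the closure filtration, following Luks' bounded-degree strategy. Write \(C\coloneqq\{v\}^{(\infty)}\) and \(C'\coloneqq\{w\}^{(\infty)}\), with layers \(L_i\coloneqq\{v\}^{(i)}\setminus\{v\}^{(i-1)}\) and \(L_i'\) defined analogously. Since the closure is defined isomorphism-invariantly from the starting vertex, every \(\phi\in\Iso(T[C]_v,T'[C']_w)\) maps \(\{v\}^{(i)}\) onto \(\{w\}^{(i)}\) for all \(i\); in particular the layer index is an invariant vertex colour. We inductively compute the cosets
\[\Phi_i\coloneqq\Iso\bigl(T[\{v\}^{(i)}]_v,T'[\{w\}^{(i)}]_w\bigr)\]
together with the restriction to level \(i\) of those maps that extend to the whole chunk. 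For efficiency we keep a solvable coset of bijections \(\{v\}^{(i)}\to\{w\}^{(i)}\) that contains all restrictions of isomorphisms of the full individualized chunks. The base case \(\Phi_0\) is just \(v\mapsto w\).

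For the inductive step, each new vertex \(u\in L_{i+1}\) is ``close'' to \(\{v\}^{(i)}\) for a reason attached to a small witness in \(\{v\}^{(i)}\): a vertex \(x\) with \(u\) in the patch of \(x\), or \(u\sim^k x\), or a pair \(x,y\) with \(|(N^+(x)\symdiff N^+(y))\setminus\{v\}^{(i)}|<k\) and \(u\) in that difference. Each witness therefore comes with a set of attached new vertices, which we call its bag: the rest of \(x\)'s patch, the \(k\)-near twins of \(x\), or \((N^+(x)\symdiff N^+(y))\setminus\{v\}^{(i)}\), respectively. The bags of patch witnesses can be handled via the given solvable groupoid. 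Otherwise each bag has fewer than \(n^{1-\epsilon_d}\) vertices: near twins outside the own patch are bounded by \Cref{lem:chunk_decomposition}(c), and the symmetric-difference bags have size less than \(k\le n^{1-\epsilon_d}\).

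Witnesses are elements or pairs of \(\{v\}^{(i)}\), so there are fewer than \(|C|^2\) of them. For each witness and each candidate image witness we recursively compute the isomorphism coset between the induced patched subtournaments on the bags, coloured by their relation to the already-placed level. These are patched subtournaments of \(T\), so their patched VC dimension is at most \(d\); we must also make sure that \(T'\)-side instances are rejected if \Cref{cor:patched:near-twin-structure} fails for them. The instance counting needs care, since each new layer contributes distinct pairs only; we would count witness pairs globally to obtain the total bound \(|C|^2\).

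Given these cosets, the extension from level \(i\) to level \(i+1\) follows Luks. The current solvable coset acts on witnesses. The union of bags with their local solvable cosets gives an encasing coset for level \(i+1\), built as a subdirect or wreath-like product as in \Cref{lem:patched:iso_via_block_system}. We then cut this down by set and hypergraph transporters from \Cref{lem:prelims:solvable_group_algorithms}, encoding the arcs between \(L_{i+1}\) and \(\{v\}^{(i)}\) and the membership of each \(u\) in the bags. Finally \Cref{lem:patched:solvable_encasing_to_iso} yields \(\Phi_{i+1}\).

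The main obstacle is overlap: a new vertex may lie in several bags, so the bags do not form a partition. We would handle this by colouring each new vertex by the multiset of (witness-orbit) bags containing it, and by working on the disjoint union of bags with a hypergraph transporter to identify copies. Solvability survives this because everything stays inside subgroups of products of solvable cosets.
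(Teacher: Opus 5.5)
There is a genuine gap in how you decompose a new layer, and it sits exactly where you flag it. You recurse on \emph{bags} indexed by witnesses. A witness may be a pair $(x,y)$ of vertices of $\{v\}^{(i)}$, and you compute an isomorphism coset for every witness paired with every candidate image witness. That gives on the order of $(|C|^2)^2$ recursive instances per level, not fewer than $|C|^2$. ``Counting witness pairs globally'' does not help, because what is being counted is the number of (witness, image witness) pairs. Moreover, the bags overlap, so \Cref{lem:patched:iso_via_block_system} does not apply to them, and your disjoint-union construction (which needs an extra consistency step across copies of the same vertex) is left unspecified.

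The missing idea is to recurse on the \emph{atoms} of the bags rather than on the bags themselves. Give each vertex $u$ of level $\geq 1$ the label $\ell(u)=(\operatorname{lvl}(u),\ell^{\sim^k}(u),\ell^{\symdiff}(u),\ell^{\mathcal{P}}(u))$, which records \emph{all} earlier witnesses of each type, and work with the label classes. These partition $C\setminus\{v\}$, so each side has fewer than $|C|$ classes and there are fewer than $|C|^2$ pairs. Each class lies inside a single bag:
\begin{itemize}
\item if $\ell^{\mathcal{P}}(u)\neq\emptyset$, the whole class lies in one patch and is handled by \Cref{lem:patched:iso_single_patch} without recursion;
\item otherwise the class consists of $k$-near twins of a fixed earlier vertex lying outside that vertex's patch, or it lies in a symmetric difference of size less than $k$; either way its order is less than $n^{1-\epsilon_d}$.
\end{itemize}
With this partition the overlap problem disappears. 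The permutation of label classes induced by the current coset is obtained by a single hypergraph transporter on the labels, encoded as subsets of $\Omega\times\{0,1\}\dotcup\Omega^2$. After that, \Cref{lem:patched:iso_via_block_system} yields the next level.

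There is also a smaller issue. Your claim that every $\phi\in\Iso(T[C]_v,T'[C']_w)$ maps $\{v\}^{(i)}$ onto $\{w\}^{(i)}$ is not justified when the closure is computed in $T$. An isomorphism of $T[C]$ need not preserve $\sim^k_T$ or neighborhoods outside $C$. You have to rerun the closure inside $T[C]$ using $\sim^k_{T[C]}$; this yields the same chunk but possibly different levels. The near-twin bound in that setting then comes from \Cref{lem:chunk_decomposition}.

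Similarly, if ``coloured by their relation to the already-placed level'' means adjacency to specific earlier vertices, that cannot be a vertex colour in a standalone recursive instance. The two sides are only identified up to a coset. Such adjacencies are instead enforced afterwards by the transporters in \Cref{lem:patched:solvable_encasing_to_iso}.
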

\newcommand{\lvl}{\operatorname{lvl}}
\newcommand{\ext}{\mathrm{ext}}
\begin{proof}
We first note that if the two chunks have different sizes, we may immediately return that \(\Iso(T[\{v\}^{(\infty)}]_v,T'[\{w\}^{(\infty)}]_w)=\emptyset\).
Next, note that if two vertices in a chunk \(C\) are \(k\)-near twins in \(T\) (or \(T'\) respectively), then they are
also \(k\)-near twins in \(T[C]\) or \(T'[C]\) respectively. Similarly, if the out-neighborhoods of two vertices in \(C\)
disagree in less than \(k\) vertices in \(T\), then the same is true in \(T[C]\) or \(T'[C]\) respectively,
and moreover, each of these less than \(k\) vertices also lies in \(C\).
Thus, if we apply the closure operation used to define chunks in \(T[C]\) instead of~\(T\) (but with the same value of \(k\)), then we find
\(\{v\}^{(i)}_{T[C]}\supseteq\{v\}^{(i)}_T\) for each \(i\in\N\), where \(\{v\}^{(i)}_{T[C]}\) denotes the result of
\(i\) closure steps within \(T[C]\) while \(\{v\}^{(i)}_T\) denotes the result of \(i\) closure steps in \(T\).
Since furthermore \(\{v\}^{(i)}_{T[C]}\subseteq C=\{v\}^{(\infty)}_T\), we find that \(\{v\}^{(\infty)}_{T[C]}=\{v\}^{(\infty)}_T\).
Thus, we can perform the closure operation within \(T[C]\) instead of in~\(T\).
This has the advantage that the \(k\)-near-twin relation \(\sim^k_{T[C]}\) is an isomorphism-invariant relation in \(T[C]\),
while the relation \({\sim^k_T}|_{C^2}\) defined in \(T[C]\) might not be isomorphism-invariant
(and we would need to enforce isomorphism invariance by coloring this relation).
Thus, from now on until the end of this proof, \(\{v\}^{(i)}\) always refers to the set \(\{v\}^{(i)}_{T[C]}\) rather than the set \(\{v\}^{(i)}_T\).
Note that each vertex still has fewer than \(n^{1-\epsilon_d}\) many \(\sim^k_{T[C]}\)-neighbors outside its own patch by \Cref{lem:chunk_decomposition}.

With this out of the way, we define level functions \(\lvl_v\colon\{v\}^{(\infty)}\to\N\) and \(\lvl_w\colon\{w\}^{(\infty)}\to\N\)
by setting \(\lvl_v(u)\coloneqq\min\{i\in\N\colon u\in\{v\}^{(i)}\}\) for all \(u\in\{v\}^{(\infty)}\)
and analogously \(\lvl_w(u)\coloneqq\min\{i\in\N\colon u\in\{w\}^{(i)}\}\) for all \(u\in\{w\}^{(\infty)}\).
Since we may assume that the vertex sets of \(T\) and \(T'\)
are disjoint, we also write \(\lvl\) instead of \(\lvl_v\) or \(\lvl_w\) where the index does not matter or is clear from context.
Note that every isomorphism \(\phi\colon T[\{v\}^{(\infty)}]\to T'[\{w\}^{(\infty)}]\) with \(\phi(v)=w\) satisfies
\(\lvl(u)=\lvl(\phi(u))\) for all \(u\in\{v\}^{(\infty)}\).
Thus, we may introduce a vertex coloring on both \(T[\{v\}^{(\infty)}]_v\) and \(T'[\{w\}^{(\infty)}]_w\)
by setting \(\chi(u)\coloneqq\left(\lvl(u),\min\{\lvl(x)\colon x\text{ shares a patch with } u\}\right)\).
Note that while the introduction of this vertex-coloring does not affect the isomorphism coset
\(\Iso(T[\{v\}^{(\infty)}]_v,T'[\{w\}^{(\infty)}]_w)\), it can affect isomorphism cosets between induced subtournaments, which now need to preserve this coloring.

Now, we define the following vertex labeling \(\ell\) on \(\{v\}^{(\infty)}\) and the analogous
labeling on \(\{w\}^{(\infty)}\). Intuitively the label of a vertex in the chunk encodes the reason why it was added to the chunk and the other vertices that caused it to be added.
\begin{align*}
\ell^{\sim^k}(u)&\coloneqq\{x\in \{v\}^{\lvl(u)-1}\colon x\sim^k_{T[\{v\}^{(\infty)}]} u\},\\ 
\ell^{\symdiff}(u)&\coloneqq \{(x,y)\in \left(\{v\}^{\lvl(u)-1}\right)^2\colon |(N^+(x)\symdiff N^+(y))\setminus \{v\}^{\lvl(u)-1}|<k, u\in N^+(x)\symdiff N^+(y)\}\},\\
\ell^{\mathcal{P}}(u)&\coloneqq\{x\in\{v\}^{\lvl(u)-1}\colon x \text{ shares a patch with } u\}, \text{ and}\\
\ell(u)&\coloneqq(\lvl(u),\ell^{\sim^k}(u),\ell^{\symdiff}(u),\ell^{\mathcal{P}}(u)),
\end{align*}
where we set \(\{v\}^{(-1)}=\{w\}^{(-1)}=\emptyset\). A \emph{label class} is a maximal set of vertices with the same value under~$\ell$. Note that each label class is contained in a level.

Also note that for every isomorphism \(\phi\in\Iso(T[\{v\}^{(\infty)}]_v,T'[\{w\}^{(\infty)}]_w)\)
and for every vertex \(u\in \{v\}^{(\infty)}\), the label \(\ell(\phi(u))\)
is completely determined by the label \(\ell(u)\) and the isomorphism \(\phi|_{\lvl<\lvl(u)}\)
restricted to the previous levels. More specifically,
\begin{equation}\label{eq:aut_acts_on_labels}
\ell(\phi(u)) = (\lvl(u),\phi(\ell^{\sim^k}(u)),\phi(\ell^{\symdiff}(u)),\phi(\ell^{\mathcal{P}}(u)))\eqqcolon\phi(\ell(u)).
\end{equation}
This implies that the partition of the set of vertices on level \(i\) into label classes with respect to the labeling \(\ell\)
forms a block system for the pointwise stabilizer of the previous levels in the automorphism group \(\Aut(T[\{v\}^{(\infty)}]_v)\). 

\begin{claim}
The computation of the sets \(\Iso(T[L],T'[L'])\) for all label classes \(L\) of \(T[\{v\}^{(\infty)}]_v\) and \(L'\) of \(T'[\{w\}^{(\infty)}]_w\)
reduces to less than \(\left|\{v\}^{(\infty)}\right|^2\) patched tournament isomorphism instances of patched VC dimension at most \(d\) and order less than \(n^{1-\epsilon_d}\).
\end{claim}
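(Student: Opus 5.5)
The plan is to show that every label class \(L\) either needs no recursion or is small, and then to count the pairs of label classes.

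First, the trivial cases. Level \(0\) consists only of \(v\) (respectively \(w\)), so its label class is a single vertex and its isomorphism cosets are immediate. Pairs \(L,L'\) whose labels have different levels, or different \(\chi\)-values, have empty isomorphism cosets, because every isomorphism under consideration must preserve \(\chi\). So fix a label class \(L\) on level \(i\geq 1\). Since label classes refine the levels and a vertex on level \(i\) was added to \(\{v\}^{(i)}\) because it is close to \(\{v\}^{(i-1)}\), at least one of \(\ell^{\mathcal{P}}(u)\), \(\ell^{\sim^k}(u)\), \(\ell^{\symdiff}(u)\) is non-empty. These sets are constant on \(L\), and I would split into cases by which one is non-empty, in this order.

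Case \(\ell^{\mathcal{P}}\neq\emptyset\). Then all of \(L\) shares a patch with a fixed \(x\in\{v\}^{(i-1)}\), so \(L\) lies inside a single patch. Hence \(T[L]\) has a single patch, and \Cref{lem:patched:iso_single_patch} computes \(\Iso(T[L],T'[L'])\) directly, provided \(T'[L']\) also has one patch; otherwise the coset is empty. No recursive call is needed.

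Case \(\ell^{\mathcal{P}}=\emptyset\) and \(\ell^{\sim^k}\neq\emptyset\). Fix \(x\in\ell^{\sim^k}(u)\). Every \(u\in L\) is a \(\sim^k_{T[C]}\)-neighbour of \(x\), and \(u\) is not in the patch of \(x\), since \(\ell^{\mathcal{P}}(u)=\emptyset\). By \Cref{lem:chunk_decomposition}(c), \(x\) has fewer than \(n^{1-\epsilon_d}\) such neighbours outside its own patch, so \(|L|<n^{1-\epsilon_d}\).

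Remaining case \(\ell^{\symdiff}\neq\emptyset\). Fix a pair \((x,y)\) in \(\ell^{\symdiff}(u)\). Then \(L\subseteq (N^+(x)\symdiff N^+(y))\setminus\{v\}^{(i-1)}\). This set has fewer than \(k\leq n^{1-\epsilon_d}\) elements for \(n\) large, which we may assume since \(n>N_d\).

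Next, the patched VC dimension of the recursive instances. Each recursive instance is a colored patched subtournament \(T[L]\) (with the coloring \(\chi\)) paired with \(T'[L']\). I would argue \(\pVCdim^+(T[L])\leq\pVCdim^+(T)\leq d\). Patching \(T[L]\) with a coarsening \(\mathcal{Q}\) of \(\mathcal{P}[L]\) yields the induced subdigraph on \(L\) of \(T\) patched with the coarsening of \(\mathcal{P}(T)\) obtained by adding \(V(T)\setminus L\) as a part of its own, or merging it suitably. Passing to an induced subdigraph cannot increase the out-VC dimension. Since \(T\) here is really \(T[C]\), a patched subtournament of the original \(T\), the same argument applies transitively. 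Compatibility of the family of all \(T[L],T'[L']\) holds because subtournaments of a compatible class are compatible.

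Finally, the count. The label classes partition \(\{v\}^{(\infty)}\), and likewise for \(w\), so there are at most \(|\{v\}^{(\infty)}|\) classes on each side. We only recurse on pairs that are both not level \(0\) and not single-patch. This gives fewer than \(|\{v\}^{(\infty)}|^2\) instances, strict because the level-\(0\) class \(\{v\}\) is excluded, and each has order less than \(n^{1-\epsilon_d}\).

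I expect the main obstacle to be the \(\ell^{\sim^k}\) case. The degree bound only controls near twins outside the patch of \(x\), so it is essential to handle the same-patch situation first via \(\ell^{\mathcal{P}}\). A secondary subtlety is checking that restriction to \(L\) does not increase the patched VC dimension, which needs the embedding of patchings of \(T[L]\) into patchings of \(T\) described above.
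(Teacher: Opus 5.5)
Your proof is correct and follows the paper's argument. Non-initial classes (\(\ell^{\mathcal{P}}\neq\emptyset\)) lie in a single patch and are handled by \Cref{lem:patched:iso_single_patch}. Initial classes are small, either by the near-twin degree bound of \Cref{lem:chunk_decomposition} or by the symmetric-difference threshold \(k\). Counting the label classes on levels \(\geq 1\) then gives fewer than \(|\{v\}^{(\infty)}|^2\) pairs.

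Your extra justification that subinstances keep patched VC dimension at most \(d\), which the paper only asserts, is sound with one fix. The coarsening of \(\mathcal{P}(T)\) you use should replace each part of \(\mathcal{Q}\) by the union of the full patches of \(T\) that meet it, and keep the remaining patches as they are. Adding \(V(T)\setminus L\) as its own part does not work in general, because \(V(T)\setminus L\) need not be a union of patches.
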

\begin{claimproof}
We first note that the set \(\Iso(T[\{v\}],T'[\{w\}])\) contains only the unique bijection \(\{v\}\to\{w\}\),
and that \(\Iso(T[L],T'[L'])\) is empty when \(L\) and \(L'\) are on different levels since the label classes
have different vertex-colors under the coloring we introduced in this case. Thus, we can from now restrict ourselves
to pairs of label classes~$L,L'$ on the same level larger than \(0\).

We call a vertex \(u\) in either \(T[\{v\}^{(\infty)}]\) or \(T'[\{w\}^{(\infty)}]\) \emph{initial}
if \(\ell^{\mathcal{P}}(u)=\emptyset\), i.e., if \(u\) is a vertex on a minimal level within its patch.
Note that the vertex-coloring we introduced earlier distinguishes initial vertices from non-initial vertices.
Note further that every label class either contains no initial vertex or only initial vertices. We call the latter
label classes \emph{initial label classes}. Now, observe that every initial label class has size at most \(n^{1-\epsilon_d}\).
Indeed, every such class is either contained in the set of \(k\)-near twins of some vertex on the previous level,
or it is contained in the (small) symmetric difference of out-neighborhoods of two vertices on the previous level,
since otherwise, the vertices in this label class would not have been added by the closure process generating the chunk.
However, since the vertices in an initial class do not share a patch with an earlier level and each vertex has less than \(n^{1-\epsilon_d}\)
many \(k\)-near twins outside its own patch by \Cref{lem:chunk_decomposition}, initial label classes of the first type are small,
and initial label classes in the second case are also small by definition. Finally note that there are less than \(\left|\{v\}^{(\infty)}\right|\) label classes
on a level larger than \(0\) in each of \(T[\{v\}^{(\infty)}]_v\) and \(T'[\{w\}^{(\infty)}]\). Thus computing the isomorphism coset \(\Iso(T[L],T'[L'])\) between any two initial label classes \(L\) and \(L'\) entails the computation of sufficiently small instances as required by the claim.

Further, since the vertex-coloring we introduced differentiates initial classes from non-initial classes, we also know that \(\Iso(T[L],T'[L'])=\emptyset\)
when only one of the classes is initial. Finally, if both \(L\) and \(L'\) are non-initial classes, then both \(L\) and \(L'\) are contained within a single patch.
Thus, we can compute \(\Iso(T[L],T'[L'])\) via \Cref{lem:patched:iso_single_patch}. In total, we can thus assume that we have already computed the coset
\(\Iso(T[L],T'[L'])\) for any two label classes \(L\) and \(L'\).
\end{claimproof}

With these preparations out of the way, we can now describe how to compute the full isomorphism coset \(\Iso(T[\{v\}^{(\infty)}]_v,T'[\{w\}^{(\infty)}]_w)\)
by iteratively computing \(\Iso(T[\{v\}^{(i)}]_v,T'[\{w\}^{(i)}]_w)\) for increasing values of \(i\).
For \(i=0\), both \(\{v\}^{(0)}=\{v\}\) and \(\{w\}^{(0)}=\{w\}\) are themselves label classes and hence we have already computed
this isomorphism coset.

Now, assume that for some \(i\in\N\), we have computed a generating set for
\(\Iso(T[\{v\}^{(i)}]_v,T'[\{w\}^{(i)}]_w)\). If this coset is empty,
then so is \(\Iso(T[\{v\}^{(\infty)}]_v,T'[\{w\}^{(\infty)}]_w)\) and we are done.
Further, we know by \Cref{lem:patched:local-autos_solvable} that this coset as well as all isomorphism cosets between label classes
are solvable.

Consider the partitions \(\mathcal{B}\) and \(\mathcal{B}'\) of \(\{v\}^{(i+1)}\) and \(\{w\}^{(i+1)}\)
into the sets \(\{v\}^{(i)}\) or \(\{w\}^{(i)}\) and the label classes on level \(i+1\) respectively.
These partitions are isomorphism-invariant and for all pairs of partition classes \(B\in\mathcal{B}\) and \(B'\in\mathcal{B}'\)
we know the isomorphism coset \(\Iso(T[B],T'[B'])\). Thus, by \Cref{lem:patched:iso_via_block_system},
it suffices to compute a solvable coset of bijections \(\mathcal{B}\to\mathcal{B}'\) which contains every bijection
induced by an isomorphism. We can clearly assume that any such bijection should send \(\{v\}^{(i)}\) to \(\{w\}^{(i)}\)
thus it remains to find a coset of bijections of the label classes on level \(i+1\). For this, note that the set
\(\Iso(T[\{v\}^{(i)}]_v,T'[\{w\}^{(i)}]_w)\) acts on the set of possible labels via \Cref{eq:aut_acts_on_labels}.
Now, if for every isomorphism \(\phi\in\Iso(T[\{v\}^{(i)}]_v,T'[\{w\}^{(i)}]_w)\) and every vertex \(u\in\{v\}^{(i+1)}\setminus\{v\}^{(i)}\),
there was a vertex \(u'\in\{w\}^{(i+1)}\setminus\{w\}^{(i)}\) with label \(\ell(u')=\phi(\ell(u))\) (and the other way around), then
\(\Iso(T[\{v\}^{(i)}]_v,T'[\{w\}^{(i)}]_w)\) would induce an appropriate coset of bijections \(\mathcal{B}\to\mathcal{B}'\).
However, this might not always be the case and we must first pass to the subset of \(\Iso(T[\{v\}^{(i)}]_v,T'[\{w\}^{(i)}]_w)\)
consisting of those isomorphisms for which this is the case. Indeed, we say that an isomorphism \(\phi\in\Iso(T[\{v\}^{(i)}]_v,T'[\{w\}^{(i)}]_w)\)
\emph{locally extends} if
\[\left\{
\phi(\ell(u))\colon u\in\{v\}^{(i+1)}\setminus\{v\}^{(i)}\right\}
=
\left\{
\ell(u)\colon u\in\{w\}^{(i+1)}\setminus\{w\}^{(i)}\right\},
\]
that is, if there can at least plausibly be a label-preserving bijection.
The set of locally-extending isomorphisms forms a subcoset
\(\Iso^\ext(T[\{v\}^{(i)}]_v,T'[\{w\}^{(i)}]_w)\leq \Iso(T[\{v\}^{(i)}]_v,T'[\{w\}^{(i)}]_w)\) and we show how we can express it as a hypergraph transporter.

\begin{claim}
We can compute a generating set of \(\Iso^\ext(T[\{v\}^{(i)}]_v,T'[\{w\}^{(i)}]_w)\) in polynomial time.
\end{claim}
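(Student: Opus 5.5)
We express the condition ``locally extends'' as a colored hypergraph transporter problem and apply \Cref{lem:prelims:solvable_group_algorithms} to the solvable coset \(\Iso(T[\{v\}^{(i)}]_v,T'[\{w\}^{(i)}]_w)\). A label \(\ell(u)\) of a vertex on level \(i+1\) consists of the level number \(i+1\) together with three finite sets built from vertices of \(\{v\}^{(i)}\). These are the set \(\ell^{\sim^k}(u)\), the set \(\ell^{\mathcal{P}}(u)\), and the set of ordered pairs \(\ell^{\symdiff}(u)\). The action of \(\phi\) in \Cref{eq:aut_acts_on_labels} is just the natural induced action on these sets. The only obstacle to a direct hypergraph encoding is that \(\ell^{\symdiff}(u)\) consists of pairs rather than points. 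To handle it, we pass to an enlarged ground set on which the group still acts solvably.

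Concretely, let \(\Omega\coloneqq\{v\}^{(i)}\) and \(\Omega'\coloneqq\{w\}^{(i)}\), and consider the extended ground sets
\[\hat\Omega\coloneqq \Omega\times\{1,2\}\,\dot\cup\,\Omega^2\qquad\text{and}\qquad\hat\Omega'\coloneqq\Omega'\times\{1,2\}\,\dot\cup\,(\Omega')^2 .\]
The coset \(\Iso(T[\{v\}^{(i)}]_v,T'[\{w\}^{(i)}]_w)=\psi\circ\Gamma\) acts on \(\hat\Omega\) componentwise. Generators of the induced action are computed directly from generators of \(\Gamma\), and the induced group is a homomorphic image of \(\Gamma\), hence solvable by \Cref{lem:patched:local-autos_solvable}. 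The same holds for the induced bijection \(\hat\psi\colon\hat\Omega\to\hat\Omega'\).

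To each vertex \(u\in\{v\}^{(i+1)}\setminus\{v\}^{(i)}\) we assign the hyperedge
\[H_u\coloneqq \ell^{\sim^k}(u)\times\{1\}\;\cup\;\ell^{\mathcal{P}}(u)\times\{2\}\;\cup\;\ell^{\symdiff}(u)\subseteq\hat\Omega,\]
and we define \(H'_{u'}\) analogously for each vertex \(u'\) on level \(i+1\) of the other chunk. The map \(\ell(u)\mapsto H_u\) is injective on labels with a fixed level, and it is equivariant: \(\hat\phi(H_u)\) is exactly the hyperedge encoding \(\phi(\ell(u))\). Therefore \(\phi\) locally extends if and only if \(\hat\phi\) maps the hypergraph \(\mathcal{H}\coloneqq\{H_u\}\) onto \(\mathcal{H}'\coloneqq\{H'_{u'}\}\). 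Both hypergraphs are taken as sets, so that multiplicities do not matter, matching the set equality in the definition of locally extending.

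One technicality remains. The encoding distinguishes the empty components, but the hyperedge \(\emptyset\) could arise from two different labels. This is harmless, since the pieces are tagged and \(\emptyset\) corresponds to a unique label. Alternatively, we color each hyperedge by the triple of sizes of its three components, using the edge-colored version of the transporter problem.

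We then compute \(\Trans_{\hat\psi\circ\hat\Gamma}(\mathcal{H}\to\mathcal{H}')\) in polynomial time by \Cref{lem:prelims:solvable_group_algorithms}. The result is a subcoset of the induced action, and we pull it back to \(\psi\circ\Gamma\). This pull-back is immediate because the action of \(\Gamma\) on \(\Omega\) embeds faithfully into the action on \(\hat\Omega\) via \(\Omega\times\{1\}\), so restricting to that component recovers the permutations of \(\Omega\). The hypergraphs have at most \(n\) edges on a ground set of size \(O(n^2)\), so everything runs in polynomial time. The resulting coset is exactly \(\Iso^\ext(T[\{v\}^{(i)}]_v,T'[\{w\}^{(i)}]_w)\).
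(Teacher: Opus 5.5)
Your proposal is correct and follows the paper's proof. The paper also lets the coset act componentwise on \(\Omega\times\{0,1\}\dotcup\Omega^2\), encodes each label as the tagged hyperedge \(\ell^{\sim^k}(u)\times\{0\}\dotcup\ell^{\mathcal{P}}(u)\times\{1\}\dotcup\ell^{\symdiff}(u)\), and obtains the locally extending isomorphisms as a hypergraph transporter via \Cref{lem:prelims:solvable_group_algorithms}; your extra remarks (solvability of the induced action, pulling back through the faithful component, set-versus-multiset semantics, the empty hyperedge) only make explicit what the paper leaves implicit.
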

\begin{claimproof}
Set \(\Omega\coloneqq \{v\}^{(i)}\) and \(\Omega'\coloneqq \{w\}^{(i)}\).
Then we can consider \(\Iso(T[\{v\}^{(i)}]_v,T'[\{w\}^{(i)}]_w)\) as a coset of bijections
\(\Omega\times\{0,1\}\dotcup\Omega^2\to\Omega'\times\{0,1\}\dotcup(\Omega')^2\)
by defining \(\phi(\omega,i)\coloneqq(\phi(\omega),i)\) and \(\phi(\omega_1,\omega_2)\coloneqq(\phi(\omega_1),\phi(\omega_2))\).
Moreover, we can naturally interpret each label \(\ell(u)=(\lvl(u),\ell^{\sim^k}(u),\ell^{\symdiff}(u),\ell^{\mathcal{P}}(u))\)
as a subset of \(\Omega\times\{0,1\}\dotcup\Omega^2\) or \(\Omega'\times\{0,1\}\dotcup(\Omega')^2\)
by interpreting \(\ell(u)\) as the set \(L(u)\coloneqq \ell^{\sim^k}(u)\times\{0\}\dotcup\ell^{\mathcal{P}}(u)\times\{1\}\dotcup\ell^{\symdiff}(u)\).
Now, the coset \(\Iso^\ext(T[\{v\}^{(i)}]_v,T'[\{w\}^{(i)}]_w)\) is precisely the hypergraph transporter
\[\Trans_{\Iso(T[\{v\}^{(i)}]_v,T'[\{w\}^{(i)}]_w)}
\left(\{L(u)\colon u\in\{v\}^{(i+1)}\setminus\{v\}^{(i)}\}\to\{L(u)\colon u\in\{w\}^{(i+1)}\setminus\{w\}^{(i)}\}\right)\]
for which we can compute a generating set in polynomial time using \Cref{lem:prelims:solvable_group_algorithms}.
\end{claimproof}

The coset \(\Iso^\ext(T[\{v\}^{(i)}]_v,T'[\{w\}^{(i)}]_w)\) now finally also induces a coset of bijections \(\mathcal{B}\to\mathcal{B}'\)
by sending \(\{v\}^{(i)}\) to \(\{w\}^{(i)}\) and otherwise sending each label class \(L\) in \(\{v\}^{(i+1)}\setminus\{v\}^{(i)}\) to the unique label class with the corresponding label in \(\{w\}^{(i+1)}\setminus\{w\}^{(i)}\). Since \(\Iso^\ext(T[\{v\}^{(i)}]_v,T'[\{w\}^{(i)}]_w)\) is a subcoset of \(\Iso(T[\{v\}^{(i)}]_v,T'[\{w\}^{(i)}]_w)\), it further is solvable, and thus,
\Cref{lem:patched:iso_via_block_system} implies that we can compute a generating set of \(\Iso(T[\{v\}^{(i+1)}]_v,T'[\{w\}^{(i+1)}]_w)\) in polynomial time.

We can iterate this until \(\{v\}^{(i)}=\{v\}^{(\infty)}\). If \(\{w\}^{(i)}\neq\{w\}^{(\infty)}\),
then there is no isomorphism, and otherwise we have computed the coset
\(\Iso(T[\{v\}^{(\infty)}]_v,T'[\{w\}^{(\infty)}]_w)\).
\end{proof}

We get the following corollary:
\begin{corollary}\label{cor:chunk_isomorphism}
Let \(T\) and \(T'\) be compatible arc-colored patched tournaments with \(\pVCdim^+(T)\leq d\) and both of order \(n\geq N_d\). 
Assume further that both patched tournaments are partitioned into chunks of the same size.
Then the computation of the sets \(\Iso(T[C],T'[C'])\) for all chunks \(C\) of \(T\) and \(C'\) of \(T'\) reduces in polynomial time to less than \(n^3\) instances of patched tournament isomorphism, each of order less than \(n^{1-\epsilon_d}\).
\end{corollary}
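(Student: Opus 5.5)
The plan is to reduce the unindividualized chunk cosets to the individualized ones from \Cref{lem:chunk_isomorphism_individualized}, using the standard bounded-degree trick of fixing one vertex. Fix a chunk \(C=\{v\}^{(\infty)}\) of \(T\) for each chunk, with \(v\) an arbitrary but fixed vertex of \(C\). For every chunk \(C'\) of \(T'\) and every \(w\in C'\), we invoke \Cref{lem:chunk_isomorphism_individualized} to compute \(\Iso(T[C]_v,T'[C']_w)\). Since \(C'=\{w\}^{(\infty)}\) for each \(w\in C'\) (chunks partition \(V(T')\)), this is exactly the set of isomorphisms \(T[C]\to T'[C']\) sending \(v\) to \(w\). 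Every isomorphism \(T[C]\to T'[C']\) maps \(v\) to some \(w\in C'\), so
\[\Iso(T[C],T'[C'])=\bigcup_{w\in C'}\Iso(T[C]_v,T'[C']_w).\]

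To obtain a generating-set description, we would additionally compute \(\Aut(T[C])\) in the same way, as the union over \(u\in C\) of \(\Iso(T[C]_v,T[C]_u)\); here \(T[C]\) plays both roles, so compatibility is automatic. A generating set for \(\Aut(T[C])\) then consists of the generators of the stabilizer coset \(\Iso(T[C]_v,T[C]_v)\) together with one representative from each nonempty \(\Iso(T[C]_v,T[C]_u)\). Once some \(\Iso(T[C]_v,T'[C']_w)\) is nonempty, any element \(\phi\) of it gives \(\Iso(T[C],T'[C'])=\phi\circ\Aut(T[C])\). If all these sets are empty, the coset is empty. The remaining chunks of \(T\) (and of \(T'\), if needed) are treated in the same way. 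Note that \Cref{lem:chunk_isomorphism_individualized} requires \(\pVCdim^+(T)\le d\) only for the first argument, which also holds for the automorphism computation.

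\textbf{Counting.} Each call to \Cref{lem:chunk_isomorphism_individualized} produces fewer than \(|C|^2\) recursive instances of order less than \(n^{1-\epsilon_d}\), and everything else is polynomial time. The number of calls is at most \(\sum_{C}(|C'|\cdot|\mathcal{C}'|+|C|)\), which is about \(|\mathcal{C}|\cdot n+n\). With all chunks of common size \(s\), we have \(|\mathcal{C}|=n/s\), so the total number of instances is about \((n/s)\cdot n\cdot s^2=n^2 s\le n^3\). The strict bound "less than \(n^3\)" would need a little care, either with the trivial case \(s=n\) handled separately or with the automorphism computations absorbed into the count.

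The main obstacle I expect is making this count come out below \(n^3\) rather than \(O(n^3)\). One option is to avoid computing \(\Aut(T[C])\) separately by taking \(T'\) to be \(T\) itself. Another is to fix only one representative chunk per isomorphism class via \Cref{lem:patched:iso_via_block_system}-style reasoning. Most simply, we can reuse instances, since the recursive subinstances on label classes repeat across different \(w\).
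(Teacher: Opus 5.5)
Your reduction is the paper's. You fix one vertex \(v\) per chunk \(C\) of \(T\), call \Cref{lem:chunk_isomorphism_individualized} for every \(w\in V(T')\), and use \(\Iso(T[C],T'[C'])=\bigcup_{w\in C'}\Iso(T[C]_v,T'[C']_w)\). Your count for these calls (\(|\mathcal{C}|\cdot n\) calls, fewer than \(s^2\) instances each, so fewer than \(n^2s=n^3/|\mathcal{C}|\le n^3\)) is exactly the paper's.

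The problem is the additional computation of \(\Aut(T[C])\) via \(\Iso(T[C]_v,T[C]_u)\), \(u\in C\). It adds \(n\) further calls, so you only get fewer than \(n^3/|\mathcal{C}|+n^3/|\mathcal{C}|^2\) instances. This is below \(n^3\) when there are at least two chunks. For a single chunk (\(s=n\)) it only gives fewer than \(2n^3\), so the bound in the statement is not established. The remedies you list are not carried out, and the reuse idea fails in general. The levels \(\operatorname{lvl}_w\), the auxiliary coloring \(\chi\), and hence the label classes of \(T'[C']_w\) all depend on the individualized vertex \(w\), so subinstances for different \(w\) need not coincide.

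The missing observation is that \(\Aut(T[C])\) never has to be computed separately. If \(\Iso(T[C]_v,T'[C']_w)\neq\emptyset\), it equals \(\phi_w\circ\Stab_{\Aut(T[C])}(v)\), so its group part already gives generators \(S\) of the stabilizer. Fix one such \(w_0\). For \(\alpha\in\Aut(T[C])\), the map \(\phi_{w_0}\circ\alpha\) is an isomorphism sending \(v\) to some \(w\in C'\). Hence \(\alpha\in\phi_{w_0}^{-1}\circ\phi_w\circ\Stab_{\Aut(T[C])}(v)\), and therefore
\[\Iso(T[C],T'[C'])=\phi_{w_0}\circ\bigl\langle S\cup\{\phi_{w_0}^{-1}\circ\phi_w\colon \Iso(T[C]_v,T'[C']_w)\neq\emptyset\}\bigr\rangle .\]
This is the combination the paper spells out in the \(k\)-colorable section. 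If all these cosets are empty, so is \(\Iso(T[C],T'[C'])\), and no generating set is needed.

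Dropping the \(\Aut\) computation leaves only the \(|\mathcal{C}|\cdot n\) calls and gives fewer than \(n^3/|\mathcal{C}|\le n^3\) instances, as claimed. The factor-\(2\) loss would be harmless for the running time in \Cref{thm:patched_tournament_isomorphism}, but it does not prove the corollary as stated.
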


\begin{proof}
Note that if \(C\) and \(C'\) are chunks and \(v\in C\) is an arbitrary vertex, then 
\(\Iso(T[C],T'[C'])=\bigcup_{w\in C'}\Iso(T[C]_v,T'[C']_w)\) and a generating set of this coset can easily be computed
from the generating sets of the individual cosets. (Using standard techniques involving the Schreier-Sims algorithm it can be avoided that these generating sets become superpolynomially large, see e.g.~\cite{DBLP:conf/stoc/Babai16}.)

Thus, it suffices to fix one vertex \(v_C\) in each chunk \(C\) of \(T\) and then compute the isomorphism coset \(\Iso(T[C]_{v_C},T'[C']_w)\)
for each choice of \(w\in V(T')\) and its respective chunk \(C'\) via \Cref{lem:chunk_isomorphism_individualized}.
If there are \(c\) chunks in each of the patched tournaments, this leads to \(c\cdot n\) calls of the procedure from \Cref{lem:chunk_isomorphism_individualized}.
Each of these calls in turn reduces to less than \(|C|^2=\left(\frac{n}{c}\right)^2\) instances of the patched tournament isomorphism problem, each of size less than \(n^{1-\epsilon_d}\). In total, we thus reduced to less than \(\frac{n^3}{c}\leq n^3\) subinstances.
\end{proof}

\subsection{Permuting chunks}\label{sec:permuting_chunks}
Now that we have handled computing isomorphisms between chunks and automorphisms of a single chunk, it remains to consider how automorphisms may permute chunks as a collection.
For this, recall that by \Cref{lem:chunk_decomposition}, each chunk is a union of \(k\)-near-twin components, and consists of an odd number of patches.
We now want to define from \(T\) a new patched tournament \(T/{\sim^k}\) of size less than \(n^{1-\epsilon_d}\) which captures the automorphism structure on the set of chunks.

For this, let \(X\subseteq V(T)\) be a set of vertices which contains exactly one vertex from every \(k\)-near-twin component,
and let \(\mathcal{C}\) be the partition of \(X\) induced by the chunk partition.
Then the digraph \(T[X]-\mathcal{C}\) is independent of the choice of representatives in \(X\)
since any two vertices in the same \(k\)-near-twin component are twins with respect to the vertices outside of their chunks by \Cref{lem:properties_of_closure}\ref{lempart:twin_classes} (note that we need to drop arc-colors here, which is fine since preservation of arc-colors can later always be achieved by applications of the set-transporter problem).
Furthermore, since every chunk consists of an odd number of patches, the tournament \(Q(T)\) contains an odd number of 
arcs between any two distinct chunks. Thus, we can define the tournament \(Q(T/{\sim^k})\) on vertex set \(\mathcal{C}\)
by declaring \(CC'\in E(Q(T/{\sim^k}))\) if there are more \(C\)-\(C'\)-arcs than there are \(C'\)-\(C\)-arcs in \(Q(T)\), and declaring~\(C'C\in E(Q(T/{\sim^k}))\) otherwise.

Further, since we can compute isomorphism cosets \(\Iso(T[C],T[C'])\) between any two chunks via \Cref{cor:chunk_isomorphism},
we can also compute the cosets \(\Iso(T[C],T[C'])^{\sim^k_T}\leq\Iso(T[C],T[C'])\) of only those isomorphisms which preserve the
\(k\)-near twin relation \(\sim^k_T\) and thus also the partition of \(C\) and \(C'\) into \(k\)-near twin components.
This also allows us to compute the cosets \(\Iso(T[C],T[C'])/{\sim^k}\) that these cosets induce on the sets of \(k\)-near-twin components
within the chunks. This gives us a groupoid structure on the set of chunks \(\mathcal{C}\),
and by similarly computing the sets \(\Iso(T[C],T'[C'])/{\sim^k}\), this groupoid structure
extends to a groupoid structure on the union of the sets of chunks on two compatible patched tournaments.

In total, we can thus define a new patched tournament
\[T/{\sim^k}\coloneqq(T[X]-\mathcal{C},\mathcal{C},\left(\Iso(T[C],T[C'])/{\sim^k}\right)_{C,C'\in\mathcal{C}},Q(T/{\sim^k}))\]
such that for every isomorphism \(\phi\colon T\to T'\), the bijection \(\phi/{\sim^k}\) that \(\phi\)
induces on the set of \(k\)-near-twin components of \(T\) and \(T'\) induces an isomorphism
\(T/{\sim^k}\to T'/{\sim^k}\).

By \Cref{cor:patched:near-twin-structure}, this tournament has order less than \(n^{1-\epsilon_d}\)
and since we obtained \(T/{\sim^k}\) by patching the tournament \(T[X]\), this patched tournament again
has patched VC dimension at most \(d\).

\begin{lemma}\label{lem:decomposition:quotient_tournament}
Let \(T\) and \(T'\) be compatible arc-colored patched tournaments
such that the set of chunks forms vertex partitions of \(T\) and \(T'\) respectively
and such that every chunk consists of an odd number of patches.
Further, assume that for all chunks \(C\) of \(T\) and \(C'\) of \(T'\), we are given a generating set for
\(\Iso(T[C],T[C'])\).

Then in polynomial time, we can either decide that \(T\not\simeqq T'\), or we can compute
the compatible quotient tournaments \(T/{\sim^k}\) and \(T'/{\sim^k}\) and further, the computation of
\(\Iso(T,T')\) reduces in polynomial time to the computation of \(\Iso(T/{\sim^k}, T'/{\sim^k})\).
\end{lemma}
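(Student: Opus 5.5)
The proof has two parts. First, in polynomial time, build $T/{\sim^k}$ and $T'/{\sim^k}$ together with their compatibility structure. Second, show that any generating set for $\Iso(T/{\sim^k},T'/{\sim^k})$ yields a solvable coset of bijections $\mathcal{C}\to\mathcal{C}'$ containing every bijection induced by an isomorphism $T\to T'$. With that coset in hand, \Cref{lem:patched:iso_via_block_system} applied to the chunk partitions $\mathcal{C},\mathcal{C}'$ computes $\Iso(T,T')$. The cosets $\Iso(T[C],T'[C'])$ between chunks, which that lemma also needs, are given by hypothesis (or via \Cref{cor:chunk_isomorphism}).

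\textbf{Construction.} The $k$-near-twin components and the chunks can be computed in polynomial time. For each pair of chunks $C,C'$ I would refine $\Iso(T[C],T'[C'])$ to the subcoset preserving $\sim^k_T$, i.e.\ mapping the edge set of the near-twin graph on $C$ onto that on $C'$. This is a set transporter inside a solvable coset, so \Cref{lem:prelims:solvable_group_algorithms} handles it. Projecting to the near-twin components gives cosets $\Iso(T[C],T'[C'])/{\sim^k}$. They are solvable, being homomorphic images of solvable cosets, and are closed under composition and inverses, so together they form a compatible solvable groupoid on $\mathcal{C}\dotcup\mathcal{C}'$. If for some chunk $C$ no $C'$ admits a nonempty coset, or the chunk counts or sizes differ, we output $T\not\simeqq T'$.

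The digraph $T[X]-\mathcal{C}$ is well defined by \Cref{lem:properties_of_closure}\ref{lempart:twin_classes}. Since chunks are unions of patches, it satisfies the patched-tournament axioms: no arcs inside a patch, and exactly one arc between any two vertices in different chunks. The patch tournament $Q(T/{\sim^k})$ is well defined because the number of arcs of $Q(T)$ between two chunks is a product of odd numbers, hence odd, so a strict majority direction exists. All of this is isomorphism-invariant, so every isomorphism $\phi\colon T\to T'$ induces $\phi/{\sim^k}\in\Iso(T/{\sim^k},T'/{\sim^k})$. The point to check is that $\phi/{\sim^k}$ restricted to each chunk lies in the prescribed groupoid coset, which holds because $\phi|_C\in\Iso(T[C],T'[\phi(C)])$ and $\phi|_C$ preserves $\sim^k_T$.

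\textbf{Reduction.} Suppose $\Iso(T/{\sim^k},T'/{\sim^k})=\psi\circ\Theta$ is known. It is solvable by \Cref{lem:patched:local-autos_solvable}. Its elements permute the patches of $T/{\sim^k}$, which are the chunks, so it induces a coset $\phi_{\mathcal{C}}\circ\Gamma_{\mathcal{C}}$ of bijections $\mathcal{C}\to\mathcal{C}'$. This coset is solvable as a homomorphic image, and by the previous paragraph it contains $\phi|_{\mathcal{C}}$ for every isomorphism $\phi$. Then \Cref{lem:patched:iso_via_block_system} gives $\Iso(T,T')$. Arc colors, which were dropped in the quotient, are re-imposed at this stage through the set transporters inside \Cref{lem:patched:solvable_encasing_to_iso}.

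The main obstacle is the well-definedness and compatibility bookkeeping: that the projected groupoids really are solvable groupoids compatible across $T$ and $T'$, and that the representative-independent digraph is preserved by the induced maps. The rest is assembly of earlier lemmas.
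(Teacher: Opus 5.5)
Your proposal is correct and follows essentially the same route as the paper: you refine the chunk isomorphism cosets to those preserving \(\sim^k\), project them onto near-twin components to get the groupoid of \(T/{\sim^k}\), and feed the coset that \(\Iso(T/{\sim^k},T'/{\sim^k})\) induces on chunks into \Cref{lem:patched:iso_via_block_system}. The paper spells out one step that you only gesture at with ``together they form a compatible solvable groupoid'': the within-\(T\) cosets are obtained as \(\Iso(T[C_1],T[C_2])=\phi^{-1}\circ\Iso(T[C_1],T'[C'])\) for some \(\phi\in\Iso(T[C_2],T'[C'])\), composing through a chunk \(C'\) of \(T'\), and the within-\(T'\) cosets are obtained analogously — this composition is exactly where your nonemptiness check is needed.
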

\begin{proof}
If for some chunk \(C\) in \(T\) and all chunks \(C'\) in \(T'\) the set \(\Iso(T[C],T'[C'])\) is empty,
then \(T\) and \(T'\) are not isomorphic. Thus, we may assume that for every chunk \(C\) in \(T\),
at least one of the sets \(\Iso(T[C],T'[C'])\) is non-empty, and similarly, we may assume that for every
chunk \(C'\) of \(T'\), at least one of the sets \(\Iso(T[C],T'[C'])\) is non-empty.

Thus, for any two chunks \(C_1\) and \(C_2\) in \(T\), we know that \(T[C_1]\) and \(T[C_2]\) are isomorphic
if and only if there is a chunk \(C'\) of \(T'\) such that \(T[C_1]\simeqq T'[C']\simeqq T[C_2]\) and in this case,
\(\Iso(T[C_1],T[C_2])=\phi^{-1}\circ\Iso(T[C_1],T'[C'])\) for any isomorphism \(\phi\in\Iso(T[C_2],T'[C'])\).
Similarly, we can also compute the set \(\Iso(T'[C_1'],T'[C_2'])\) for any two chunks \(C_1'\) and \(C_2'\) of \(T'\).
By restricting each of these cosets to only those bijections that preserve the partitions into \(k\)-near twin components,
and then computing the action of these cosets on the set of \(k\)-near-twin components, we can thus compute the two patched tournaments
\(T/{\sim^k}\) and \(T'/{\sim^k}\). Now, since every isomorphism \(\phi\colon T/{\sim^k}\to T'/{\sim^k}\) also induces a bijection
\(\mathcal{C}\to\mathcal{C}'\) on the set of chunks, we satisfy all the requirements of \Cref{lem:patched:iso_via_block_system} and can compute \(\Iso(T,T')\) in polynomial time.
\end{proof}

\subsection{The overall algorithm}
We are now ready to state our overall algorithm to compute isomorphism between patched tournaments. If the instances are sufficiently small, the algorithm uses the quasipolynomial algorithm from \Cref{lem:patched:patched_tournament_iso_n^logn}. Otherwise, the algorithm applies various tests from
\Cref{cor:patched:near-twin-structure,lem:chunk_decomposition} to find an invariant vertex-coloring.
If none of these tests succeed, it will have computed a partition into few, comparatively small~\(k\)-near-twin components yielding a suitable isomorphism-invariant partition into chunks. These chunks are then used as new, larger patches and the algorithm recurses.
The details are described in \Cref{algo:full_algorithm}.
\begin{algorithm} 
	\caption{An algorithm computing the set of isomorphisms between two patched tournaments of bounded patched VC dimension.}\label{algo:full_algorithm}
	\KwIn{A compatible pair of colored patched tournaments \(T\) and \(T'\) and a~$d\in \mathbb{N}$ such that \(\pVCdim^+(T)\leq d\)}
	\KwOut{The coset \(\Iso(T,T')\)}
	\If{\(|V(T)|\neq |V(T')|\)}{
		\Return \(\emptyset\)\\
	}
	Determine \(\epsilon_d>0\) and \(N_d\in\N\) from \Cref{cor:patched:near-twin-structure} and set \(k\coloneqq\left\lceil\frac{|V(T)|^{1-\epsilon_d}}{4}\right\rceil\).\\
	\If{\(|V(T)|\leq N_d\)}{
		Compute \(\Iso(T,T')\) via \Cref{lem:patched:patched_tournament_iso_n^logn}\\
		\Return \(\Iso(T,T')\)		
		\nllabel{line:recursion_base}
	}
	\ElseIf{\(T\) and \(T'\) have a nontrivial vertex-coloring}{
		Recursively compute the sets of isomorphisms \(\Iso(T[\chi^{-1}(c)],T'[(\chi')^{-1}(c)])\) for each color \(c\)\\
		Compute \(\Iso(T,T')\) using \Cref{cor:patched:nontrivial_coloring}\\
		\Return \(\Iso(T,T')\)
		\nllabel{line:nontrivial_coloring}
	}	
	Compute 	the \(k\)-near-twin relations \(\sim_T^k\) and \(\sim_{T'}^k\)\\
	\If{some vertex in either \(T\) or \(T'\) has at least \(n^{1-\epsilon_d}\) \(k\)-near twins among its neighbors}{
		Color \(V(T)\) and \(V(T')\) according to the number of in-neighbors that are \(k\)-near twins\\
		Recursively compute the sets of isomorphisms \(\Iso(T[\chi^{-1}(c)],T'[(\chi')^{-1}(c)])\) for each color \(c\)\\
		Compute \(\Iso(T,T')\) using \Cref{cor:patched:nontrivial_coloring}\\
		\Return \(\Iso(T,T')\) 
		\nllabel{line:nontrivial_coloring:degree}
	}
	For each vertex \(v\), compute the chunk \(\{v\}^{(\infty)}\)\\
	\If{not all chunks have the same size}{
		Color \(V(T)\) and \(V(T')\) according to the size of the chunks \(\{v\}^{(\infty)}\)\\
		Recursively compute the sets of isomorphisms \(\Iso(T[\chi^{-1}(c)],T'[(\chi')^{-1}(c)])\) for each color \(c\)\\
		Compute \(\Iso(T,T')\) using \Cref{cor:patched:nontrivial_coloring}\\
		\Return \(\Iso(T,T')\) 
		\nllabel{line:nontrivial_coloring:chunk_size}
	}
	\ElseIf{not all chunks contain an odd number of patches}{
		Color \(V(T)\) and \(V(T')\) according to the degree of the patches within the tournament induced by \(Q(T)\) and \(Q(T')\) on the chunks,\\
		Recursively compute the sets of isomorphisms \(\Iso(T[\chi^{-1}(c)],T'[(\chi')^{-1}(c)])\) for each color \(c\)\\
		Compute \(\Iso(T,T')\) using \Cref{cor:patched:nontrivial_coloring}\\
		\Return{\(\Iso(T,T')\)}
		\nllabel{line:nontrivial_coloring:chunks_odd}
	}
	\ElseIf{some vertex in either \(T\) or \(T'\) in some chunk \(C\) has at least \(n^{1-\epsilon_d}\) many \(\sim^k_{T[C]}\)-neighbors outside its patch in \(T[C]\) or \(T'[C]\) respectively}{
		Color \(V(T)\) and \(V(T')\) according to the number of in-neighbors in \(T[C]\) that are \(\sim^k_{T[C]}\)-neighbors,\\
		Recursively compute the sets of isomorphisms \(\Iso(T[\chi^{-1}(c)],T'[(\chi')^{-1}(c)])\) for each color \(c\)\\
		Compute \(\Iso(T,T')\) using \Cref{cor:patched:nontrivial_coloring}\\
		\Return \(\Iso(T,T')\)
		\nllabel{line:nontrivial_coloring:degree_within_chunk}
	}
	\Else(\tcp*[h]{We are now in case (b) of both \Cref{cor:patched:near-twin-structure} and \Cref{lem:chunk_decomposition}}){
		For every pair of chunks \(\{v\}^{(\infty)}\subseteq V(T)\) and \(\{w\}^{(\infty)}\subseteq V(T')\), compute a generating set for
		\(\Iso(T[\{v\}^{(\infty)}],T'[\{w\}^{(\infty)}])\) recursively via \Cref{cor:chunk_isomorphism}
		\nllabel{line:chunk_isomorphism}\\
		Compute the quotient tournaments \(T/{\sim^k}\) and \(T'/{\sim^k}\) using \Cref{lem:decomposition:quotient_tournament}\\
		Recursively compute the set \(\Iso(T/{\sim^k},T'/{\sim^k})\).
		\nllabel{line:quotient_isomorphism}\\
		Combine the isomorphism cosets between the chunks with the isomorphism coset \(\Iso(T/{\sim^k},T'/{\sim^k})\)
		and compute \(\Iso(T,T')\) via \Cref{lem:decomposition:quotient_tournament}\\
		\Return \(\Iso(T,T')\)
	}
\end{algorithm}

In total, we have found that given two compatible patched tournaments \(T\) and \(T'\) obtained by patching two tournaments of VC dimension at most \(d\),
the patched tournament isomorphism problem for \(T\) and \(T'\) reduces in polynomial time to at most \(n^3\) instances of the patched tournament isomorphism
for smaller tournaments of this kind, each of size at most \(O_d(k)\). This yields our main theorem:

\begin{theorem}\label{thm:patched_tournament_isomorphism}
Given two compatible colored patched tournaments \(T\) and \(T'\) with \(\pVCdim^+(T)\leq d\)
and \(|V(T)|=|V(T')|=n\), \Cref{algo:full_algorithm} computes a generating set of \(\Iso(T,T')\) in time \(2^{O(d^4)}n^{O(d)}\).
\end{theorem}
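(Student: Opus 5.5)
The proof has two parts. First, correctness of \Cref{algo:full_algorithm}, checked branch by branch. Second, a running-time recurrence, which I expect to be the main difficulty.

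For correctness I will argue by induction on $n$. Each branch either returns a coset that an earlier lemma shows equals $\Iso(T,T')$, or it recurses on instances covered by the induction hypothesis.
\begin{itemize}
\item The base case at Line~\ref{line:recursion_base} is \Cref{lem:patched:patched_tournament_iso_n^logn}.
\item The coloring branches rely on \Cref{cor:patched:nontrivial_coloring}. Their colorings are isomorphism-invariant by \Cref{cor:patched:near-twin-structure} and \Cref{lem:chunk_decomposition}.
\item The final branch combines \Cref{cor:chunk_isomorphism} with \Cref{lem:decomposition:quotient_tournament}.
\end{itemize}
For the induction hypothesis to apply, every recursive instance must again have patched VC dimension at most $d$. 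Induced patched subtournaments only lose vertices, and the quotient $T/{\sim^k}$ is obtained by patching $T[X]$. In both cases, any patching of the new instance is an induced subdigraph of a patching of $T$. Hence $\pVCdim^+\le d$ is inherited; this is the point of working with $\pVCdim^+$ rather than with \Cref{lem:patched:VC-dim_bound} directly. The recursion also terminates. Coloring branches split into strictly smaller color classes, since a nontrivial coloring has at least two classes. Chunk and quotient instances have order less than $n^{1-\epsilon_d}$.

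For the running time, let $R(n)$ bound the total time. There are two regimes.
\begin{itemize}
\item Coloring branches give $R(n)\le \sum_c R(n_c)+\mathrm{poly}(n)$ with $\sum_c n_c=n$ and every $n_c<n$.
\item The chunk branch gives $R(n)\le (n^3+1)\,R(n^{1-\epsilon_d})+\mathrm{poly}(n)$: fewer than $n^3$ calls from \Cref{cor:chunk_isomorphism}, plus one quotient call.
\end{itemize}
The coloring recursion is superadditive-friendly. I will prove $R(n)\le C\,n^{c}$ by induction, where the exponent $c$ is chosen so that $n^3\cdot n^{c(1-\epsilon_d)}\le n^{c}/2$. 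This holds once $c\,\epsilon_d\ge 3+1$ for large $n$, so $c=O(1/\epsilon_d)=O(d)$ suffices because $\epsilon_d=\frac1{d+2}$. Then convexity gives $\sum_c n_c^{c}\le n^{c}$ minus slack, which must absorb the polynomial overhead of the current level. To get that slack, I will track the polynomial overhead separately and also fold it into the exponent: take $c\ge$ the overhead degree $+1$, and use $\sum n_c^c\le (n-1)^c+1\le n^c-n^{c-1}$.

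The base case contributes $N_d^{O(\log N_d)}$. Since $N_d=2^{O(d^2)}$, this is $2^{O(d^2)\cdot O(d^2)}=2^{O(d^4)}$, which gives the constant $C$. Together this yields $2^{O(d^4)}n^{O(d)}$.

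The main obstacle is making the polynomial overhead per level compatible with the $n^{O(d)}$ bound. That overhead comes from solvable-group algorithms on domains like $\Omega^2$, and its degree is independent of $d$. So I must check that each level's work is a fixed polynomial, in particular for \Cref{lem:chunk_isomorphism_individualized}, where many individualizations occur. I also need the slack argument above to work in the mixed recursion, where coloring steps and chunk steps interleave.
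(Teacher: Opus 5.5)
Your proposal is correct and follows the paper's proof. It uses the same branch-by-branch correctness argument and the same two-regime recurrence: at most $n^3$ subinstances of order below $n^{1-\epsilon_d}$, or a split into color classes. That recurrence is solved by $n^{O(1/\epsilon_d)}$, with base cost $2^{O(d^4)}$. The one bookkeeping difference is that the paper bounds the number of recursion-tree nodes by $n^{3/\epsilon_d}$ and multiplies by a per-node cost of $2^{O(d^4)}n^{O(1)}$, which makes your slack argument unnecessary. Your slack does, however, correctly count the current node, which the paper's recurrence silently omits.
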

\begin{proof}
\emph{(Correctness)} We first prove correctness of the algorithm.
If both tournaments have order at most \(N_d\), the algorithm uses \Cref{lem:patched:patched_tournament_iso_n^logn}
to compute the isomorphism coset and we are done in this case. Thus, we may assume from now on that both tournaments have
order \(n\geq N_d\), which allows us to apply \Cref{cor:patched:near-twin-structure}. Indeed, \Cref{cor:patched:near-twin-structure}
guarantees that, either \(T\) decomposes into fewer than \(n^{1-\epsilon_d}\) many \(k\)-near-twin components and every
vertex \(v\) has fewer than \(n^{1-\epsilon_d}\) many \(k\)-near twins among its neighbors, or that coloring each vertex by the number of \(k\)-near twins among its in-neighbors yields a nontrivial, isomorphism-invariant vertex-coloring. If \(T'\) does not similarly decompose into fewer than \(n^{1-\epsilon_d}\) many \(k\)-near twin components, then the two quotient tournaments will have different orders, in which case our algorithm will return
\(\Iso(T,T')=\emptyset\).

Thus, we can either directly recurse in \Cref{line:nontrivial_coloring:degree} and combine the results using \Cref{cor:patched:nontrivial_coloring},
or may assume that both patched tournaments now satisfy the required degree bound, and also decompose into fewer than \(n^{1-\epsilon_d}\) many \(k\)-near-twin components.
Now, the algorithm computes the chunks. If not all chunks have the same size, then coloring each vertex \(v\) according to the size of the chunk \(\{v\}^{(\infty)}\)
yields a nontrivial, isomorphism-invariant coloring, which allows us to recurse on the color classes and combine using \Cref{cor:patched:nontrivial_coloring}.
Similarly, if not all chunks contain an odd number of patches or the degree bounds on the \(k\)-near-twin relation does not hold within every chunk, this again yields a nontrivial coloring on either of the two patched tournaments,
which allows us to recurse again.

Thus, we may now assume that each chunk has the same size, consists of an odd number of patches, and within each chunk, each vertex has fewer than \(n^{1-\epsilon_d}\) many \(k\)-near twins outside its patch. Via the proof of \Cref{lem:chunk_decomposition},
this implies in particular that the set of chunks now forms a partition of both patched tournaments (since otherwise, we would find a strictly smaller chunk
within any nontrivial intersection of chunks). Now, we can use \Cref{cor:chunk_isomorphism} to compute the coset \(\Iso(T[\{v\}^{(\infty)}],T'[\{w\}^{(\infty)}])\)
for every pair of chunks using fewer than \(n^3\) subinstances, each of size less than \(n^{1-\epsilon_d}\), and using these cosets, we can define
the patch structure on the quotient tournaments \(T/{\sim^k_T}\) and \(T'/{\sim^k_{T'}}\). 

\Cref{lem:decomposition:quotient_tournament} thus finally allows us to compute \(\Iso(T,T')\).

\emph{(Running time)} Now, we turn to the runtime bound.
For each real \(x\in[1,\infty)\), let \(R(x)\) be number of nodes in the recursion tree of \Cref{algo:full_algorithm}
on instances of order at most \(\lfloor x\rfloor\).
Since all of the steps performed in the algorithm apart from the recursive calls run in polynomial time for all instances of order at least \(N_d\)
and in time at most \(N_d^{O(\log N_d)}\in 2^{O(d^4)}\) on instances of order at most \(N_d\),
the running time of \Cref{algo:full_algorithm} on instances of order \(n\) is then bounded by \(2^{O(d^4)}n^{O(1)}\cdot R(n)\).
It thus remains to bound the function \(R(n)\). Clearly, \(R(n)=1\) for all \(n\leq N_d\).

If \(n> N_d\), then \Cref{algo:full_algorithm} reduces to recursive subinstances in two different cases. Either, we have found a nontrivial vertex coloring
and recurse on each pair of corresponding color classes (this happens in \Cref{line:nontrivial_coloring,line:nontrivial_coloring:degree,line:nontrivial_coloring:chunk_size,line:nontrivial_coloring:chunks_odd,line:nontrivial_coloring:degree_within_chunk}),
or the set of chunks forms a suitable partition and we compute isomorphism cosets between each pair of chunks in \Cref{line:chunk_isomorphism}
via \Cref{cor:chunk_isomorphism}
and between the quotient tournaments \(T/{\sim^k_T}\) and \(T'/{\sim^k_{T'}}\) in \Cref{line:quotient_isomorphism}.
In the former case, we thus reduce to instances of size \(n_1,\dots,n_\ell\) with \(\sum_{i=1}^\ell n_i=n\),
while in the second case, \Cref{cor:chunk_isomorphism} implies that we have at most \(n^3\) subinstances, each of size less than \(n^{1-\epsilon_d}\).

This yields the recurrence inequality
\[R(n)\leq \max\left\{n^3R(n^{1-\epsilon_d}),\max_{n=\sum n_i}\sum R(n_i)\right\}.\]
We claim that this inequality implies that \(R(n)\leq n^{\frac{3}{\epsilon_d}}\).
Indeed, this is true for every \(n\leq N_d\). Now, assume this is true for all inputs less than \(n\).
Then
\begin{align*}
R(n)&\leq \max\left\{n^3R(n^{1-\epsilon_d}),\max_{n=\sum n_i}\sum R(n_i)\right\}\\
&\leq \max\left\{n^3\left(n^{1-\epsilon_d}\right)^{\frac{3}{\epsilon_d}}, \max_{n=\sum_{i=1}^\ell n_i}\sum_{i=1}^\ell n_i^{\frac{3}{\epsilon_d}}\right\}\\
&\leq\max\left\{n^3\cdot n^{\frac{3}{\epsilon_d}-3},\max_{n=\sum_{i=1}^\ell n_i}\left(\sum_{i=1}^\ell n_i\right)^{\frac{3}{\epsilon_d}}\right\}\\
&=\max\left\{n^{\frac{3}{\epsilon_d}},n^{\frac{3}{\epsilon_d}}\right\}\\
&=n^{\frac{3}{\epsilon_d}}.
\end{align*}

Thus, we have shown by induction that \(R(n)\leq n^{\frac{3}{\epsilon_d}}\) for all \(n\in\N\). In total, the algorithm thus runs in time
\(2^{O(d^4)}\cdot n^{O(1)}\cdot n^{\frac{3}{\epsilon_d}}=2^{O(d^4)}\cdot n^{O(d)}\).
\end{proof}

Finally, we can apply our algorithm directly to tournaments of bounded VC dimension.
\begin{theorem}\label{thm:tournament_isomorphism}
Given two (vertex- and arc-colored) tournaments \(T\) and \(T'\), we can compute \(\Iso(T,T')\) in time
\(2^{O(d^4(\log d)^4)}n^{O(d\log d)}\) where \(d=\min\{\VCdim^+(T),\VCdim^+(T')\}\).
\end{theorem}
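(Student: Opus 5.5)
Without loss of generality \(d=\VCdim^+(T)\); otherwise we swap the roles of \(T\) and \(T'\), since \(\Iso(T',T)\) consists of the inverses of \(\Iso(T,T')\). We cannot compute \(d\) directly in polynomial time. Instead we run \Cref{algo:full_algorithm} with guesses \(d'=1,2,4,\dots\) (doubling), each time with the corresponding patched-VC bound. A guess may be too small. Then either the run still returns a correct coset, because correctness only uses the conclusions of \Cref{cor:patched:near-twin-structure} and these are checked explicitly (failure produces a coloring, or a certificate that the component count is too large, which we detect), or the step count exceeds the budget for \(d'\) and we abort. We also verify the output by checking that the returned generators are isomorphisms; the only remaining risk, returning too small a coset, is avoided because every branch is justified by explicitly verified structure. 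A cleaner alternative is to compute the VC dimension by brute force in time \(n^{O(d)}\), trying all subsets of size up to \(d+1\) and checking whether each is shattered; this is within budget, so I would use that.

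We interpret \(T\) and \(T'\) as compatible colored patched tournaments with discrete patch partitions, as explained after the definition of patched tournaments. Isomorphisms of these patched tournaments are exactly the tournament isomorphisms, so it suffices to compute their isomorphism coset. The key step is bounding \(\pVCdim^+(T)\). Every patched tournament obtained by patching \(T\) has the form \(T-\mathcal{Q}\) for a coarsening \(\mathcal{Q}\) of the discrete partition. \Cref{lem:patched:VC-dim_bound} therefore gives \(\pVCdim^+(T)\le D\) with \(D\in O(d\log d)\).

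Next we must check that the recursion stays within patched VC dimension \(D\). The theorem on \Cref{algo:full_algorithm} is stated for inputs with \(\pVCdim^+(T)\le D\) and handles recursion internally. Still, I would note why the recursive instances inherit the bound, since this is the main subtlety. Induced patched subtournaments have vertex sets contained in the original and out-neighborhoods restricted to them. Quotients \(T/{\sim^k}\) are induced on representatives and then patched. Repeated patching of an induced substructure of \(T\) is again of the form \(T[X]-\mathcal{Q}\). Neighborhoods there are restrictions of neighborhoods in \(T-\mathcal{Q}''\) for some coarsening \(\mathcal{Q}''\) (extend \(\mathcal{Q}\) by singletons outside \(X\)). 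So their VC dimension is at most \(\max_{\mathcal{Q}''}\VCdim^+(T-\mathcal{Q}'')\le D\), again by \Cref{lem:patched:VC-dim_bound} applied to the original \(T\), not iteratively. This prevents the bound from blowing up under repeated patching, and it is the step I expect to need the most care in writing down.

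Finally we plug \(D=O(d\log d)\) into \Cref{thm:patched_tournament_isomorphism}. This gives running time \(2^{O(D^4)}n^{O(D)}=2^{O(d^4(\log d)^4)}n^{O(d\log d)}\). The brute-force preprocessing costs \(n^{O(d)}\), which this bound absorbs.
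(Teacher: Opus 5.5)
Your proposal is correct and follows the paper's proof: compute \(d\) by brute force over candidate shattered sets in time \(n^{O(d)}\), view \(T\) and \(T'\) as patched tournaments with discrete patch partitions, bound their patched VC dimension by \(O(d\log d)\) via \Cref{lem:patched:VC-dim_bound}, and invoke \Cref{thm:patched_tournament_isomorphism}. The only differences are cosmetic: where you swap \(T\) and \(T'\), the paper simply rejects when \(\VCdim^+(T)\neq\VCdim^+(T')\), since isomorphic tournaments have equal VC dimension; and your doubling detour is unnecessary and rightly discarded, because its claim that a too-small guess can never produce a too-small coset is not justified.
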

\begin{proof}
In a first step, we can determine \(d\) in time \(n^{O(d)}\) by simply iterating through all sets of increasing size until we find some size
of which there is no shattered set in one of the two tournaments. If \(\VCdim^+(T)\neq\VCdim^+(T')\), we can immediately decide \(T\not\simeqq T'\).
Otherwise, we can view \(T\) and \(T'\) as compatible patched tournaments with the discrete patch partition.
By \Cref{lem:patched:VC-dim_bound}, both tournaments have patched VC dimension at most \(O(d\log d)\).
Thus, \Cref{thm:patched_tournament_isomorphism} implies that \Cref{algo:full_algorithm} determines \(\Iso(T,T')\) in time \(2^{O(d^4(\log d)^4)}n^{O(d\log d)}\).
\end{proof}

Before we go on to the tournament isomorphism for tournaments of bounded chromatic number, let us just mention that
classes of bounded VC dimension are special in that they are precisely those hereditary classes that contain
at most \(2^{O(n^{2-\epsilon})}\) labeled \(n\)-vertex tournaments for some \(\epsilon>0\), see \Cref{sec:VC:growth}. 
\section{Isomorphism of \texorpdfstring{\(k\)}{k}-colorable tournaments}\label{sec:k:colorable:tournaments}
A tournament \(T\) is called \emph{\(k\)-colorable} if there exists a partition \(V(T)=V_1\dotcup\dots\dotcup V_k\) such that each of the tournaments \(T[V_i]\) is transitive, i.e., has a transitive arc relation. The minimal \(k\) such that~$T$ is~$k$-colorable is called the \emph{chromatic number} of \(T\) and is denoted by \(\chi(T)\).
 
The \(2\)-colorable tournaments play an important role in the theory of VC dimension since a hereditary class of tournaments
has bounded VC dimension if and only if it does not contain every \(2\)-colorable tournament~\cite{domination_tournaments}.
Thus, the class of \(2\)-colorable tournaments is the unique minimal hereditary class of unbounded VC dimension.
In this section, we prove that isomorphism of \(k\)-colorable tournaments can nonetheless be decided in polynomial time \(n^{O(k)}\).

We start with isomorphism of \(2\)-colorable tournaments, because even though we will not need this as a subcase in the general algorithm,
it already contains the main ingredient of choosing vertices whose in- and out-neighborhoods have smaller chromatic number.

In the \(2\)-colorable case, the underlying structural lemma for this is the following:
\begin{lemma}\label{lem:chromatic:2col_structure}
Let \(T\) be a regular tournament and \(V(T)=A\dotcup B\) a \(2\)-coloring with \(|A|\leq|B|\).
Then there exists a vertex \(v\in V(T)\) such that \(A=N^-(v)\) and \(B=N^+[v]\).
\end{lemma}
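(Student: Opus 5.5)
The plan is a direct degree count, using the transitive structure of \(T[B]\) and the fact that every vertex of a regular tournament has the same out-degree.

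\textbf{Setup.} Let \(n=|V(T)|\). A regular tournament has odd order, so write \(n=2m+1\); every vertex then has out-degree exactly \(m\). Since \(|A|+|B|=2m+1\) and \(|A|\leq|B|\), we get \(|B|\geq m+1\) and \(|A|\leq m\).

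\textbf{Choice of \(v\).} Because \(T[B]\) is transitive and non-empty, it has a source: a vertex \(v\in B\) with \(vu\in E(T)\) for every \(u\in B\setminus\{v\}\). Take this \(v\).

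\textbf{Counting.} By this choice, \(N^+(v)\supseteq B\setminus\{v\}\), so
\[m=\deg^+(v)\geq |B|-1\geq m.\]
Equality holds throughout. This forces \(|B|=m+1\) and \(N^+(v)=B\setminus\{v\}\). Hence \(N^+[v]=N^+(v)\cup\{v\}=B\). Every vertex other than \(v\) lies in exactly one of \(N^+(v)\) and \(N^-(v)\), so \(N^-(v)=V(T)\setminus B=A\).

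\textbf{Expected difficulty.} I do not expect a real obstacle. The only point to note is that the hypothesis \(|A|\leq|B|\) is used exactly once, to get \(|B|-1\geq m\). The transitivity of \(T[A]\) is not needed at all for this direction.
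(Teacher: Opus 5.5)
Your proposal is correct and follows essentially the same argument as the paper. Both take \(v\) to be the source of the transitive tournament \(T[B]\), then compare \(\deg^+(v)=\frac{n-1}{2}\) with \(|B|-1\geq\frac{n-1}{2}\) to force \(N^+[v]=B\) and \(N^-(v)=A\).
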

\begin{proof}
Because \(T\) is regular, it has odd order and thus \(|A|\leq\frac{n-1}{2}\) and \(|B|\geq\frac{n+1}{2}\).
Now, let \(v\) be the unique vertex of in-degree \(0\) in the transitive tournament \(T[B]\).
The vertex \(v\) thus has \(|B|-1\geq \frac{n-1}{2}\) out-neighbors within \(B\), and since
every vertex in a regular tournament has out-degree exactly \(\frac{n-1}{2}\), \(B=N^+[v]\)
and all vertices in \(A\) are its in-neighbors. Thus, the vertex \(v\) fulfills the requirements of the claim.
\end{proof}
The lemma above implies that to check whether a regular tournament is \(2\)-colorable,
it suffices to check at most \(n\) candidate \(2\)-colorings. In particular, this problem is
decidable in polynomial time. Note that this is in contrast to the case of arbitrary tournaments,
where deciding \(2\)-colorability is NP-hard~\cite{tournament_2col_NP-hard}.

\begin{theorem}\label{thm:chromatic:2col_isomorphism}
There is an isomorphism-invariant polynomial-time algorithm which, given two (vertex- and arc-colored) tournaments \(T\) and \(T'\),
either determines that \(T\) is not \(2\)-colorable, or computes a generating set of \(\Iso(T,T')\). 
\end{theorem}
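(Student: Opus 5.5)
We would design a recursive algorithm on the number of vertices that never tests \(2\)-colorability directly; instead every recursive failure propagates upwards as the answer ``not \(2\)-colorable''. This is sound because \(2\)-colorability is hereditary: every subtournament of a \(2\)-colorable tournament is again \(2\)-colorable. The algorithm also has to be isomorphism-invariant, meaning that whether it reports ``not \(2\)-colorable'' depends only on the isomorphism type of the input. This holds as long as every branching decision is defined invariantly.

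The first step is refinement. We apply the usual colour refinement (out-degree refinement, together with the given vertex and arc colours) to \(T\) and \(T'\) simultaneously. If the resulting stable colourings are nontrivial, we recurse on each colour class. A \(2\)-colouring of \(T\) restricts to each class, so a failure on a class correctly certifies that \(T\) is not \(2\)-colorable. If all class computations succeed, we combine the cosets as in \Cref{cor:patched:nontrivial_coloring}, using the fact that tournament automorphism groups are solvable, via \Cref{lem:patched:solvable_encasing_to_iso}. After refinement the colour classes are isomorphism-invariant and each induces a regular tournament. In the combination step we still have to respect the arcs between classes, but the set-transporter over the solvable product coset handles this.

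It remains to treat the case where \(T\) is regular and monochromatic, so that refinement is trivial. By \Cref{lem:chromatic:2col_structure}, if \(T\) is \(2\)-colorable then some vertex \(v\) has both \(T[N^-(v)]\) and \(T[N^+[v]]\) transitive. Transitivity can be checked in polynomial time, so the set \(S\) of such vertices is computable and isomorphism-invariant. If \(S=\emptyset\), we report ``not \(2\)-colorable''. Otherwise we fix \(v\in S\) and try every \(w\in V(T')\) with \(T'[N^-(w)]\) and \(T'[N^+[w]]\) transitive.

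For each such pair \((v,w)\), an isomorphism mapping \(v\) to \(w\) must map \(N^-(v)\) to \(N^-(w)\) and \(N^+(v)\) to \(N^+(w)\). Both sides are transitive tournaments, and transitive tournaments have a unique isomorphism when the arc colours are ignored. So there is at most one candidate bijection. We check that it preserves the vertex colours, the arc colours and all the arcs between the two halves. The union over the valid \(w\) gives \(\Iso(T_v,T'_w)\), and \(\Iso(T,T')\) is the union of these cosets, so it is computed in polynomial time. Note that \(\Aut(T)\) arises in the same way by taking \(T'=T\), which yields a generating set. The generating sets stay small because we only ever take unions of at most \(n\) single bijections.

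The main obstacle we expect is bookkeeping around invariance and odd order. Refinement must reach a regular tournament, and each colour class of a stable partition does induce a regular subtournament, which is enough. The other point needing care is the combination step of \Cref{cor:patched:nontrivial_coloring} for colour classes that are themselves handled by recursion; here the solvability of the cosets, which are automorphism cosets of tournaments, guarantees that the polynomial-time machinery of \Cref{lem:prelims:solvable_group_algorithms} applies. The recursion tree has total size linear in \(n\) per level, so the running time is polynomial.
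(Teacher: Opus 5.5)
Your proposal is correct and follows essentially the same route as the paper. You reduce to regular, monochromatic tournaments by degree-based refinement, handling colour classes independently. You then use the structural lemma on regular $2$-colorable tournaments to find a vertex $v$ with $T[N^-(v)]$ and $T[N^+[v]]$ transitive, and for each candidate image $w$ you test the unique bijection forced by the two transitive halves. The remaining differences are only presentational: you use full colour refinement instead of the paper's iterated out-degree colouring, and you spell out isomorphism-invariance and the combination of cosets via solvable encasing groups more explicitly.
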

\begin{proof}
By coloring each vertex by its out-degree, handling the color classes independently
and repeating this if necessary, we may assume that the two given tournaments \(T\) and \(T'\) are both regular.

By \Cref{lem:chromatic:2col_structure}, if \(T\) is \(2\)-colorable,
there is a vertex \(v\in V(T)\) such that \(V(T)=N^-(v)\dotcup N^+[v]\) is a \(2\)-coloring of \(T\)
and we can find such a vertex by trying out all vertices until we find one. Thus, we can in polynomial time
either find such a vertex \(v\), or determine that \(T\) is not \(2\)-colorable.

Now, it suffices to check for every vertex \(w\in V(T')\) whether there is an isomorphism \(\phi\colon T\to T'\) with \(\phi(v)=w\).
If \(N^-(w)\dotcup N^+[w]\) does not define a \(2\)-coloring of \(T'\), then there is no such isomorphism.
Otherwise, there is at most one such isomorphism: we must map \(v\) to \(w\), the set \(N^-(v)\) to \(N^-(w)\), and the set
\(N^+(v)\) to \(N^+(w)\). But since each of these sets induces a transitive subtournament, there is only one isomorphism
between each of these subtournaments. Thus, we can simply check whether this unique bijection is an isomorphism.
By doing this for all choices of \(w\), we find the set of all isomorphisms \(T\to T'\), and can easily extract a generating set from this.
\end{proof}

In order to extend this result to tournaments of arbitrary bounded chromatic number, we need another technical lemma.
We call a vertex \(v\in V(T)\) \emph{out-reducing} if \(\chi(T[N^+(v)])<\chi(T)\) and \emph{in-reducing} if \(\chi(T[N^-(v)])<\chi(T)\).
Note that in these terms, \Cref{lem:chromatic:2col_structure} essentially stated that every regular \(2\)-colorable tournament
contains some vertex which is both in- and out-reducing, which allows us to recurse on the in- and out-neighborhood of this vertex independently.
While this is no longer true for tournaments of larger chromatic number, we at least still get in- and out-reducing vertices individually.
\begin{lemma}\label{lem:chromatic:reducing}
Every non-transitive tournament contains both an out-reducing vertex and an in-reducing vertex.
\end{lemma}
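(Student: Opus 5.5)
Let \(k\coloneqq\chi(T)\geq 2\) (non-transitive means \(k\geq 2\)), and fix an optimal coloring \(V(T)=V_1\dotcup\dots\dotcup V_k\) in which each \(T[V_i]\) is transitive. By symmetry it suffices to produce an out-reducing vertex. The in-reducing case then follows by applying the same argument to the reversed tournament \(\overline{T}\), which has the same chromatic number because reversing a transitive tournament gives a transitive tournament, and which has \(N^+_{\overline{T}}(v)=N^-_T(v)\).

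To get an out-reducing vertex, I would choose the coloring extremally so that some class can be \enquote{absorbed}. The naive attempt is to take \(v\) to be the source of \(T[V_1]\). Then \(N^+(v)\cap V_1=V_1\setminus\{v\}\), and \(N^+(v)\) is covered by the \(k\) transitive sets \(N^+(v)\cap V_i\), which gives no improvement. The improvement has to come from \(v\) dominating everything it could. So I would instead pick a vertex \(v\) and a class such that \(N^+(v)\) meets one class only in a set that can be merged into another class.

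Concretely, I would argue as follows. Among all optimal colorings, choose one in which \(|V_1|\) is minimal, and let \(v\) be the source of \(T[V_1]\). Suppose for contradiction that \(\chi(T[N^+(v)])=k\). I do not yet see how to conclude directly, so I would switch to a cleaner extremal choice. Take a vertex \(v\) of maximum out-degree and consider the coloring restricted to \(N^+(v)\). If some color class \(V_i\) satisfied \(V_i\cap N^+(v)=\emptyset\), we would be done immediately, since \(N^+(v)\) would be covered by only \(k-1\) transitive sets. So the goal becomes finding a vertex \(v\) and an optimal coloring in which some class misses \(N^+(v)\) entirely.

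I would produce such a class as follows. Take an optimal coloring minimizing \(|V_k|\), and let \(v\) be the sink (the vertex of out-degree \(0\)) of \(T[V_k]\), so that \(N^+(v)\cap V_k=\emptyset\). Then \(N^+(v)\subseteq V_1\cup\dots\cup V_{k-1}\), which is covered by \(k-1\) transitive sets, and so \(\chi(T[N^+(v)])\leq k-1<k\). This already finishes the out-reducing case with no extremal choice needed: the sink of any color class works. Symmetrically, the source of any color class is in-reducing, because its in-neighborhood misses its own class. I therefore expect no real obstacle. The only points to check are that a transitive tournament on a nonempty vertex set has a unique sink and a unique source, and that \(k\geq 2\) guarantees that \(N^+(v)\) is covered by the remaining \(k-1\geq 1\) classes, with an empty \(N^+(v)\) trivially having chromatic number \(0<k\).
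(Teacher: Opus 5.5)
Your final argument is correct and is exactly the paper's proof: the sink of any color class \(V_i\) of a \(\chi(T)\)-coloring has an out-neighborhood disjoint from \(V_i\), so it is covered by the remaining \(\chi(T)-1\) transitive classes, and symmetrically the source of a class is in-reducing. The exploratory detours (the source-as-out-reducing attempt, minimizing \(|V_1|\), the maximum out-degree idea) play no role and should be deleted from a written proof.
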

\begin{proof}
Let \(V(T)=V_1\dotcup\dots\dotcup V_{\chi(T)}\) be a \(\chi(T)\)-coloring of \(T\).
Let \(v^+\) and \(v^-\) be the vertices of
out- and in-degree \(0\) in the transitive tournament \(T[V_1]\). Thus, \(V_1\subseteq N^+[v^-]\) and \(V_1\subseteq N^-[v^+]\).
In particular, the two tournaments \(T[N^+(v^+)]\) and \(T[N^-(v^-)]\) do not contain vertices from \(V_1\) and are thus \((\chi(T)-1)\)-colorable.
Thus, the vertex \(v^+\) is out-reducing and the vertex \(v^-\) is in-reducing.
\end{proof}

With this notion at hand, we state the main idea of our algorithm.
Given two \(k\)-colorable tournaments \(T\) and \(T'\), and an arbitrary vertex~$v\in T$, 
the computation of \(\Iso(T,T')\) reduces to computing \(\Iso(T[N^+(v)],T'[N^+(w)])\) and \(\Iso(T[N^-(v)],T'[N^-(w)])\)
for all vertices \(w\in V(T')\).
But note that if the vertex \(v\) is both in- and out-reducing,
then both~$T[N^+(v)]$ and~$T[N^-(v)]$ are \((k-1)\)-colorable tournaments, and thus we can solve these instances recursively.
If \(T\), however, does not contain such a vertex \(v\) which is both in- and out-reducing, we get a nontrivial vertex-coloring,
since we know that \(T\) contains both in- and out-reducing vertices.
The main complication with this approach is that we do not know how to compute this coloring in polynomial time
since computing the chromatic number of the tournaments \(T[N^+(v)]\) and \(T[N^-(v)]\) is NP-hard~\cite{tournament_2col_NP-hard,tournament_kcol_NP-hard}.
Instead, we use our isomorphism algorithm itself:
In order for a vertex \(v\) to be a suitable replacement for an out-reducing vertex, it suffices that
all of the computations of \(\Iso(T[N^+(v)],T'[N^+(w)])\) with \(w\in V(T')\)
via the algorithm for chromatic number \(k-1\) succeed. In order for this to work,
our algorithm may return that \(\chi(T)>k\) if it has determined that this is the case,
but it may never return a wrong set \(\Iso(T,T')\) even when \(T\) and \(T'\) are not \(k\)-colorable.

\begin{lemma}\label{lem:chromatic:kcol_isomorphism_technical}
There is an isomorphism-invariant algorithm which, given two vertex- and arc-colored tournaments \(T\) and \(T'\) of order at most \(n\) and some \(k\in\N\), runs in time \(n^{O(k)}\)
and either
\begin{enumerate}
\item correctly determines that \(\chi(T)>k\), or
\item computes a generating set for \(\Iso(T,T')\).
\end{enumerate}
Furthermore, whether the algorithm returns that \(\chi(T)>k\) or a generating set of \(\Iso(T,T')\) depends only on the isomorphism types of \(T\) and \(T'\).
\end{lemma}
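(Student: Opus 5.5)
We prove the statement by induction on \(k\), describing a recursive procedure \(\mathcal{A}_k(T,T')\). For \(k=1\) we check whether \(T\) is transitive. If it is not, we return \(\chi(T)>1\). If it is, \(\Iso(T,T')\) is either empty or consists of the unique order-preserving bijection that also respects the colors, and we can check this directly. For \(k\geq 2\) we first make the input regular and color-homogeneous. If the vertex colors, or the out-degrees inside some vertex color class, are not constant, the arising isomorphism-invariant partition is nontrivial. We then call \(\mathcal{A}_k\) on the color classes, which have smaller order, and combine the results as in \Cref{cor:patched:nontrivial_coloring}, viewing tournaments as patched tournaments with discrete patches. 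If any class reports \(\chi>k\), we report \(\chi(T)>k\); this is correct because subtournaments have smaller chromatic number. If the multisets of class sizes differ between \(T\) and \(T'\), we return \(\emptyset\). So from now on \(T\) is regular with constant coloring, and \(T'\) is as well, since otherwise we return \(\emptyset\).

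Next we define an invariant \emph{test} for each vertex \(v\in V(T)\). Vertex \(v\) is \emph{out-good} if \(\mathcal{A}_{k-1}(T[N^+(v)],T[N^+(u)])\) returns a coset (and not ``\(\chi>k-1\)'') for every \(u\in V(T)\). In-good is defined in the same way with in-neighborhoods. By the last sentence of the statement, applied inductively to \(\mathcal{A}_{k-1}\), goodness depends only on isomorphism types. Hence the coloring of vertices by (out-good, in-good) is isomorphism-invariant. By correctness, an out-reducing vertex \(v\) satisfies \(\chi(T[N^+(v)])\leq k-1\). So \(\mathcal{A}_{k-1}\) never answers ``\(\chi>k-1\)'' on it, and \(v\) is out-good. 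The same holds for in-reducing vertices. If \(T\) is transitive we handle it as in the base case. Otherwise, if \(\chi(T)\leq k\), \Cref{lem:chromatic:reducing} gives an out-good vertex and an in-good vertex. Three cases arise. If no vertex is out-good, or no vertex is in-good, we correctly report \(\chi(T)>k\). If the coloring is nonconstant, we recurse on the color classes as above. Otherwise every vertex is both out-good and in-good. We then fix any \(v\) and compute, for each \(w\in V(T')\), the cosets \(\Iso(T[N^+(v)],T'[N^+(w)])\) and \(\Iso(T[N^-(v)],T'[N^-(w)])\) via \(\mathcal{A}_{k-1}\).

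The main obstacle is that \(\mathcal{A}_{k-1}\) on the pair \((T[N^+(v)],T'[N^+(w)])\) only promises to decide \(\chi\) of its \emph{first} argument. Since \(v\) is out-good, that first argument passes, so a generating set is returned. This is exactly why goodness is tested against every \(u\) and why the invariance clause is needed. For each \(w\), the isomorphisms \(\phi\) with \(\phi(v)=w\) lie in the solvable coset \(\{v\mapsto w\}\times\Iso(T[N^+(v)],T'[N^+(w)])\times\Iso(T[N^-(v)],T'[N^-(w)])\). Its solvability follows from \Cref{lem:patched:local-autos_solvable}, and \Cref{lem:patched:solvable_encasing_to_iso} extracts the true coset \(\Iso(T_v,T'_w)\). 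Taking the union over all \(w\) yields \(\Iso(T,T')\); a Schreier–Sims reduction keeps the generating set polynomial. Every output ``\(\chi(T)>k\)'' is justified, and every returned coset is correct regardless of colorability, so correctness holds. The decision between the two outputs is made only by invariant tests, namely degrees, colors, and goodness (which itself depends only on isomorphism types), so it depends only on the isomorphism types of \(T\) and \(T'\).

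For the running time, let \(R_k(n)\) denote the cost. The goodness tests and the final step each make \(O(n^2)\) calls to \(\mathcal{A}_{k-1}\) on instances of order below \(n\). Refinement steps split the instance into parts of total size \(n\), and each refinement level costs only polynomial overhead. This gives \(R_k(n)\leq n^{c}\cdot R_{k-1}(n)\) after amortizing the refinement recursion: the color-class recursion has recursion tree size polynomial in \(n\), since sizes sum to \(n\) at each split and the leaves are instances where the neighborhood step occurs. Hence \(R_k(n)=n^{O(k)}\). The main point to check carefully is that the refinement recursion does not multiply the \(n^{c}\) factor per level. This works because the color-class recursion stays at the same \(k\) and only the neighborhood calls decrease \(k\), so the total count is a sum over a tree of size \(O(n)\), each node of which pays \(n^{O(1)}R_{k-1}(n)\).
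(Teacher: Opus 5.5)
Your overall architecture matches the paper's: you use the level-\((k-1)\) algorithm itself as a one-sided colorability test on neighborhoods, color vertices by the outcome, and recurse on color classes at the same \(k\). That recursion tree has \(O(n)\) nodes, each making \(O(n^2)\) level-\((k-1)\) calls. However, the step you single out as the main obstacle is not justified.

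The invariance clause says that the \emph{type} of the answer depends on the isomorphism types of \emph{both} inputs, not on the first one alone. The paper's own algorithm is an example: it returns \(\emptyset\) on inputs of different orders and otherwise may report \(\chi>k\), and its quasi-reducibility test is relative to the second input. So out-goodness of \(v\) in your sense only guarantees a coset from \(\mathcal{A}_{k-1}(T[N^+(v)],S)\) when \(S\cong T[N^+(u)]\) for some \(u\in V(T)\). For \(w\in V(T')\) with \(T'[N^+(w)]\) not of this form, the call may legitimately answer ``\(\chi(T[N^+(v)])>k-1\)''. Out-good does not imply out-reducing, so this answer may even be true. Your final step has no rule for that outcome, and ``that first argument passes, so a generating set is returned'' is false in general. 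Relying on first-argument-only dependence would require strengthening the induction hypothesis and designing the algorithm to satisfy it, which you did not do.

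The repair is short. Suppose some isomorphism \(T\to T'\) maps \(v\) to \(w\). Then \(T'[N^+(w)]\cong T[N^+(v)]\), so by invariance the call on \((T[N^+(v)],T'[N^+(w)])\) has the same answer type as the call on \((T[N^+(v)],T[N^+(v)])\). That call returned a coset, by out-goodness with \(u=v\). Hence, whenever a final call (for out- or in-neighborhoods) reports \(\chi>k-1\), you may set \(\Iso(T_v,T'_w)=\emptyset\) and continue. The final step then always outputs a coset, so correctness and the invariance clause for \(\mathcal{A}_k\) are unaffected. (Your test against all \(u\) would even let you return \(\emptyset\) outright, since an isomorphism \(T\to T'\) makes each \(T'[N^+(w)]\) isomorphic to some \(T[N^+(u)]\).)

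The paper avoids the issue differently. It defines quasi-reducing vertices of \(T\) relative to \(T'\): it tests \(v\) against all \(w\in V(T')\), and tests vertices of \(T'\) against all of \(V(T)\) with the arguments swapped. The cosets needed at the end have then already been computed successfully. The price is an argument that these relative labels are preserved by isomorphisms \(T\to T'\), whereas your intrinsic labels are invariant on each tournament separately. Your regularization step is unnecessary but harmless, and your running-time analysis matches the paper's.
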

\begin{proof}
We show the statement by induction on~$n$ and $k$.

If \(n\leq 1\), we can directly compute the set \(\Iso(T,T')\) in linear time.
If instead \(k=1\), we can check whether the given tournaments are transitive and if so, compute the set \(\Iso(T,T')\) in linear time by checking whether the unique arc-preserving map is color-preserving.

From now on, we may thus assume that we have already constructed an algorithm that works for all tournaments of orders smaller than \(n\) and for all values smaller than \(k\).
We write \(\Pi(S,S',k')\) for the output of this algorithm on input \(S\), \(S'\) with either the same parameter \(k'=k\) and orders smaller than \(n\), or parameter \(k'<k\).
Now, for all pairs of vertices \(v\in V(T)\) and \(w\in V(T')\), we recursively compute
the four subinstances
\begin{enumerate}
\item \(\Pi(T[N^+(v)],T'[N^+(w)],k-1)\),
\item \(\Pi(T[N^-(v)],T'[N^-(w)],k-1)\),
\item \(\Pi(T'[N^+(w)],T[N^+(v)],k-1)\), and
\item \(\Pi(T'[N^-(w)],T[N^-(v)],k-1)\).
\end{enumerate}

We now call a vertex \(v\in V(T)\)
\begin{enumerate}
\item \emph{quasi-out-reducing} if none of the computations \(\Pi(T[N^+(v)],T'[N^+(w)],k-1)\) with \(w\in V(T')\) return that \(\chi(T[N^+(v)])>k-1\) and
\item \emph{quasi-in-reducing}  if none of the computations \(\Pi(T[N^-(v)],T'[N^-(w)],k-1)\) with \(w\in V(T')\) return that \(\chi(T[N^-(v)])>k-1\).
\end{enumerate}
Similarly, we call a vertex \(w\in V(T')\) \emph{quasi-out-reducing} if for no vertex \(v\in V(T)\), the computation \(\Pi(T'[N^+(w)],T[N^+(v)],k-1)\) returns that \(\chi(T'[N^+(w)])> k-1\)
and \emph{quasi-in-reducing} if none of the computations of \(\Pi(T'[N^-(w)],T[N^-(v)],k-1)\) with \(v\in V(T)\) returned that \(\chi(T'[N^-(w)])> k-1\).
\begin{claim}
If \(T\) is \(k\)-colorable, then every out-reducing vertex in \(T\) is quasi-out-reducing and every in-reducing vertex in \(T\) is quasi-in-reducing.
\end{claim}
\begin{claimproof}
If \(v\in V(T)\) is out-reducing, this means that \(\chi(T[N^+(v)])<\chi(T)\leq k\).
Thus, the respective recursive subinstances \(\Pi(T[N^+(v)],T'[N^+(w)],k-1)\) all satisfy \(\chi(T[N^+(v)])\leq k-1\)
and thus cannot return \(\chi(T[N^+(v)])>k-1\) by correctness of the algorithm. Thus, \(v\) is quasi-out-reducing.

The proof that in-reducing vertices are quasi-in-reducing, and the case of vertices \(w\in V(T')\) are completely analogous.
\end{claimproof}
Thus, by \Cref{lem:chromatic:reducing}, if \(T\) does not contain both a quasi-out-reducing vertex and a quasi-in-reducing vertex,
it must either be transitive (in which case we can easily compute the set \(\Iso(T,T')\)), or we can immediately return that \(\chi(T)>k\).
If instead \(T'\) does not contain both a quasi-out-reducing vertex and a quasi-in-reducing vertex but \(T\) does, we can return \(\Iso(T,T')=\emptyset\) since the sets of quasi-out- and quasi-in-reducing vertices are preserved by all isomorphisms. Indeed, if \(\phi\colon T\to T'\) is an isomorphism
and \(v\) is quasi-out-reducing, then \(\Pi(T[N^+(v)],T'[N^+(w)],k-1)\) must always give the same answer as \(\Pi(T'[N^+(\phi(v))],T[N^+(\phi^{-1}(w))],k-1)\), which means that also \(\phi(v)\) is quasi-out-reducing. The analogous argument shows that also the set of quasi-in-reducing vertices
is preserved by every isomorphism.

\begin{claim}
If some vertex \(v\in V(T)\) is both quasi-in- and quasi-out-reducing, then we can compute the set \(\Iso(T,T')\) in polynomial time.
\end{claim}
\begin{claimproof}
If \(v\in V(T)\) is both quasi-in- and quasi-out-reducing, then we have already
successfully computed both \(\Iso(T[N^+(v)],T'[N^+(w)])\) and \(\Iso(T[N^-(v)],T'[N^-(w)])\) for all vertices \(w\in V(T')\).
We can combine these sets to find the set of isomorphisms \(\Iso(T_v,T'_w)\) consisting
of those isomorphisms \(\phi\colon T\to T'\) for which \(\phi(v)=w\). If all of these sets are empty, then so is \(\Iso(T,T')\).
Otherwise, we combine the cosets using the standard technique as follows: we write \(\Iso(T_v,T'_w)=\phi_w\circ\langle S\rangle\) for each \(w\in V(T')\) for which \(\Iso(T_v,T'_w)\) is nonempty,
where \(S\) is a generating set of \(\Stab_{\Aut(T)}(v)\).
Then, we fix one such \(w\) and get
\[\Iso(T,T')=\phi_w\circ\left\langle S\cup \{\phi_{w'}^{-1}\circ\phi_w\colon \Iso(T_v,T'_{w'})\neq\emptyset\}\right\rangle.\qedhere\]
\end{claimproof}
Thus, either we are already done by the previous claim, or find a nontrivial isomorphism-invariant vertex-coloring on both \(T\) and \(T'\) by coloring
each vertex according to whether it is quasi-in-reducing, quasi-out-reducing, both, or neither.
Thus, we can recursively call \(\Pi(T[C],T'[C'],k)\) on each pair of corresponding color classes \(C\) and \(C'\) and combine the isomorphism cosets afterwards.

The full algorithm can be found in \Cref{algo:chromatic}.
\begin{algorithm} 
	\caption{An algorithm computing the set of isomorphisms between two \(k\)-colorable tournaments.}\label{algo:chromatic}
	\KwIn{Colored tournaments \(T\) and \(T'\) and some \(k\in\N\)}
	\KwOut{Either the information that \(\chi(T)>k\), or a generating set of \(\Iso(T,T')\)}
	\If{\(|V(T)|\neq |V(T')|\)}{
		\Return \(\Iso(T,T')=\emptyset\)
	}
	\ElseIf{\(|V(T)|\leq 1\)}{
		Check whether the unique bijection \(V(T)\to V(T')\) is vertex-color-preserving\\
		\Return \(\Iso(T,T')\).
	}
	\ElseIf{\(k=1\)}{
		\If{\(T\) is transitive}{
			\If{\(T'\) is transitive} {
				Compute \(\Iso(T,T')\) by checking whether the unique arc-preserving map is color-preserving.\\
				\Return \(\Iso(T,T')\).
			}
			\Else{
				\Return \(\Iso(T,T')=\emptyset\).
			}
		}
		\Else{
			\Return \(\chi(T)>k\).		
		}
	}
	\Else{
		\ForEach{\(v\in V(T)\) and \(w\in V(T')\)}{
			\nllabel{line:chromatic:recursive_k-1}
			Recursively compute \(\Iso(T[N^+(v)],T'[N^+(w)])\), \(\Iso(T[N^-(v)],T'[N^-(w)])\), \(\Iso(T'[N^+(w)],T[N^+(v)])\), and \(\Iso(T'[N^-(w)],T[N^-(v)])\) with chromatic number bound \(k-1\)\\
		}
		Label a vertex \(v\in V(T)\) as \emph{quasi out-reducing} if for no~$w$ the computation of \(\Iso(T[N^+(v)],T'[N^+(w)])\) returns \(\chi(T[N^+(v)])>k-1\),
		and \emph{quasi in-reducing} if for no~$w$ the computation of \(\Iso(T[N^-(v)],T'[N^-(w)])\) returns \(\chi(T[N^-(v)])>k-1\)\\
		Similarly define labels in \(T'\).\\
		\If{some vertex \(v\in V(T)\) is both quasi-in and quasi-out-reducing}{
			For every vertex \(w\in V(T')\), compute \(\Iso(T_v,T'_w)\) from
			the precomputed isomorphism sets \(\Iso(T[N^+(v)],T'[N^+(w)])\) and \(\Iso(T[N^-(v)],T'[N^-(w)])\).\\
			Combine these cosets into a common coset \(\Iso(T,T')\).\\
			\Return \(\Iso(T,T')\).\\
		}
		\ElseIf{there are no quasi-in- or no quasi-out-reducing vertices in \(T\)}{
			\Return \(\chi(T)>k\)\\
		}
		\Else{
			\nllabel{line:chromatic:recursive_k}Recurse on each label class and combine isomorphism sets.\\
			\Return \(\Iso(T,T')\)
		}
	}
\end{algorithm}

\emph{(Running time)}
Regarding the running time, we first argue that if we run the algorithm on two tournaments \(T\) and \(T'\) of order \(n\) with parameter \(k\),
then there are at most \(O(n)\) many recursive subinstances with the same parameter \(k\). Indeed, note that the only case in which
we do not decrease the parameter \(k\) is when we have found a nontrivial vertex-coloring of \(T\) and \(T'\) and recurse on each pair
of corresponding color classes independently in \Cref{line:chromatic:recursive_k}.
Thus, the leaves of this recursion tree correspond to a partition of \(V(T)\) and \(V(T')\) respectively.
Since further, in this tree each non-leaf node has at least two children, the number of total nodes is linearly bounded in the number of leaves,
which in turn is linearly bounded in \(n\).

Next, consider the number of recursive subinstances with parameter \(k-1\) which are called directly from some instance with parameter \(k\).
We already argued that there are only \(O(n)\) instances with parameter \(k\), and each of these instances immediately calls \(4n^2\) subinstances
with parameter \(k-1\) in \Cref{line:chromatic:recursive_k-1}. In total, the instances with parameter \(k\) thus directly call \(O(n^3)\) subinstances
with parameter \(k-1\) and order at most \(n\). Each of these in turn calls \(O(n^3)\) subinstances with parameter \(k-2\) and in total,
we get \(n^{O(k)}\) subinstances with arbitrary parameter. Since the base instances with \(n=1\) or \(k=1\) can be solved in polynomial time,
and further, also all non-recursive work takes only polynomial time, the total algorithm runs in time \(n^{O(1)}\cdot n^{O(k)}=n^{O(k)}\).
\end{proof}

\begin{theorem}
There is an algorithm which, given two tournaments \(T\) and \(T'\) of order \(n\),
computes a generating set for the coset of isomorphisms \(\Iso(T,T')\) in time \(n^{O(\chi(T))}\).
\end{theorem}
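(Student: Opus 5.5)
The plan is to run the algorithm of \Cref{lem:chromatic:kcol_isomorphism_technical} with increasing values of the parameter \(k=1,2,3,\dots\). The chromatic number \(\chi(T)\) is unknown and NP-hard to compute, so we do not try to determine it. We simply stop at the first \(k\) for which the algorithm returns a generating set of \(\Iso(T,T')\) instead of the answer ``\(\chi(T)>k\)''.

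\emph{Correctness.} By \Cref{lem:chromatic:kcol_isomorphism_technical}, the algorithm never returns a wrong coset. Whenever it outputs a generating set, that set generates \(\Iso(T,T')\), even if \(k<\chi(T)\). The answer ``\(\chi(T)>k\)'' is only ever given correctly. Hence for \(k=\chi(T)\) the algorithm cannot return ``\(\chi(T)>k\)'' and must output \(\Iso(T,T')\). So the loop terminates at some \(k\leq\chi(T)\), and the coset it returns is correct. If \(|V(T)|\neq|V(T')|\), the algorithm returns the empty set right away, which is also correct. Since every tournament on \(n\) vertices is \(n\)-colorable, the loop certainly stops by \(k=n\).

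\emph{Running time.} The call with parameter \(k\) takes time \(n^{O(k)}\), with the constant in the exponent independent of \(k\). Summing over \(k=1,\dots,\chi(T)\) gives
\[\sum_{k=1}^{\chi(T)} n^{ck}\leq \chi(T)\, n^{c\chi(T)}\leq n^{c\chi(T)+1}=n^{O(\chi(T))}.\]

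The only step that needs care is checking that the guarantee of \Cref{lem:chromatic:kcol_isomorphism_technical} is uniform in \(k\), meaning it gives \(n^{ck}\) for one absolute constant \(c\). This follows from that lemma's running-time argument: each level of \(k\) multiplies the number of subinstances by \(O(n^3)\) and adds only polynomial work per instance. I expect no further obstacle, since the main work is already done in \Cref{lem:chromatic:kcol_isomorphism_technical}.
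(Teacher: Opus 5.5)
Your proposal is correct and is essentially the paper's proof: you run the algorithm of \Cref{lem:chromatic:kcol_isomorphism_technical} for \(k=1,2,\dots\) and stop at the first \(k\) for which it returns a coset. This happens at the latest at \(k=\chi(T)\), giving total time \(\chi(T)\cdot n^{O(\chi(T))}=n^{O(\chi(T))}\). Your remark that the constant in the exponent must be uniform in \(k\) spells out a point the paper's one-line proof leaves implicit.
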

\begin{proof}
By running the algorithm from \Cref{lem:chromatic:kcol_isomorphism_technical} on the tournaments \(T\) and \(T'\) for increasing values of \(k\),
we can find the set \(\Iso(T,T')\) in time \(\chi(T)\cdot n^{O(\chi(T))}=n^{O(\chi(T))}\).
\end{proof}
 
\bibliography{bibliography.bib}

\begin{thebibliography}{10}

\bibitem{DBLP:conf/isaac/ArvindDM06}
Vikraman Arvind, Bireswar Das, and Partha Mukhopadhyay.
\newblock On isomorphism and canonization of tournaments and hypertournaments.
\newblock In {\em Algorithms and Computation, 17th International Symposium,
  {ISAAC} 2006, Kolkata, India, December 18-20, 2006, Proceedings}, volume 4288
  of {\em Lecture Notes in Computer Science}, pages 449--459. Springer, 2006.
\newblock \href {https://doi.org/10.1007/11940128\_46}
  {\path{doi:10.1007/11940128\_46}}.

\bibitem{DBLP:journals/adam/ArvindPR25}
Vikraman Arvind, Ilia Ponomarenko, and Grigory Ryabov.
\newblock Isomorphism testing of k-spanning tournaments is fixed parameter
  tractable.
\newblock {\em Art Discret. Appl. Math.}, 8(2):2, 2025.
\newblock \href {https://doi.org/10.26493/2590-9770.1712.3EC}
  {\path{doi:10.26493/2590-9770.1712.3EC}}.

\bibitem{VCdim_dual}
Patrick Assouad.
\newblock Densit\'e et dimension.
\newblock {\em Annales de l'Institut Fourier}, 33(3):233--282, 1983.
\newblock \href {https://doi.org/10.5802/aif.938} {\path{doi:10.5802/aif.938}}.

\bibitem{DBLP:conf/stoc/Babai16}
L{\'{a}}szl{\'{o}} Babai.
\newblock Graph isomorphism in quasipolynomial time [extended abstract].
\newblock In {\em Proceedings of the 48th Annual {ACM} {SIGACT} Symposium on
  Theory of Computing, {STOC} 2016, Cambridge, MA, USA, June 18-21, 2016},
  pages 684--697. {ACM}, 2016.
\newblock \href {https://doi.org/10.1145/2897518.2897542}
  {\path{doi:10.1145/2897518.2897542}}.

\bibitem{DBLP:conf/stoc/BabaiL83}
L{\'{a}}szl{\'{o}} Babai and Eugene~M. Luks.
\newblock Canonical labeling of graphs.
\newblock In {\em Proceedings of the 15th Annual {ACM} Symposium on Theory of
  Computing, 25-27 April, 1983, Boston, Massachusetts, {USA}}, pages 171--183.
  {ACM}, 1983.
\newblock \href {https://doi.org/10.1145/800061.808746}
  {\path{doi:10.1145/800061.808746}}.

\bibitem{tournament_2col_NP-hard}
Xujin Chen, Xiaodong Hu, and Wenan Zang.
\newblock A min-max theorem on tournaments.
\newblock {\em {SIAM} J. Comput.}, 37(3):923--937, 2007.
\newblock \href {https://doi.org/10.1137/060649987}
  {\path{doi:10.1137/060649987}}.

\bibitem{domination_tournaments}
Maria Chudnovsky, Ringi Kim, Chun-Hung Liu, Paul Seymour, and Stéphan
  Thomassé.
\newblock {Domination in tournaments}.
\newblock {\em Journal of Combinatorial Theory, Series B}, 130:98--113, 2018.
\newblock \href {https://doi.org/10.1016/j.jctb.2017.10.001}
  {\path{doi:10.1016/j.jctb.2017.10.001}}.

\bibitem{DBLP:journals/corr/abs-2604-12584}
Anatole Dahan, Martin Grohe, Daniel Neuen, and Tom{\'{a}}s Novotn{\'{y}}.
\newblock Robust graph isomorphism, quadratic assignment and {VC} dimension.
\newblock {\em CoRR}, abs/2604.12584, 2026.
\newblock \href {https://arxiv.org/abs/2604.12584} {\path{arXiv:2604.12584}},
  \href {https://doi.org/10.48550/ARXIV.2604.12584}
  {\path{doi:10.48550/ARXIV.2604.12584}}.

\bibitem{DBLP:journals/endm/EvdokimovP05}
Sergei Evdokimov and Ilia~N. Ponomarenko.
\newblock Circulant graphs: efficient recognizing and isomorphism testing:
  (extended abstract).
\newblock {\em Electron. Notes Discret. Math.}, 22:7--12, 2005.
\newblock \href {https://doi.org/10.1016/J.ENDM.2005.06.002}
  {\path{doi:10.1016/J.ENDM.2005.06.002}}.

\bibitem{FeitThompson}
Walter Feit and John~G. Thompson.
\newblock Solvability of groups of odd order.
\newblock {\em Pacific J. Math.}, 13:775--1029, 1963.
\newblock \href {https://doi.org/10.2140/pjm.1963.13-3}
  {\path{doi:10.2140/pjm.1963.13-3}}.

\bibitem{tournament_kcol_NP-hard}
Jacob Fox, Lior Gishboliner, Asaf Shapira, and Raphael Yuster.
\newblock The removal lemma for tournaments.
\newblock {\em J. Comb. Theory {B}}, 136:110--134, 2019.
\newblock \href {https://doi.org/10.1016/J.JCTB.2018.10.001}
  {\path{doi:10.1016/J.JCTB.2018.10.001}}.

\bibitem{tww_tournaments}
Colin Geniet and Stéphan Thomassé.
\newblock First order logic and twin-width in tournaments and dense oriented
  graphs.
\newblock {\em European Journal of Combinatorics}, 132:104247, 2026.
\newblock \href {https://doi.org/10.1016/j.ejc.2025.104247}
  {\path{doi:10.1016/j.ejc.2025.104247}}.

\bibitem{tournament_isomorphism_tww}
Martin Grohe and Daniel Neuen.
\newblock {Isomorphism for Tournaments of Small Twin Width}.
\newblock In {\em 51st International Colloquium on Automata, Languages, and
  Programming (ICALP 2024)}, volume 297 of {\em Leibniz International
  Proceedings in Informatics (LIPIcs)}, pages 78:1--78:20, Dagstuhl, Germany,
  2024. Schloss Dagstuhl -- Leibniz-Zentrum f{\"u}r Informatik.
\newblock \href {https://doi.org/10.4230/LIPIcs.ICALP.2024.78}
  {\path{doi:10.4230/LIPIcs.ICALP.2024.78}}.

\bibitem{sphere_packing}
David Haussler.
\newblock {Sphere packing numbers for subsets of the Boolean $n$-cube with
  bounded Vapnik-Chervonenkis dimension}.
\newblock {\em Journal of Combinatorial Theory, Series A}, 69(2):217--232,
  1995.
\newblock \href {https://doi.org/10.1016/0097-3165(95)90052-7}
  {\path{doi:10.1016/0097-3165(95)90052-7}}.

\bibitem{bounded_degree_isomorphism}
Eugene~M. Luks.
\newblock Isomorphism of graphs of bounded valence can be tested in polynomial
  time.
\newblock {\em Journal of Computer and System Sciences}, 25(1):42--65, 1982.
\newblock \href {https://doi.org/10.1016/0022-0000(82)90009-5}
  {\path{doi:10.1016/0022-0000(82)90009-5}}.

\bibitem{DBLP:conf/dimacs/Luks91}
Eugene~M. Luks.
\newblock Permutation groups and polynomial-time computation.
\newblock In {\em Groups And Computation, Proceedings of a {DIMACS} Workshop,
  New Brunswick, New Jersey, USA, October 7-10, 1991}, volume~11 of {\em
  {DIMACS} Series in Discrete Mathematics and Theoretical Computer Science},
  pages 139--175. {DIMACS/AMS}, 1991.
\newblock \href {https://doi.org/10.1090/DIMACS/011/11}
  {\path{doi:10.1090/DIMACS/011/11}}.

\bibitem{group_isomorphism}
Gary~L. Miller.
\newblock On the n{\^{}}log n isomorphism technique: {A} preliminary report.
\newblock In {\em Proceedings of the 10th Annual {ACM} Symposium on Theory of
  Computing, May 1-3, 1978, San Diego, California, {USA}}, pages 51--58. {ACM},
  1978.
\newblock \href {https://doi.org/10.1145/800133.804331}
  {\path{doi:10.1145/800133.804331}}.

\bibitem{hypergraph_isomorphism}
Gary~L. Miller.
\newblock Isomorphism of graphs which are pairwise k-separable.
\newblock {\em Information and Control}, 56(1):21--33, 1983.
\newblock \href {https://doi.org/10.1016/S0019-9958(83)80048-5}
  {\path{doi:10.1016/S0019-9958(83)80048-5}}.

\bibitem{DBLP:journals/jct/MuzychukP20}
Mikhail~E. Muzychuk and Ilia Ponomarenko.
\newblock Testing isomorphism of circulant objects in polynomial time.
\newblock {\em J. Comb. Theory {A}}, 169, 2020.
\newblock \href {https://doi.org/10.1016/J.JCTA.2019.105128}
  {\path{doi:10.1016/J.JCTA.2019.105128}}.

\bibitem{hypergraph_isomorphism_neuen}
Daniel Neuen.
\newblock Hypergraph isomorphism for groups with restricted composition
  factors.
\newblock {\em {ACM} Trans. Algorithms}, 18(3):27:1--27:50, 2022.
\newblock \href {https://doi.org/10.1145/3527667} {\path{doi:10.1145/3527667}}.

\bibitem{Ponomarenko1992}
I.~N. Ponomarenko.
\newblock Polynomial time algorithms for recognizing and isomorphism testing of
  cyclic tournaments.
\newblock {\em Acta Applicandae Mathematicae}, 29(1):139--160, 1992.
\newblock \href {https://doi.org/10.1007/BF00053383}
  {\path{doi:10.1007/BF00053383}}.

\bibitem{MR2981982}
I.~N. Ponomarenko.
\newblock Bases of {S}churian antisymmetric coherent configurations and
  isomorphism test for {S}churian tournaments.
\newblock {\em Zap. Nauchn. Sem. S.-Peterburg. Otdel. Mat. Inst. Steklov.
  (POMI)}, 402:108--147, 219--220, 2012.
\newblock \href {https://doi.org/10.1007/s10958-013-1398-2}
  {\path{doi:10.1007/s10958-013-1398-2}}.

\bibitem{sauer_shelah1}
Norbert Sauer.
\newblock On the density of families of sets.
\newblock {\em J. Comb. Theory {A}}, 13(1):145--147, 1972.
\newblock \href {https://doi.org/10.1016/0097-3165(72)90019-2}
  {\path{doi:10.1016/0097-3165(72)90019-2}}.

\bibitem{DBLP:conf/icalp/Schweitzer17}
Pascal Schweitzer.
\newblock A polynomial-time randomized reduction from tournament isomorphism to
  tournament asymmetry.
\newblock In {\em 44th International Colloquium on Automata, Languages, and
  Programming, {ICALP} 2017, Warsaw, Poland, July 10-14, 2017}, volume~80 of
  {\em LIPIcs}, pages 66:1--66:14. Schloss Dagstuhl - Leibniz-Zentrum f{\"{u}}r
  Informatik, 2017.
\newblock \href {https://doi.org/10.4230/LIPICS.ICALP.2017.66}
  {\path{doi:10.4230/LIPICS.ICALP.2017.66}}.

\bibitem{sauer_shelah2}
Saharon Shelah.
\newblock A combinatorial problem; stability and order for models and theories
  in infinitary languages.
\newblock {\em Pacific J. Math.}, 41:247--261, 1972.
\newblock \href {https://doi.org/10.2140/pjm.1972.41.247}
  {\path{doi:10.2140/pjm.1972.41.247}}.

\bibitem{VC}
V.~N. Vapnik and A.~Ya. Chervonenkis.
\newblock On the uniform convergence of relative frequencies of events to their
  probabilities.
\newblock {\em Theory of Probability \& Its Applications}, 16(2):264--280,
  1971.
\newblock \href {https://doi.org/10.1137/1116025} {\path{doi:10.1137/1116025}}.

\bibitem{DBLP:conf/mfcs/Wagner07}
Fabian Wagner.
\newblock Hardness results for tournament isomorphism and automorphism.
\newblock In {\em Mathematical Foundations of Computer Science 2007, 32nd
  International Symposium, {MFCS} 2007, Cesk{\'{y}} Krumlov, Czech Republic,
  August 26-31, 2007, Proceedings}, volume 4708 of {\em Lecture Notes in
  Computer Science}, pages 572--583. Springer, 2007.
\newblock \href {https://doi.org/10.1007/978-3-540-74456-6\_51}
  {\path{doi:10.1007/978-3-540-74456-6\_51}}.

\end{thebibliography}
\bibliographystyle{plainurl}

\appendix

\section{VC dimension and growth}\label{sec:VC:growth}
Since every hereditary class of tournaments of unbounded VC dimension contains all \(2\)-colorable tournaments
and there are \(2^{|A|\times|B|}\) labeled tournaments with a prescribed \(2\)-coloring \(A\dotcup B\),
every hereditary class of tournaments of unbounded VC dimension contains at least \(2^{n^2/4}\) labeled
tournaments of order \(n\). This is in contrast to classes of bounded VC dimension, which for every \(\epsilon>0\)
contain at most \(2^{\epsilon n^2}\) such tournaments for every large enough \(n\) by \cite{domination_tournaments}.
Indeed, we can even show that they contain only \(2^{n^{2-\epsilon}}\) such tournaments for some \(\epsilon>0\):
\begin{theorem}
Let \(\mathcal{C}\) be a class of tournaments of bounded VC dimension. Then there is some \(\epsilon>0\)
such that for all sufficiently large \(n\), the class \(\mathcal{C}\) contains at most \(O(2^{n^{2-\epsilon}})\) labeled tournaments
of order \(n\).
\end{theorem}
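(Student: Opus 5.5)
We plan to bound the number of labeled tournaments in \(\mathcal{C}\) on a fixed vertex set \(V=\{1,\dots,n\}\) by an encoding argument built on \Cref{cor:near-twins:existence}. Let \(d\) bound \(\VCdim^+\) on \(\mathcal{C}\), and take \(\epsilon_d\) and \(N_d\) from that corollary, with \(k\coloneqq\lceil n^{1-\epsilon_d}/4\rceil\). For a tournament \(T\in\mathcal{C}\) on \(V\) with \(n>N_d\), the corollary gives a set \(X\subseteq V\) with \(|X|<n^{1-\epsilon_d}\) such that every vertex has a \(k\)-near twin in \(X\). We will encode \(T\) in three pieces: first the set \(X\); then, for each vertex \(v\), a chosen near twin \(x_v\in X\); and finally the symmetric difference \(N^+(v)\symdiff N^+(x_v)\), which has size less than \(k\).

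The first step is to check that this data determines \(T\). For \(v\notin X\) we know \(N^+(x_v)\), so we recover \(N^+(v)\) from \(N^+(x_v)\) and the recorded symmetric difference. To get a base case, we first record the out-neighbourhoods of the vertices of \(X\) directly, at a cost of \(|X|\cdot n\le n^{2-\epsilon_d}\) bits. This is simpler than chaining back through twins. Once every out-neighbourhood is known, \(T\) is determined.

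The second step is to count the encodings. There are at most \(2^n\) choices for \(X\), and at most \(2^{n^{2-\epsilon_d}}\) choices for the out-neighbourhoods of \(X\). The pointers \(x_v\) contribute at most \(n^{n}=2^{n\log n}\) choices. For each \(v\), the symmetric difference is a subset of \(V\) of size less than \(k\), so there are at most \(\sum_{i<k}\binom{n}{i}\le n^{k}\) choices per vertex. Over all vertices this gives at most \(n^{nk}=2^{O(n^{2-\epsilon_d}\log n)}\) choices.

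Multiplying these counts, the number of tournaments is \(2^{O(n^{2-\epsilon_d}\log n)}\). The only real obstacle is the stray \(\log n\) factor, which we remove by choosing any \(\epsilon<\epsilon_d\): then \(n^{2-\epsilon_d}\log n\le n^{2-\epsilon}\) for large \(n\). The resulting bound \(O(2^{n^{2-\epsilon}})\) holds for all sufficiently large \(n\), with \(\epsilon\in\Omega(1/d)\). We also note that the hypothesis that \(\mathcal{C}\) is hereditary is not needed, since we only use \(\VCdim^+(T)\le d\) for the tournaments being counted.
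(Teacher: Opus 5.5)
Your proposal is correct and is essentially the paper's argument: encode \(T\) by the set \(X\) from \Cref{cor:near-twins:existence}, a near-twin pointer \(f(v)\in X\) for every vertex, and the small sets \(N^+(v)\symdiff N^+(f(v))\), then absorb the \(\log n\) factor by taking \(\epsilon<\epsilon_d\). The one difference is that you record the full out-neighbourhoods of the vertices of \(X\), at a cost of at most \(n^{2-\epsilon_d}\) extra bits, whereas the paper records only \(T[X]\) and implicitly relies on the tournament property to recover the arcs between \(X\) and \(V\setminus X\); your variant makes the reconstruction explicit and works for arbitrary digraphs.
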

\begin{proof}
Assume that every tournament in \(\mathcal{C}\) has VC dimension at most \(d\).
By \Cref{cor:near-twins:existence}, there is some \(N_d\in\N\) and some \(\epsilon_d>0\)
such that every tournament \(T\in\mathcal{C}\) on vertex set \([n]\) with \(n\geq N_d\)
contains a set \(X\subseteq V(T)\) of order less than \(n^{1-\epsilon_d}\) such that
every vertex of \(T\) has an \(n^{1-\epsilon_d}\)-near twin in \(X\).
Let \(f\colon V(T)\to X\) be a function that assigns to each vertex such an \(n^{1-\epsilon_d}\)-near twin.

Then the tournament \(T\) is fully determined by
\begin{enumerate}
\item the choice of the set \(X\subseteq [n]\),
\item the tournament \(T[X]\),
\item the function \(f\colon [n]\to X\),
\item for each vertex \(v\in V(T)\) the set \(N^+(v)\symdiff N^+(f(v))\) of size less than \(n^{1-\epsilon_d}\).
\end{enumerate}
Since the number of subsets of \([n]\) of size at most \(n^{1-\epsilon_d}\) is bounded by \(n^{n^{1-\epsilon_d}}\),
the number of different choices for the above data is at most
\[n^{n^{1-\epsilon_d}}\cdot
  2^{\binom{n^{1-\epsilon_d}}{2}}\cdot
  \left(n^{1-\epsilon_d}\right)^n\cdot
  \left(n^{n^{1-\epsilon_d}}\right)^n
=2^{O(n^{2-\epsilon_d}\log n)}.\]
By choosing \(\epsilon\in(0,\epsilon_d)\), this is bounded by
\(O(2^{n^{2-\epsilon}})\) and thus, there are only \(O(2^{n^{2-\epsilon}})\) labeled tournaments of order \(n\) in the class~\(\mathcal{C}\).
\end{proof} 
\end{document}